\pdfoutput=1
%%
%% This is file `sample-authordraft.tex',
%% generated with the docstrip utility.
%%
%% The original source files were:
%%
%% samples.dtx  (with options: `authordraft')
%% 
%% IMPORTANT NOTICE:
%% 
%% For the copyright see the source file.
%% 
%% Any modified versions of this file must be renamed
%% with new filenames distinct from sample-authordraft.tex.
%% 
%% For distribution of the original source see the terms
%% for copying and modification in the file samples.dtx.
%% 
%% This generated file may be distributed as long as the
%% original source files, as listed above, are part of the
%% same distribution. (The sources need not necessarily be
%% in the same archive or directory.)
%%
%% The first command in your LaTeX source must be the \documentclass command.
%\documentclass[sigconf,authordraft]{acmart}
\documentclass[sigconf,screen,nonacm]{acmart}

\usepackage[utf8]{inputenc}
\usepackage{tikz}
\usepackage[bibliography=common]{apxproof}
\usepackage{relsize}
\usepackage{stackengine}
\stackMath

%% \BibTeX command to typeset BibTeX logo in the docs
\AtBeginDocument{%
  \providecommand\BibTeX{{%
    \normalfont B\kern-0.5em{\scshape i\kern-0.25em b}\kern-0.8em\TeX}}}

%% Rights management information.  This information is sent to you
%% when you complete the rights form.  These commands have SAMPLE
%% values in them; it is your responsibility as an author to replace
%% the commands and values with those provided to you when you
%% complete the rights form.
%\setcopyright{acmcopyright}
%\copyrightyear{2018}
%\acmYear{2018}
%\acmDOI{10.1145/1122445.1122456}
%
%%% These commands are for a PROCEEDINGS abstract or paper.
%\acmConference[Woodstock '18]{Woodstock '18: ACM Symposium on Neural
%  Gaze Detection}{June 03--05, 2018}{Woodstock, NY}
%\acmBooktitle{Woodstock '18: ACM Symposium on Neural Gaze Detection,
%  June 03--05, 2018, Woodstock, NY}
%\acmPrice{15.00}
%\acmISBN{978-1-4503-9999-9/18/06}

%%
%% Submission ID.
%% Use this when submitting an article to a sponsored event. You'll
%% receive a unique submission ID from the organizers
%% of the event, and this ID should be used as the parameter to this command.
%%\acmSubmissionID{123-A56-BU3}

%%
%% The majority of ACM publications use numbered citations and
%% references.  The command \citestyle{authoryear} switches to the
%% "author year" style.
%%
%% If you are preparing content for an event
%% sponsored by ACM SIGGRAPH, you must use the "author year" style of
%% citations and references.
%% Uncommenting
%% the next command will enable that style.
%%\citestyle{acmauthoryear}

%%
%% end of the preamble, start of the body of the document source.

\newtheorem{openpb}{Open problem}
\newtheorem{conjecture}{Conjecture}
\newtheoremrep{theorem}{Theorem}[section]
\newtheoremrep{proposition}[theorem]{Proposition}
\newtheoremrep{lemma}[theorem]{Lemma}
\newtheoremrep{example}[theorem]{Example}

% The following Texcrap comes from https://tex.stackexchange.com/questions/6388/how-to-scale-a-tikzpicture-to-textwidth. I use it to scale the chainswapping figure to linewidth
\usepackage{environ}
\makeatletter
\newsavebox{\measure@tikzpicture}
\NewEnviron{scaletikzpicturetowidth}[1]{%
  \def\tikz@width{#1}%
  \def\tikzscale{1}\begin{lrbox}{\measure@tikzpicture}%
  \BODY
  \end{lrbox}%
  \pgfmathparse{#1/\wd\measure@tikzpicture}%
  \edef\tikzscale{\pgfmathresult}%
  \BODY
}
\makeatother
% End of crap

\definecolor{mygreen}{rgb}{0.09, 0.45, 0.27}
\renewcommand\leq\leqslant
\renewcommand\geq\geqslant

\newcommand{\cnf}{\mathrm{CNF}}
\newcommand{\dnf}{\mathrm{DNF}}
\newcommand{\pqe}{\mathrm{PQE}}
\newcommand\dep{\mathsf{DEP}}
\newcommand\eul{\mathsf{e}}
\newcommand\sat{\mathsf{SAT}}
\newcommand\VARS{\mathsf{Vars}}
\newcommand\false{\mathsf{False}}
\newcommand\true{\mathsf{True}}

\newcommand\lin{\mathsf{Lin}}
\newcommand\ddptime{\text{d-D}(\mathrm{PTIME})}
\newcommand\ddnnfptime{\text{d-DNNF}(\mathrm{PTIME})}
\newcommand\ddpsize{\text{d-D}(\mathrm{PSIZE})}

\newcommand\obddptime{\text{OBDD}(\mathrm{PTIME})}
\newcommand\obddpsize{\text{OBDD}(\mathrm{PSIZE})}
\newcommand\fbddpsize{\text{FBDD}(\mathrm{PSIZE})}
\newcommand\fbddptime{\text{FBDD}(\mathrm{PTIME})}

\newcommand\defeq{\stackrel{\mathrm{def}}{=}}
\renewcommand{\Pr}{\mathrm{Pr}}

\newcommand{\squig}{{\scriptstyle\sim\mkern-3.9mu}}

\newcommand{\rsquigend}{{\scriptstyle\rule{.1ex}{0ex}\rhd}}
\newcounter{sqindex}
\newcommand\squigs[1]{%
  \setcounter{sqindex}{0}%
  \whiledo {\value{sqindex}< #1}{\addtocounter{sqindex}{1}\squig}%
}
\newcommand\rewr[2]{%
  \mathbin{\stackon[2pt]{\squigs{#2}\rsquigend}{\scriptscriptstyle\text{#1\,}}}%
}

\hyphenation{OBDD}
\hyphenation{OBDDs}
\begin{document}

%%
%% The "title" command has an optional parameter,
%% allowing the author to define a "short title" to be used in page headers.
\title{Solving a Special Case of the Intensional vs Extensional Conjecture in Probabilistic Databases}

%%
%% The "author" command and its associated commands are used to define
%% the authors and their affiliations.
%% Of note is the shared affiliation of the first two authors, and the
%% "authornote" and "authornotemark" commands
%% used to denote shared contribution to the research.

\author{Mikaël Monet}
\affiliation{%
  \institution{DCC, University of Chile \& IMFD Chile}
  \city{Santiago}
  \country{Chile}}
\email{mikael.monet@imfd.cl}

%%
%% By default, the full list of authors will be used in the page
%% headers. Often, this list is too long, and will overlap
%% other information printed in the page headers. This command allows
%% the author to define a more concise list
%% of authors' names for this purpose.
%\renewcommand{\shortauthors}{M. Monet}

%%
%% The abstract is a short summary of the work to be presented in the
%% article.
\begin{abstract}
  We consider the problem of exact probabilistic inference for Union of Conjunctive Queries (UCQs) on tuple-independent databases. 
For this problem, two approaches currently coexist.
In the \emph{extensional method}, query evaluation is performed by exploiting the structure of the query, and relies heavily on the use of the inclusion--exclusion principle.
In the \emph{intensional method}, one first builds a representation of the \emph{lineage} of the query in a tractable formalism of knowledge compilation.
The chosen formalism should then ensure that the probability can be efficiently computed using simple disjointness and independence assumptions, without the need of performing inclusion--exclusion. 
The extensional approach has long been thought to be strictly more powerful than the intensional approach, the reason being that for some queries, the use of inclusion--exclusion seemed unavoidable.
In this paper we introduce a new technique to construct lineage representations as \emph{deterministic decomposable circuits} in polynomial time. 
We prove that this technique applies to a class of UCQs that had been conjectured to separate the complexity of the two approaches.
In essence, we show that relying on the inclusion--exclusion formula can be avoided by using negation.
This result brings back hope to prove that the intensional approach can handle all tractable~UCQs.

\end{abstract}

\keywords{Tuple-independent databases, knowledge compilation, deterministic decomposable Boolean circuits, inclusion--exclusion principle}

%% A "teaser" image appears between the author and affiliation
%% information and the body of the document, and typically spans the
%% page.
%\begin{teaserfigure}
%  \includegraphics[width=\textwidth]{sampleteaser}
%  \caption{Seattle Mariners at Spring Training, 2010.}
%  \Description{Enjoying the baseball game from the third-base
%  seats. Ichiro Suzuki preparing to bat.}
%  \label{fig:teaser}
%\end{teaserfigure}

%%
%% This command processes the author and affiliation and title
%% information and builds the first part of the formatted document.

\settopmatter{printfolios=true} % From https://tex.stackexchange.com/questions/358088/add-page-number-in-acm-2017-sigconf-template, to get page numbers
\maketitle

\section{Introduction}
  \label{sec:introduction}
  Probabilistic databases~\cite{suciu2011probabilistic} have been introduced in answer to the need to capture data uncertainty and reason about it. 
In their simplest and most common form, probabilistic databases
consist of a relational database in which each tuple is annotated with an independent probability value.
This value is supposed to represent how confident we are about having the tuple in the database. 
This is known as the \emph{tuple-independent database} (TID) model~\cite{fuhr1997probabilistic,zimanyi1997query}.
While a traditional database can only \emph{satisfy} or \emph{violate} a 
Boolean query, a probabilistic database has a certain probability of satisfying it.
Given a Boolean query $Q$, the \emph{probabilistic query evaluation problem for $Q$}
($\pqe(Q)$) then asks for the probability that the query holds on an input probabilistic database.
Here, the complexity of $\pqe(Q)$ is measured as a function of the input database, hence considering that the Boolean query $Q$ is fixed.
This is known as \emph{data complexity}~\cite{vardi1982complexity}, and is motivated by the fact that the queries are usually much smaller than the data.

When $Q$ is a union of conjunctive queries, a dichotomy result is provided by the seminal work of Dalvi and Suciu~\cite{dalvi2012dichotomy}: either $Q$ is \emph{safe}
and $\pqe(Q)$ is in polynomial time (PTIME), or $Q$ is not safe and $\pqe(Q)$ is \#P-hard. 
The algorithm to compute the probability of a safe UCQ exploits the first-order structure of the query
to find a so called \emph{safe extensional query plan}, using extended relational operators that can manipulate probabilities.
An unusual feature of this algorithm is the use of the inclusion--exclusion principle
(more precisely, of the \emph{Möbius inversion formula}), which allows for distinct hard sub-queries to cancel each other.
This approach is referred to as \emph{extensional query evaluation}, or \emph{lifted inference}~\cite{gribkoff2014lifted,poole2003first}.

A second approach to $\pqe$ for safe UCQs is \emph{intensional query evaluation} or \emph{grounded inference}, and consists of two steps. 
First, compute a representation of the \emph{lineage}~\cite{green2006models} (also called \emph{provenance}) of the
query $Q$ on the database $D$, which is a Boolean function intuitively representing which tuples of~$D$ suffice to satisfy~$Q$.
Second, perform weighted model counting on the lineage to obtain the probability.
This is, for instance, the approach taken by the recent probabilistic database management system ProvSQL~\cite{senellart2018provsql}.
To ensure that model counting is tractable, we use the structure of the query to represent the lineage in tractable formalisms from the field of knowledge compilation.
This includes read-once Boolean formulas~\cite{gurvich1977repetition}, free or ordered binary decision diagrams
(FBDDs~\cite{akers1978binary}, OBDDs~\cite{bryant1986graph,wegener2004bdds}), deterministic decomposable normal forms (d-DNNFs~\cite{darwiche2001tractability}),
decision decomposable normal forms (dec-DNNFs~\cite{huang2005dpll, huang2007language}), decomposable logic decision diagrams (DLDDs~\cite{beame2017exact}),
deterministic decomposable circuits (\mbox{d-Ds}~\cite{darwiche2001tractability,suciu2011probabilistic,monet2018towards}\footnote{The notation d-D was introduced in~\cite{monet2018towards}.}),
structured versions thereof~\cite{pipatsrisawat2008new}, etc.
These can all be seen as restricted classes of Boolean circuits, that \emph{by definition} allow for
efficient (in fact, linear) probability computation: the $\land$-gates are \emph{decomposable} (their inputs depend on disjoint sets of variables, hence represent independent probabilistic events) and the $\lor$-gates are \emph{deterministic} (their inputs represent disjoint probabilistic events).
There are many advantages of this approach compared to lifted inference.
First, and this is also true for non-probabilistic evaluation, the lineage can help explain the query answer~\cite{buneman2001why}.
Second, having the lineage in a good knowledge compilation formalism can be useful for various other applications:
we could for instance update the tuples' probabilities and compute the new result easily, or compute the most probable state of
the data that satisfies the query~\cite{darwiche2009modeling, sharma2018knowledge}, or enumerate satisfying states
with constant delay~\cite{amarilli2017circuit}, or produce random samples of satisfying states~\cite{sharma2018knowledge}, etc.
This ability to reuse the lineage for multiple tasks is precisely one of the main motivations of the field of knowledge compilation.

A natural question is then to ask if the intensional
approach is as powerful as the extensional one.
To answer it, a whole line of research started, whose goal is to determine exactly which queries can be handled by which formalisms of knowledge compilation.
It is known, for example, that the UCQs whose lineages have polynomial-sized read-once formulas representations are
exactly the \emph{hierarchical-read-once} UCQs~\cite{olteanu2008using, jha2013knowledge}, and that
those having polynomial-sized OBDDs are exactly the \emph{inversion-free} UCQs~\cite{jha2013knowledge}.
(This later result on OBDDs was extended to UCQs with disequalities atoms~\cite{jha2012tractability} and to self-join--free CQs with negations~\cite{fink2016dichotomies}.)
However, as far as we know, the characterization is open for all the other formalisms of knowledge compilation.
What we call the \emph{intensional--extensional conjecture}, formulated
in~\cite{jha2013knowledge, suciu2011probabilistic, dalvi2012dichotomy}, states that, for safe queries, extensional query evaluation
is strictly more powerful than the 
knowledge compilation approach.
Or, in other words, that there exists a UCQ which is safe (i.e., that can be handled by the extensional approach) whose lineages on arbitrary databases
cannot be computed in PTIME in a tractable knowledge compilation formalism (i.e., cannot be handled by the intensional approach). 
As we have already mentioned, the conjecture depends on which tractable formalism we consider.
However, generally speaking, the idea would be that knowledge compilation formalisms cannot simulate efficiently the Möbius inversion
formula used in the algorithm for safe queries.

The conjecture has recently been shown in~\cite{beame2017exact} to hold for the formalism of DLDDs (including OBDDs, FBDDs and dec-DNNFs),
which captures the execution of modern model counting algorithms by restricting the power of determinism. 
Another independent result \cite{bova2017circuit} shows that the conjecture also holds when we consider the class
of \emph{structured} d-DNNFs (d-SDNNFs, also including OBDDs), which are d-DNNFs that follow the structure of a \emph{v-tree}~\cite{pipatsrisawat2008new}.
However the question is still widely open for the most expressive formalisms, namely, d-DNNFs and \mbox{d-Ds}. 
Indeed, it could be the case that the conjecture does not hold for such expressive formalisms, which would imply that
we could explain the tractability of all safe UCQs via knowledge compilation, and thus enjoy all of its advantages.

In this paper we focus on a class of Boolean queries that have been intensively studied
already~\cite{dalvi2012dichotomy,jha2013knowledge,beame2017exact,bova2017circuit,monet2018towards,suciu2011probabilistic},
and that we name here the \emph{$\mathcal{H}$-queries}.
An $\mathcal{H}$-query can be defined as a Boolean combination of very simple CQs.
To ease notation, we can always denote an $\mathcal{H}$-query as $Q_\varphi$, where~$\varphi$ is a Boolean function whose variables correspond
to simple CQs (see Section~\ref{sec:review} for the exact definition).
Hence, when the Boolean function~$\varphi$ is monotone, $Q_\varphi$ is a UCQ, and we write $\mathcal{H}^+$ the set of $\mathcal{H}$-queries that are UCQs.
The safe $\mathcal{H}^+$-queries were conjectured in~\cite{suciu2011probabilistic,jha2013knowledge, dalvi2012dichotomy} to not have tractable lineage
representations in any good knowledge compilation formalism.
In fact, the $\mathcal{H}^+$-queries are precisely the ones that were used to prove the intensional--extensional conjecture for both
DLDDs~\cite{beame2017exact} and d-SDNNFs~\cite{bova2017circuit}, leaving little hope for the remaining formalisms.
In a sense, these queries are the simplest UCQs that illustrate the need of the Möbius inversion formula in Dalvi and Suciu's algorithm.
Hence, they are also the first serious obstacle to disproving the conjecture for the most expressive formalisms of knowledge compilation.

\paragraph{\bf Contributions, short version.}

Building on preliminary investigations~\cite{monet2018towards}, we develop a new technique to construct \mbox{d-Ds} in polynomial time for some of the $\mathcal{H}$-queries, and we prove that our technique applies to \emph{all the safe $\mathcal{H}^+$-queries}.
What is more, we also show that this technique applies to some $\mathcal{H}$-queries that are not UCQs (i.e., in $\mathcal{H} \setminus \mathcal{H}^+$), and we show \#P-hardness for a subset of those for which it does not work.
A picture of the situation can be found in Figure~\ref{fig:full-picture}.
Not illustrated in Figure~\ref{fig:full-picture}, we also provide a detailed analysis of when lineages for an $\mathcal{H}$-query $Q_\varphi$ can be transformed into lineages for another $\mathcal{H}$-query $Q_{\varphi'}$, with only a polynomial increase in size.

\paragraph{\bf Contributions, long version.}
We now explain these results and our methodology more exhaustively. 
The dichotomy theorem of Dalvi and Suciu implies that an $\mathcal{H}^+$-query $Q_\varphi$ is safe if
and only if the Möbius value of the \emph{CNF lattice} of~$\varphi$ is zero
(and for $Q_\varphi \in \mathcal{H} \setminus \mathcal{H}^+$ we do not know, because these are not UCQs).
Our starting point is to reformulate that criterion by showing that this value is equal to the \emph{Euler characteristic} of~$\varphi$. 
This connection seems to have been unnoticed so far in the probabilistic database literature.
Hence, we obtain that $Q_\varphi \in \mathcal{H}^+$ is safe if and only if the Euler characteristic of
$\varphi$ is zero.
While it is not clear how to define the CNF lattice for the $\mathcal{H}$-queries that are not UCQs, one advantage
of our characterization is that the Euler characteristic is defined for any Boolean function~$\varphi$.

We then use this new characterization to show that all $\mathcal{H}$-queries $Q_\varphi$ for which~$\varphi$ has zero Euler characteristic have d-D representation of their lineages constructible in polynomial time.
This can be considered as the main result of this article.
Its proof contains three ingredients:
\begin{itemize}
	\item The first one is a result on constructing OBDDs for restricted fragments of UCQs with negations~\cite{fink2016dichotomies}, that we use as a black box
		(but we explain the relevant parts in Appendix, for completeness).
	\item The second one is a notion of what we call a \emph{fragmentable Boolean function}, intuitively designed to ensure that $Q_\varphi$
		has \mbox{d-Ds} constructible in PTIME whenever~$\varphi$ is fragmentable, and that relies on the result of~\cite{fink2016dichotomies}.
	\item The third component is a certain transformation between Boolean functions, where~$\varphi$ can be transformed into
		$\varphi'$ by iteratively (1) removing from~$\varphi$ two connected (i.e., that differ in only one variable) satisfying
		valuations; or (2) adding to~$\varphi$ two connected valuations that did not satisfy~$\varphi$.
\end{itemize}
This transformation defines an equivalence relation $\simeq$ between all Boolean functions,
and we can show that if $\varphi \simeq \bot$ (the Boolean function that is always false), then~$\varphi$ is fragmentable.
We then show that when~$\varphi$ has zero Euler characteristic it is equivalent to~$\bot$, thus completing the proof.
We also justify the definition of our transformation by showing that using only (1) or only (2) is not enough.

Our next main result is to show that if~$\varphi$ and~$\varphi'$ have the same Euler characteristic, then (a) $\pqe(Q_\varphi)$ and $\pqe(Q_{\varphi'})$ are reducible to each other (under PTIME Turing reductions); (b) we can compute in polynomial time d-D representations of the lineage of~$Q_{\varphi}$ on arbitrary databases iff we can do the same for~$Q_{\varphi'}$;
and (c) the lineages of $Q_\varphi$ can be represented as \mbox{d-Ds} of polynomial size iff those of $Q_{\varphi'}$ can.
To show these equivalences, we prove that the equivalences classes of $\simeq$ correspond exactly to the different values of the Euler characteristic.
We think that this last combinatorial result can be of independent interest.
We then obtain the equivalences (a-b-c) by observing that our transformation always preserves the corresponding properties.
As a bonus, the first equivalence (a) also allows us to show \#P-hardness of some $\mathcal{H}$-queries that are not UCQs, thus slightly extending the dichotomy of~\cite{dalvi2012dichotomy} for the $\mathcal{H}$-queries.

\begin{figure}
\centering
\begin{tikzpicture}
	\draw[draw=black,very thick] (0,0) rectangle ++(\linewidth,3);%all H queries
	\draw[draw=black!50!green,very thick,dashed] (0.1,0.1) rectangle ++(3,2.8);%fragmentable = null euler charact \implies have d-Ds
	\draw[draw=black,very thick] (1.5,0.5) rectangle ++(4,2);%monotone H queries = the ones that are UCQs
	\draw[draw=black!50!green,very thick] (1.6,0.6) rectangle ++(1.4,1.8);%the safe H+ querries according to Dalvi Suciu
	\draw[draw=black!20!red,very thick,dashed] (3.2,0.1) rectangle ++(4,2.8);%nonzero euler characteristic achievable by monotone function; we show are hard
	\draw[draw=black!20!red,very thick] (3.3,0.6) rectangle ++(2.1,1.8);%the unsafe H+ queries according to dichotomy
	\draw[draw=black!50!blue,very thick] (1.9,0.2) rectangle ++(0.4,2.6);%the H queries that are inversion free = degenerate
	\draw[draw=black!20!gray,very thick,dotted] (7.3,0.1) rectangle ++(\linewidth-7.4cm,2.8);%euler characteristic not achievable by monotone function; we don't know
\end{tikzpicture}
	\caption{Picture of the situation. Outmost black rectangle: all $\mathcal{H}$-queries.
	Interior black horizontal rectangle: all $\mathcal{H}^+$-queries (UCQs).
	Solid green rectangle: all safe $\mathcal{H}^+$-queries according to the dichotomy of~\cite{dalvi2012dichotomy},
	i.e., whose CNF lattice	have zero Möbius value.
	Solid red rectangle: all unsafe $\mathcal{H}^+$-queries.
	Blue vertical rectangle: all the $\mathcal{H}$-queries having OBDDs constructible in PTIME
	(upper bound~\cite{fink2016dichotomies}, lower bound~\cite{beame2017exact}).
	These are the inversion-free $\mathcal{H}$-queries and correspond to what we call degenerate Boolean functions.
	Dashed green rectangle (left): all the $\mathcal{H}$-queries to which our technique can apply, implying that they have \mbox{d-Ds} constructible in PTIME.
	These correspond to fragmentable Boolean functions, or equivalently, to those having zero Euler characteristic.
	Dashed red rectangle (right): the $\mathcal{H}$-queries corresponding to Boolean functions having non-zero Euler characteristic and for which we could show \#P-hardness.
	Dotted gray rectangle: all the remaining $\mathcal{H}$-queries corresponding to Boolean functions having non-zero Euler characteristic. We conjecture these to be
	\#P-hard as well.
	Note that each individual region contains an infinite number of queries.
	}
\label{fig:full-picture}
\end{figure}
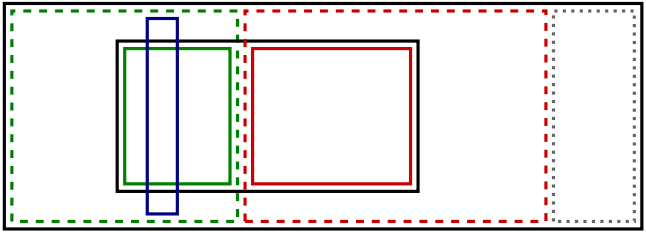

\paragraph{\bf The difference with~\cite{monet2018towards}}
In~\cite{monet2018towards}, we studied with Dan Olteanu the compilation of the $\mathcal{H}$-queries into d-DNNFs.
This preliminary work already contains the idea of covering the satisfying valuations of~$\varphi$ by removing
two adjacent satisfying valuations, through the notion of what we
called a \emph{nice Boolean function} (of which fragmentability is a generalization).
This article contained an experimental study that the proposed approach might work for the safe $\mathcal{H}^+$-queries, but had no proof.
We still do not know if this approach works for all safe $\mathcal{H}^+$-queries,
but we do know that it cannot work for all $\mathcal{H}$-queries with zero Euler characteristic (more on that in Section~\ref{sec:open}).
What was missing to get to a proof for the $\mathcal{H}$-queries is the reformulation with the Euler characteristic
(which allows one to understand the combinatorial meaning of
having a zero Möbius value of the CNF lattice), and the fact that we might also need to \emph{add} adjacent non-satisfying valuations.

\paragraph{\bf Paper structure}

To help the reader, we have depicted the structure of the paper and of the proofs as a DAG in Appendix~\ref{apx:proof-structure}.
We start in Section~\ref{sec:preliminaries} with short preliminaries.
In Section~\ref{sec:hk} we define the $\mathcal{H}$-queries, review what is known about them and reformulate the safety criterion
for the $\mathcal{H}^+$-queries in terms of the Euler characteristic.
We continue in Section~\ref{sec:fragmentability} by introducing our notion of fragmentable Boolean function and we prove
that if~$\varphi$ is fragmentable then $Q_\varphi$ has \mbox{d-Ds}.
In Section~\ref{sec:upper_bound} we present our transformation and prove that if~$\varphi$ has zero Euler characteristic then~$\varphi$ is fragmentable, which implies our main result.
We analyze in Section~\ref{sec:hardness} what happens in the case of a non-zero Euler characteristic.
Last, we expose in Section~\ref{sec:open} open questions, justify the definition of our transformation, and discuss the applicability of
our technique to a broader class of queries than the $\mathcal{H}$-queries.
We conclude in Section~\ref{sec:conclusion}.

\section{Preliminaries}
  \label{sec:preliminaries}
  \paragraph{\bf Boolean functions.}

A (Boolean) \emph{valuation} of a set $V$ is a subset of~$V$.
We write $2^V$ the set of all valuations of $V$.
For a Boolean valuation $\nu$ and variable $l$ of $V$, we write $\nu^{(l)} \defeq \begin{cases} 
	\nu \cup \{l\} & \mbox{if } l \notin \nu \\
	\nu \setminus \{l\} & \mbox{if } l \in \nu \end{cases}$, i.e., $\nu^{(l)}$ is $\nu$ except that the membership of $l$ has been flipped.
A \emph{Boolean function}~$\varphi$ on $V$ is a mapping
$\varphi:2^V\to\{\false,\true\}$ that associates
to each valuation $\nu$ a value~$\varphi(\nu)$ in $\{\false, \true\}$.
A valuation $\nu$ is \emph{satisfying} when $\varphi(\nu) = \true$, also written \mbox{$\nu \models \varphi$}.
We write $\sat(\varphi)$ the set of satisfying valuations of~$\varphi$, and $\#\varphi$ the number of satisfying valuations of~$\varphi$.
We write~$\bot$ (resp., $\top$) the Boolean function that maps every valuation to~$\false$ (resp., $\true$).
A Boolean function~$\varphi$ is \emph{monotone} if $\varphi(\nu) \implies \varphi(\nu')$ for every valuations $\nu, \nu'$ such that $\nu \subseteq \nu'$. 
A monotone Boolean function can always be represented as a DNF of the form~$C_0 \lor \ldots \lor C_n$,
where we see each clause simply as the set of variables that it contains. Moreover, there is a unique \emph{minimized} DNF representing~$\varphi$, where no clause is a subset of another; we denote it by~$\varphi_\dnf$. We will similarly consider
the unique minimized CNF representation~$\varphi_\cnf$ of a monotone Boolean function~$\varphi$.\footnote{Note that the uniqueness of~$\varphi_\cnf$ is less obvious than for~$\varphi_\dnf$, see for instance~\cite{cstheory_unicity_cnfs}.}
Two Boolean functions $\varphi,\varphi'$ are \emph{disjoint} when $\varphi \land \varphi' = \bot$, or in other words, when $\sat(\varphi \land \varphi') = \emptyset$.

\begin{definition}
\label{def:degenerate}
	Let~$\varphi$ be a Boolean function on $V$.
	We say that~$\varphi$ \emph{depends on variable $l \in V$} when there exists a valuation $\nu \subseteq V$
	such that $\varphi(\nu) \neq \varphi(\nu^{(l)})$.
	We write $\dep(\varphi) \subseteq V$ for the set of variables on which~$\varphi$ depends.
	We call~$\varphi$ \emph{nondegenerate} if $\dep(\varphi) = V$, and \emph{degenerate} otherwise.
\end{definition}

\begin{definition}[{See~\cite{roune2013complexity,stanley2011enumerative}}]
\label{def:euler}
	Let~$\varphi$ be a Boolean function. The \emph{Euler characteristic of~$\varphi$}, denoted $\eul(\varphi)$, is $\sum_{\nu \models \varphi} (-1)^{|\nu|}$.
\end{definition}

\paragraph{\bf Tuple-independent databases.}

A \emph{tuple-independent (TID)\linebreak
database} is a pair $(D,\pi)$ consisting of a relational instance $D$ and a function $\pi$
mapping each tuple $t \in D$ to a rational probability $\pi(t) \in [0;1]$.
A TID instance $(D, \pi)$ defines a probability distribution~$\Pr$ on $D' \subseteq D$, where
$\Pr(D') \defeq \prod_{t \in D'} \pi(t) \times \prod_{t \in D \backslash D'} (1 - \pi(t))$.
Given a Boolean query $Q$, the \emph{probabilistic query evaluation problem for $Q$} ($\pqe(Q)$) asks, given as input a
TID instance $(D, \pi)$, the probability that $Q$ is satisfied in the distribution $\Pr$.
That is, formally, $\Pr(Q, (D, \pi)) \defeq \sum_{D' \subseteq D\text{~s.t.~}D' \models Q} \Pr(D')$. 
Dalvi and Suciu~\cite{dalvi2012dichotomy} have shown a dichotomy result on UCQs for $\pqe$: either $Q$ is \emph{safe}
and $\pqe(Q)$ is PTIME, or $Q$ is not safe and $\pqe(Q)$ is \#P-hard. 
Due to space constraints, we must refer to~\cite{dalvi2012dichotomy} for a presentation of their algorithm to compute the probability
of a safe query, though it is not necessary to understand the current paper: we will reproduce the relevant parts in the next section.
We write $\pqe(Q) \leq_{\mathrm{T}} \pqe(Q')$ when $\pqe(Q)$ reduces in PTIME (under Turing reductions) to $\pqe(Q')$.
We write $\pqe(Q) \equiv_{\mathrm{T}} \pqe(Q')$ when we have both $\pqe(Q) \leq_{\mathrm{T}} \pqe(Q')$ and $\pqe(Q') \leq_{\mathrm{T}} \pqe(Q)$.
\paragraph{\bf Lineages.}

The \emph{lineage} of a Boolean query $Q$ over~$D$ is a Boolean
function $\lin(Q, D)$ on the tuples of~$D$
mapping each sub-database $D' \subseteq D$ of $D$
to~$\true$ or~$\false$ depending on whether $D'$ satisfies~$Q$ or not.
It is clear that, by definition, for any probability mapping $\pi:D \to [0;1]$ we have $\Pr(Q, (D, \pi)) = \Pr(\lin(Q,D),\pi)$, where the later is defined just 
as expected (see Definition~\ref{def:prob-function} in Appendix~\ref{apx:mobius-euler} for the formal definition).
This connection was first observed in~\cite{green2006models}.

\paragraph{\bf Knowledge compilation formalisms.}
The lineage, being a Boolean function, can be represented with any formalism that represents Boolean functions (Boolean formulas, BDDs, Boolean circuits, etc).
In the intensional query evaluation context, the crucial idea is
to use a formalism that allows tractable probability computation.
In this work we will specifically focus on \emph{deterministic decomposable circuits} (\mbox{d-Ds}) and 
\emph{deterministic decomposable normal forms}~\cite{darwiche2001tractability} (d-DNNFs).
Let $C$ be a Boolean circuit (featuring $\land$, $\lor$, $\lnot$, and variable gates). 
An $\land$-gate $g$ of $C$ is \emph{decomposable} if for every two input gates
$g_1\neq g_2$ of $g$ we have $\VARS(g_1) \cap \VARS(g_2) = \emptyset$, where $\VARS(g)$ denotes the set of variable gates that have a directed path to~$g$ in~$C$.
We call $C$ \emph{decomposable} if each $\land$-gate is.
An $\lor$-gate $g$ of $C$ is \emph{deterministic} if
for every pair $g_1\neq g_2$ of input gates of~$g$, the Boolean functions captured by $g_1$ and $g_2$ are disjoint.
We call $C$ \emph{deterministic} if each $\lor$-gate is.
A \emph{negation normal form} (NNF) is a circuit in which the inputs of $\lnot$-gates are always variable gates.
Probability computation is in linear time for \mbox{d-Ds} (hence, for d-DNNFs):
to compute the probability of a d-D, compute by a bottom-up pass the probability of each gate,
where $\land$ gates are evaluated using $\times$, $\lor$ gates using $+$, and $\lnot$ gates using $1-x$.

For a formalism $\mathcal{C}$ of representation of Boolean functions, we denote by $\mathcal{C}(\mathrm{PTIME})$ the class of
all Boolean queries~$Q$ such that, given as input an arbitrary database~$D$, we can compute in polynomial time (in data complexity) an element of~$\mathcal{C}$ representing~$\lin(Q,D)$.
Similarly, we write $\mathcal{C}(\mathrm{PSIZE})$ for the class of Boolean queries~$Q$ such that, for any database~$D$, there exists an element of~$\mathcal{C}$ representing~$\lin(Q,D)$ whose size is polynomial in that of~$D$ (but this element is not necessarily computable in PTIME).

\paragraph{\bf The Möbius function.}
The characterization of the safe $\mathcal{H}^+$-queries  by~\cite{dalvi2012dichotomy} uses the Möbius function of a poset, which we define here.
Let $P = (A,\leq)$ be a finite poset.
The Möbius function~\cite{stanley2011enumerative} $\mu_P: A \times A \to \mathbb{Z}$ of $P$ is defined on pairs $(u,v)$ with $u \leq v$ by $\mu_P(u,u) \defeq 1$ and 
$\mu_P(u,v) \defeq - \sum\limits_{u < w \leq v} \mu_P(w,v)$ for $u < v$. It is used to express the \emph{Möbius inversion formula},
which is a generalization for posets of the inclusion--exclusion principle (see Proposition~\ref{prp:mobius-inversion} in Appendix~\ref{apx:mobius-euler}).

\section{The $\mathcal{H}$-queries}
  \label{sec:hk}
  In this section we define the $\mathcal{H}$-queries and review what is known about them, before reformulating the safety criterion for $\mathcal{H}^+$-queries.

\subsection{Reviewing what is known}
\label{sec:review}

The building blocks of these queries are the conjunctive queries $h_{k,i}$, which were first defined in the work of Dalvi and Suciu~\cite{dalvi2012dichotomy} to show the hardness of UCQs that are not safe:

\begin{definition}
\label{def:hk}
	Let $k \in \mathbb{N}_{\geq 1}$, and let~$R,S_1,\ldots,S_k,T$ be pairwise distinct relational predicates, with~$R$ and~$T$ being unary, and~$S_i$ for~$0 \leq i \leq k$ being binary. The queries $h_{k,i}$ for $0 \leq i \leq k$ are defined by:
\begin{itemize}
	\item $h_{k,0} = \exists x \exists y~R(x) \land S_1(x,y)$;
	\item $h_{k,i} = \exists x \exists y~S_i(x,y) \land S_{i+1}(x,y)$ for $1 \leq i < k$;
	\item $h_{k,k} = \exists x \exists y~S_k(x,y) \land T(y)$.
\end{itemize}
\end{definition}

As in~\cite{beame2017exact}, we define the $\mathcal{H}_k$-queries to be Boolean combinations of the queries $h_{k,i}$:

\begin{definition}
	Let $k \in \mathbb{N}_{\geq 1}$ and~$\varphi$ be a Boolean function on variables
	$V=\{0,\ldots,k\}$.
	We define the query $Q_\varphi$ to be the query represented by the first order formula
	$\varphi[0 \mapsto h_{k,0}, \ldots, k \mapsto h_{k,k}]$, i.e.,~$\varphi$ where we substituted each variable $i \in V$ by the formula $h_{k,i}$.
\end{definition}

The query class $\mathcal{H}_k$ (resp., $\mathcal{H}^+_k$) is then the set of queries $Q_\varphi$ when~$\varphi$ ranges over all 
Boolean functions (resp., monotone Boolean functions) on variables $\{0,\ldots,k\}$.
We finally define $\mathcal{H}$ (resp., $\mathcal{H}^+$) to be $\bigcup\limits_{k=1}^\infty \mathcal{H}_k$ (resp., $\bigcup\limits_{k=1}^\infty \mathcal{H}^+_k$).
Observe that the queries in $\mathcal{H}^+$ are in particular (equivalent to some) UCQs.

\begin{example}
\label{expl:q9}
	Let $k=3$, and $\varphi_9$ be the monotone Boolean function $(2 \vee 3) \wedge (0 \vee 3) \wedge (1 \vee 3) \wedge (0 \vee 1 \vee 2)$. Then $Q_{\varphi_9}$ 
	represents the query $(h_{32} \vee h_{33}) \wedge (h_{30} \vee h_{33}) \wedge (h_{31} \vee h_{33}) \wedge (h_{30} \vee h_{31} \vee h_{32}) \in \mathcal{H}^+_3$.
	This query was introduced in~\cite{dalvi2012dichotomy}, where it is called $q_9$.
	It is the simplest safe $\mathcal{H}^+$-query
	for which Dalvi and Suciu's algorithm requires the Möbius inversion formula.
\end{example}

At this point, it is important to warn the reader not to confuse the function~$\varphi$, whose purpose is to define $Q_\varphi$, and the lineage of $Q_\varphi$ on a database.
While they are both Boolean functions, the similarity stops here: we use~$\varphi$ simply to alleviate the notations.
Also, from now on and until the end, we fix $k\in \mathbb{N}_{\geq 1}$ and $V=\{0,\ldots,k\}$, and we will never use the symbols $k$ and $V$ for anything else.

To study the $\mathcal{H}_k$-queries, let us first listen to what the dichotomy of~\cite{dalvi2012dichotomy} has to say about them. 
We shall temporarily restrict our attention to monotone functions, i.e., to queries in $\mathcal{H}_k^+$, because the dichotomy theorem applies only to UCQs.
We need to define the \emph{CNF lattice} of~$\varphi$:

\begin{definition}
\label{def:lattice}
	Let~$\varphi$ be a monotone Boolean function represented by its (unique) minimized CNF
	$\varphi_\cnf = C_0 \land \ldots \land C_n$ (remember that we see each clause simply as the set of variables that it contains). 
	For~$\mathbf{s} \subseteq \{0,\ldots,n\}$, we define $d_\mathbf{s} \defeq \bigcup\limits_{i \in \mathbf{s}} C_i$. 
	Note that $d_\emptyset$ is $\emptyset$, and that it is possible to have $d_\mathbf{s} = d_\mathbf{s'}$ for
	$\mathbf{s} \neq \mathbf{s'}$.
	We define the poset $L_\cnf^\varphi = (A,\leq)$, where 
	$A$ is $\{ d_\mathbf{s} \mid \mathbf{s} \subseteq \{0,\ldots,n\} \}$, and where $\leq$ is reversed set inclusion\footnote{We could have defined it with $\leq$ being set inclusion, this is just a matter of taste.
	We chose to use reversed set inclusion to fit with related work.}.
	One can check that $L_\cnf^\varphi$ is a (finite) lattice, but this fact is not important for this paper.
	In particular, $L_\cnf^\varphi$ has a greatest element $\hat{1}$, which is~$\emptyset$, and has a least element~$\hat{0}$, which is~$\dep(\varphi)$.
\end{definition}

The dichotomy theorem then tells us the following:

\begin{proposition}[({\cite{dalvi2012dichotomy}, see also~\cite[Section 3.3]{beame2017exact}})]
\label{prp:complexity_H-queries}
	Let~$\varphi$ be monotone. If~$\varphi$ is degenerate then $\pqe(Q_\varphi)$ is PTIME.
	If~$\varphi$ is nondegenerate, let $L_\cnf^{\varphi}$ be the CNF lattice of~$\varphi$, and $\mu_\cnf$ be the Möbius function on $L_\cnf^{\varphi}$.
	Then, if $\mu_\cnf(\hat{0},\hat{1})=0$ then $\pqe(Q_\varphi)$ is PTIME, otherwise it is \#P-hard.
\end{proposition}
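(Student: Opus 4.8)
The statement is a specialization of the Dalvi--Suciu dichotomy~\cite{dalvi2012dichotomy} to the subclass~$\mathcal{H}^+_k$, so the plan is to invoke that dichotomy and make explicit how its general safety criterion collapses, on these queries, to the Möbius value $\mu_\cnf(\hat{0},\hat{1})$. There are three cases: (i)~$\varphi$ degenerate; (ii)~$\varphi$ nondegenerate with $\mu_\cnf(\hat{0},\hat{1})=0$; (iii)~$\varphi$ nondegenerate with $\mu_\cnf(\hat{0},\hat{1})\neq 0$. For case~(i), if $\varphi$ does not depend on some $i^\star\in\{0,\ldots,k\}$, then $Q_\varphi$ uses no atom $h_{k,i^\star}$, which disconnects the chain $R,S_1,\ldots,S_k,T$ at position~$i^\star$: the remaining atoms split into two groups over pairwise disjoint relational symbols. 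One argues that $Q_\varphi$ is then inversion-free in the sense of~\cite{jha2013knowledge}, hence has polynomial-size OBDDs and $\pqe(Q_\varphi)$ is in PTIME; alternatively one inducts on~$k$, since cutting the chain splits the probability computation into two independent sub-instances over shorter chains whose results are multiplied. Either way this case is routine.

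For case~(ii), run the Dalvi--Suciu algorithm on $Q_\varphi$. Expanding $\Pr(Q_\varphi,D)$ by inclusion--exclusion over the clauses of $\varphi_\cnf = C_0\wedge\cdots\wedge C_n$ and grouping the terms that share the same join $d_\mathbf{s}$, the Möbius inversion formula on $L_\cnf^\varphi$ (Proposition~\ref{prp:mobius-inversion}) rewrites the result as an identity of the shape $\Pr(Q_\varphi,D)=\sum_{d\in A} c_d\,\Pr(Q_d,D)$, where $Q_d$ denotes the conjunctive query $\bigwedge_{i\in d} h_{k,i}$ and the coefficient $c_d$ is determined by $\mu_\cnf$; in particular $c_{\hat{0}}=\pm\,\mu_\cnf(\hat{0},\hat{1})$ for $\hat{0}=\dep(\varphi)$. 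Every $Q_d$ with $d\subsetneq\dep(\varphi)$ omits some atom, hence is of the degenerate type of case~(i) (it is a conjunction of two queries over disjoint sub-chains), so $\Pr(Q_d,D)$ is PTIME-computable. The only potentially hard term is the full chain $Q_{\hat{0}}=\bigwedge_{i\in\dep(\varphi)} h_{k,i}$; but its coefficient $\pm\mu_\cnf(\hat{0},\hat{1})$ is~$0$, so it drops out and $\Pr(Q_\varphi,D)$ is computable in polynomial time.

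For case~(iii), the full-chain conjunction $Q_{\hat{0}}=\bigwedge_{i\in\dep(\varphi)} h_{k,i}$ is \#P-hard --- this is Dalvi and Suciu's canonical hard query for chains~\cite{dalvi2012dichotomy}. From the same identity $\Pr(Q_\varphi,D)=\pm\,\mu_\cnf(\hat{0},\hat{1})\,\Pr(Q_{\hat{0}},D)+(\text{a PTIME-computable remainder depending on }D)$ together with $\mu_\cnf(\hat{0},\hat{1})\neq 0$, we can solve for $\Pr(Q_{\hat{0}},D)$ using a single oracle call to $\pqe(Q_\varphi)$ per database~$D$, yielding a Turing reduction $\pqe(Q_{\hat{0}})\leq_{\mathrm{T}}\pqe(Q_\varphi)$ and hence \#P-hardness of $\pqe(Q_\varphi)$. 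I expect the main obstacle to be making the Möbius-coefficient identity of cases~(ii)--(iii) fully precise: one must verify that carrying out the UCQ-to-conjunctions inclusion--exclusion and regrouping really does produce coefficients governed by the Möbius function of $L_\cnf^\varphi$ with $c_{\hat{0}}=\pm\mu_\cnf(\hat{0},\hat{1})$, confirm that every proper sub-conjunction $Q_d$ is safe, and import the (nontrivial) \#P-hardness of the full chain. All of this is implicit in~\cite{dalvi2012dichotomy} and worked out for exactly these queries in~\cite[Section~3.3]{beame2017exact}, so in the write-up I would cite those sources and only sketch the specialization.
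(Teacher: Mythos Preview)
The paper does not prove this proposition; it imports it wholesale from~\cite{dalvi2012dichotomy} and~\cite[Section~3.3]{beame2017exact} and merely illustrates it with Example~\ref{expl:lattice}. So there is no in-paper argument to compare against, and your plan of sketching the specialization while deferring the heavy lifting to those references is exactly what the paper does.

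Your sketch has the right architecture, but there is a concrete slip in cases~(ii) and~(iii). When you expand $Q_\varphi = \bigwedge_j \big(\bigvee_{i\in C_j} h_{k,i}\big)$ by inclusion--exclusion over the CNF clauses and regroup by $d_\mathbf{s}=\bigcup_{j\in\mathbf{s}} C_j$, the sub-queries attached to the lattice elements are the \emph{disjunctions} $Q_d = \bigvee_{i\in d} h_{k,i}$, not the conjunctions $\bigwedge_{i\in d} h_{k,i}$ that you wrote. Concretely, $\Pr(\lnot Q_\varphi)=\Pr\big(\bigvee_j \lnot(\bigvee_{i\in C_j} h_{k,i})\big)$, and unfolding gives terms of the form $1-\Pr\big(\bigvee_{i\in d_\mathbf{s}} h_{k,i}\big)$. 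In particular the bottom element $\hat{0}=V$ corresponds to $H_k \defeq \bigvee_{i=0}^{k} h_{k,i}$, which is Dalvi and Suciu's canonical hard chain query --- not the conjunction. With that correction your argument goes through unchanged: every proper $Q_d$ with $d\subsetneq V$ breaks at a missing index into two disjunctions over disjoint relation sets and is safe, $H_k$ is \#P-hard, and its coefficient is (up to sign) $\mu_\cnf(\hat{0},\hat{1})$.

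The conjunction-based expansion you actually wrote down is the one coming from the \emph{DNF} clauses, and it is governed by the Möbius function of $L_\dnf^\varphi$, not $L_\cnf^\varphi$. That is a legitimate alternative route, but then the full-term coefficient is $\mu_\dnf(\hat{0},\hat{1})$, and you would still owe the reader the equivalence $\mu_\cnf(\hat{0},\hat{1})=0 \Leftrightarrow \mu_\dnf(\hat{0},\hat{1})=0$ --- which is precisely what the paper proves later as Lemma~\ref{lem:big-coeff}. So either fix $Q_d$ to be a disjunction, or keep the conjunctions and switch the lattice; as written the two halves do not match.
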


\begin{example}
\label{expl:lattice}
	Consider $\varphi_9$ from Example~\ref{expl:q9}. Clearly $\varphi_9$ is nondegenerate.
	The Hasse diagram of its CNF lattice is shown in Figure~\ref{fig:lattice}, with the value $\mu_\cnf(n,\hat{1})$ for each node $n$.
	We have $\mu_\cnf(\hat{0},\hat{1})=\mu_\cnf(\{0,1,2,3\},\emptyset)=0$, hence $\pqe(Q_{\varphi_9})$ is PTIME.
\end{example}

The solid green and red rectangles in Figure~\ref{fig:full-picture} represent the tractable and \#P-hard $\mathcal{H}^+$-queries.

Having reviewed the tractability of the $\mathcal{H}^+$-queries, we now do the same concerning the compilation of $\mathcal{H}$-queries to knowledge compilation formalisms.
First, when~$\varphi$ is degenerate, then $Q_\varphi \in \obddptime$ (hence $Q_\varphi \in \ddptime$):

\begin{toappendix}
	\subsection{\mbox{d-Ds} for degenerate $\mathcal{H}$-queries}
	\label{apx:fink2016dichotomies}
	In this section we explain how to build \mbox{d-Ds} in polynomial time for $\mathcal{H}$-queries $Q_\varphi$ such that~$\varphi$ is degenerate.
	Remember that we have:
\end{toappendix}

\begin{propositionrep}[(Implied by Lemma 3.8 of~\cite{fink2016dichotomies})]
\label{prp:if-degenerate-then-obddptime}
	If~$\varphi$ (not necessarily monotone) is degenerate, then
	$Q_\varphi \in \obddptime$.
\end{propositionrep}
\begin{toappendix}
	We only explain how it works for \mbox{d-Ds}, since this is what we need (but we will still use OBDDs in the proof, and we assume here that the reader is familiar with
their definition).
For a valuation $\nu \subseteq V$, we write $\varphi_\nu$ the Boolean function defined by $\sat(\varphi_\nu) \defeq \{\nu\}$.
Since~$\varphi$ is degenerate, let $l \in V$ be a variable upon which it does not depend. We can then write~$\varphi$ as
$\varphi = \bigvee_{\nu \models \varphi, l \notin \nu} (\varphi_\nu \lor \varphi_{\nu^{(l)}})$.
The outermost~$\lor$ being disjoint, we only need to explain how to build \mbox{d-Ds} for the queries $Q_{\varphi_\nu \lor \varphi_{\nu^{(l)}}}$.
The query $Q_{\varphi_\nu \lor \varphi_{\nu^{(l)}}}$ can be written as 
$Q^L \land Q^R$, where $Q^L$ is
$\bigwedge_{i \in P_L} h_{k,i} \land \bigwedge_{i \in N_L} \lnot h_{k,i}$ for some
subsets $P_L,N_L \subseteq \{0,\ldots,l-1\}$ with $P_L \cap N_L = \emptyset$, and similarly
$Q^R$ is
$\bigwedge_{i \in P_R} h_{k,i} \land \bigwedge_{i \in N_R} \lnot h_{k,i}$ for some
subsets $P_R,N_R \subseteq \{l+1,\ldots,k\}$ with $P_R \cap N_R = \emptyset$.
Since the Boolean queries $Q^L$ and $Q^R$ do not share any relational predicate, by decomposability it is enough to show how to build \mbox{d-Ds} for these two, separately.
We only show it for $Q^L$ since $Q^R$ works similarly.
To do this, we will build OBDDs $O_i$ for the queries $h_{k,i}$, $i \in \{0,\ldots,l-1\}$, under the same variable order $\Pi_L$ (recall that the variables of the lineage
are the tuples of the database).
This will be enough, because we can then use standard techniques~\cite{wegener2004bdds} to combine these OBDDs and obtain an OBDD for $Q^L$.
Let $\{a_1,\ldots,a_n\}$ be the domain of the database $D$.
The variable order $\Pi_L$ that we consider is $\Pi_L \defeq \Pi_L(1) \ldots \Pi_L(n)$, where
$\Pi_L(i) \defeq R(a_i), S_1(a_i,a_1), \ldots, S_{l-1}(a_i,a_1), S_1(a_i,a_2), \ldots, S_{l-1}(a_i,a_2), \ldots, S_1(a_i,a_n), \ldots, S_{l-1}(a_i,a_n)$.
Under this variable order, it is easy to see that we can build in polynomial time OBDDs $O_i$ for the queries $h_{k,i}$, $i \in \{0,\ldots,l-1\}$, concluding the proof.

\end{toappendix}
In this work we will only need the fact that $Q_\varphi \in \ddptime$ when~$\varphi$ is degenerate.
Since our results depend on Proposition~\ref{prp:if-degenerate-then-obddptime} and in order to be self-contained,
we reproduce its proof in Appendix~\ref{apx:fink2016dichotomies}, but it can be safely skipped: the reader can see it as a black box.

Proposition~\ref{prp:if-degenerate-then-obddptime} is in sharp contrast to when~$\varphi$ is nondegenerate.
Indeed in that case
 the authors of \cite{beame2017exact} show an exponential lower bound on the size of what they call \emph{Decomposable Logic Decision Diagrams} (DLDDs) for $Q_\varphi$.
A DLDD is a d-D in which the determinism of $\lor$-gates
is restricted to simply choosing the value of a variable.
That is, each $\lor$-gate is of the form $(v \land g) \lor (\lnot v \land g')$ for some variable~$v$. 
An OBDD being in particular a DLDD, we have represented in Figure~\ref{fig:full-picture} \emph{all} the $\mathcal{H}$-queries in $\obddptime$ by the vertical blue rectangle (and we even know that those outside this rectangle are not even in $\obddpsize$, thanks to this lower bound). 

When~$\varphi$ is monotone and nondegenerate, another independent exponential lower bound~\cite{bova2017circuit} tells us than we cannot impose \emph{structuredness} either (i.e., use d-SDNNFs).
These two lower bounds mean that to build tractable lineages for the $\mathcal{H}$-queries, one cannot restrict the expressivity of determinism too much, nor restrict the structure in which the variables appear.
One question is then: do the safe nondegenerate $\mathcal{H}^+$-queries have polynomial sized (and computable in PTIME?) d-DNNFs (or \mbox{d-Ds})?

\begin{figure}
\centering
	\begin{scaletikzpicturetowidth}{\linewidth}
	\begin{tikzpicture}[scale=\tikzscale]
	\usetikzlibrary{shapes}
	\tikzset{nodestyle/.style={draw,rectangle}}
	\input{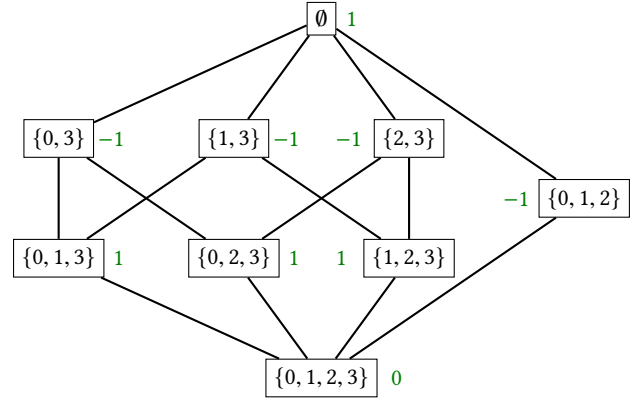}
	\end{tikzpicture}
	\end{scaletikzpicturetowidth}
	\caption{Hasse diagram of $L_\cnf^{\varphi_9}$. The green value besides each node $n$ is $\mu_\cnf(n,\emptyset)$,
	and can be computed top-down following the definition.}
\label{fig:lattice}
\end{figure}

\subsection{Reformulation of safety for the $\mathcal{H}^+$-queries}
\label{sec:mobius-euler}

With this section starts the presentation of our technical contributions.
Here, we reformulate the safety criterion of~\cite{dalvi2012dichotomy} in terms of the Euler characteristic (recall Definition~\ref{def:euler}).
As a by-product, we also show that for the $\mathcal{H}^+$-queries, using the CNF lattice or the \emph{DNF lattice}\footnote{Defined similarly as the CNF lattice but by starting from the DNF minimized expression of~$\varphi$ instead of the CNF expression.} is equivalent (that question was left open in~\cite{monet2018towards}). We show the following (remember that $k\in\mathbb{N}_{\geq 1}$ and $V=\{0,\ldots,k\}$ are fixed):

\begin{toappendix}
	\subsection{Equivalence Möbius--Euler}
	\label{apx:mobius-euler}
	In this section we prove Lemma~\ref{lem:big-coeff}.
	We start by recalling its statement (remember that $k \in \mathbb{N}_{\geq 1}$ and $V=\{0,\ldots,k\}$ are fixed):
\end{toappendix}

\begin{lemmarep}
\label{lem:big-coeff}
	Let~$\varphi$ be a nondegenerate monotone Boolean function on~$V$. Then we have
	$\eul(\varphi) = \mu_\cnf(\hat{0},\hat{1}) = (-1)^k \mu_\dnf(\hat{0},\hat{1})$,
	where $\mu_\cnf$ (resp., $\mu_\dnf$) is the Möbius function of $L^\varphi_\cnf$ (resp., $L^\varphi_\dnf$).
\end{lemmarep}

\begin{proofsketch}
	The idea is to use Möbius's inversion formula on the CNF and DNF lattices to obtain three different expressions of a variant of
	the characteristic polynomial of~$\varphi$~\cite{nisan1994degree}, and then to observe that
	the leading coefficients are equal to the targeted terms.
	The full proof can be found in Appendix~\ref{apx:mobius-euler}.
\end{proofsketch}

\begin{toappendix}
	The proof contains three ingredients. The first one is the Möbius inversion formula:

\begin{proposition}[({See \cite[Proposition 3.7.1]{stanley2011enumerative}})]
\label{prp:mobius-inversion}
	Let $P$ be a finite poset, and let $f,g: P \to \mathbb{R}$. Then the following are equivalent:
	\begin{itemize}
		\item $g(x) = \sum_{u \leq x} f(u)$ for all $x \in P$;
		\item $f(x) = \sum_{u \leq x} \mu(u,x) g(u)$ for all $x \in P$.
	\end{itemize}
\end{proposition}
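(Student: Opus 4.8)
The plan is to derive both implications from a single algebraic fact: in the convolution (incidence) algebra of $P$, the Möbius function is the two-sided inverse of the constant function $1$. Concretely, I would first read off from the recursive definition the \emph{orthogonality relation}
\[
  \sum_{u \leq w \leq x} \mu(w,x) \;=\; [\,u = x\,] \qquad\text{for all } u \leq x ,
\]
where $[\,u=x\,]$ is $1$ if $u=x$ and $0$ otherwise. This is immediate: for $u = x$ the left-hand side is just $\mu(x,x) = 1$, and for $u < x$ it reads $\mu(u,x) + \sum_{u < w \leq x}\mu(w,x) = 0$, which is precisely the defining recursion $\mu(u,x) = -\sum_{u<w\leq x}\mu(w,x)$.

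With this in hand, the implication from the first statement to the second is a single interchange of summation order. Assuming $g(x) = \sum_{u \leq x} f(u)$ for all $x$, I would compute
\[
  \sum_{u \leq x}\mu(u,x)\,g(u)
  = \sum_{u\leq x}\mu(u,x)\sum_{w\leq u}f(w)
  = \sum_{w\leq x}f(w)\sum_{w\leq u\leq x}\mu(u,x)
  = \sum_{w\leq x}f(w)\,[\,w=x\,]
  = f(x),
\]
the swap being valid because the double sum ranges over all pairs with $w \leq u \leq x$, and the inner sum collapses by the orthogonality relation above (applied with the roles of the two free elements exchanged). The reverse implication has the same shape, but when I substitute $f(u) = \sum_{w \leq u}\mu(w,u)g(w)$ into $\sum_{u \leq x} f(u)$ and swap, the inner sum that appears is $\sum_{w \leq u \leq x}\mu(w,u)$, in which the \emph{first} argument of $\mu$ is held fixed. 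I expect this to be the only genuine obstacle, since the recursion as stated yields only the relation with the \emph{second} argument fixed; a ``dual'' orthogonality relation $\sum_{w \leq u \leq x}\mu(w,u) = [\,w=x\,]$ is needed and does not follow verbatim from the definition.

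To clear this obstacle I would pass to matrices. Extending $\mu(w,x) := 0$ whenever $w \not\leq x$, fix a linear extension of $\leq$ and let $Z$ be the zeta matrix ($Z_{ux} = 1$ iff $u \leq x$) and $M$ the matrix with entries $M_{wx} = \mu(w,x)$. With respect to the linear extension $Z$ is upper triangular with unit diagonal, hence invertible over $\mathbb{Z}$, and the orthogonality relation of the first paragraph says exactly that $(ZM)_{ux} = \sum_{u \leq w \leq x}\mu(w,x) = [\,u=x\,]$, i.e.\ $ZM = I$. Since $Z$ is invertible, this forces $M = Z^{-1}$ and therefore $MZ = I$ as well; unfolding $MZ = I$ is precisely the missing dual relation $\sum_{w \leq u \leq x}\mu(w,u) = [\,w=x\,]$, which completes the reverse implication. (Equivalently, the whole proposition can be phrased as the one-line linear-algebra equivalence $g = Z^{\mathsf{T}} f \iff f = (Z^{-1})^{\mathsf{T}} g$, viewing $f,g$ as vectors indexed by $P$; this proves both directions at once once $Z$ is known to be invertible.)
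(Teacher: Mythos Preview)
Your proof is correct. The paper does not prove this proposition at all: it is quoted as a standard fact from Stanley's \emph{Enumerative Combinatorics} (Proposition~3.7.1) and used as a black box in the proof of Lemma~\ref{lem:big-coeff}. Your argument is essentially the textbook one (the incidence-algebra proof that $\mu$ is the two-sided inverse of $\zeta$), so there is nothing to contrast.
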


The second one is the notion of probability of a Boolean function, which we introduce formally here:

\begin{definition}
	\label{def:prob-function}
A \emph{probability assignment $\pi$} is a mapping from $V$ to $[0;1]$.
	Given a valuation $\nu \subseteq V$ and a probability assignment $\pi$, the \emph{probability $\pi(\nu)$ of $\nu$ under $\pi$} is defined as
\[ \pi(\nu) \defeq \left( \prod_{x \in \nu} \pi(x) \right) \left( \prod_{x \in V \setminus \nu} (1 - \pi(x)) \right).\]
	Given a Boolean function~$\varphi$ and a probability assignment $\pi$, \emph{the probability $\Pr(\varphi,\pi)$ of~$\varphi$ under $\pi$} is then naturally defined as the total probability mass under $\pi$ of the valuations that satisfy~$\varphi$, that is
	$\Pr(\varphi,\pi) \defeq \sum_{\nu \models \varphi} \pi(\nu)$.
\end{definition}

We will be using specific probability assignments:

\begin{definition}
	For $t \in [0;1]$, let $\pi_t$ denote the probability assignment that maps every variable to $t$.
\end{definition}

The third ingredient is a univariate variant of a \emph{characteristic polynomial} of~$\varphi$~\cite{nisan1994degree}, of which we will give three different expressions:

\begin{definition}
\label{def:polynomials}
	Let~$\varphi$ be a nondegenerate monotone Boolean function, written as $\bigwedge_{0 \leq i \leq n} C_i$ in CNF and as $\bigvee_{0 \leq i \leq m} C'_i$ in DNF.
	Consider the CNF and DNF lattices of~$\varphi$, $L_\cnf^\varphi$ and $L_\dnf^\varphi$. 
	For $\mathbf{s} \subseteq \{0,\ldots,n\}$ or $\mathbf{s} \subseteq \{0,\ldots,m\}$ we write $d_\mathbf{s} \defeq \bigcup_{i \in \mathbf{s}} C_i$ and $d'_\mathbf{s} \defeq \bigcup_{i \in \mathbf{s}} C'_i$, as well as
	$\alpha_\mathbf{s} \defeq |d_\mathbf{s}|$ and $\alpha'_\mathbf{s} \defeq |d'_\mathbf{s}|$.
	We define the polynomials $P^\varphi$, $P_\dnf^\varphi$, and $P_\cnf^\varphi$ of $\mathbb{R}[t]$ as follows:
	\begin{itemize}
		\item $P^\varphi(t) \defeq \Pr(\varphi,\pi_t)$;
		\item $P_\cnf^\varphi(t) \defeq \mathlarger{\sum}_{x = d_\mathbf{s} \in L_\cnf^\varphi} \mu_\cnf(x,\hat{1}) (1-t)^{\alpha_\mathbf{s}}$;
		\item $P_\dnf^\varphi(t) \defeq 1- \mathlarger{\sum}_{x = d'_\mathbf{s} \in L_\dnf^\varphi} \mu_\dnf(x,\hat{1}) t^{\alpha'_\mathbf{s}}$.
	\end{itemize}
\end{definition}

We will show that these polynomials are equal:

\begin{lemma}
\label{lem:polynomials}
	Let~$\varphi$ be a nondegenerate monotone Boolean function.
	Then for all $t \in \mathbb{R}$, we have that
	$P^\varphi(t) = P_\cnf^\varphi(t) = P_\dnf^\varphi(t)$.
\end{lemma}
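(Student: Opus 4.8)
The strategy is to show that the three polynomials agree by expressing each one as a sum over valuations $\nu \subseteq V$ of the form $\sum_\nu f(\nu)\, t^{|\nu|}(1-t)^{|V|-|\nu|}$, i.e.\ to compute the coefficients of each polynomial in the "multilinear'' basis $\{t^{|\nu|}(1-t)^{k+1-|\nu|} : \nu \subseteq V\}$, and to check that they coincide. Since these $2^{k+1}$ basis polynomials are linearly independent, agreement of the coefficients gives agreement as polynomials (hence for all real $t$, not just $t \in [0;1]$). First I would handle $P^\varphi$: by Definition~\ref{def:prob-function} and the choice $\pi_t$, we have $P^\varphi(t) = \Pr(\varphi,\pi_t) = \sum_{\nu \models \varphi} t^{|\nu|}(1-t)^{k+1-|\nu|}$, so its coefficient on the basis element indexed by $\nu$ is simply the indicator $[\nu \models \varphi]$.

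Next I would rewrite $P_\cnf^\varphi$. The key observation is that for a monotone function, a valuation $\nu$ \emph{fails} $\varphi$ iff it is contained in the complement of some minimal DNF clause; dually, writing things through the CNF, $\nu \models \varphi$ iff $\nu$ intersects every clause $C_i$. I would instead use the standard inclusion--exclusion-style identity: for any $x = d_\mathbf{s} \in L_\cnf^\varphi$, the factor $(1-t)^{\alpha_\mathbf{s}} = (1-t)^{|d_\mathbf{s}|}$ equals $\sum_{\nu \subseteq V} [\,\nu \cap d_\mathbf{s} = \emptyset\,]\, t^{|\nu|}(1-t)^{k+1-|\nu|}$, since distributing $1 = \prod_{x \in V\setminus d_\mathbf{s}}(t + (1-t))$ and keeping the full $(1-t)$ weight on the coordinates of $d_\mathbf{s}$ gives exactly a sum over valuations disjoint from $d_\mathbf{s}$. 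Substituting into the definition of $P_\cnf^\varphi$ and swapping the two sums, the coefficient of $t^{|\nu|}(1-t)^{k+1-|\nu|}$ becomes $\sum_{x = d_\mathbf{s} \in L_\cnf^\varphi,\ \nu \cap d_\mathbf{s} = \emptyset} \mu_\cnf(x,\hat 1)$. Now $d_\mathbf{s}$ is disjoint from $\nu$ iff $d_\mathbf{s} \subseteq V \setminus \nu$; but the elements of $L_\cnf^\varphi$ that are $\subseteq V\setminus\nu$ are exactly those $x$ with $x \geq$ (in the reversed-inclusion order) the largest lattice element below $V\setminus\nu$ --- and crucially, when $\nu \not\models\varphi$ there is a clause $C_i \subseteq V\setminus\nu$ so the set $\{x \in L : d_x \subseteq V\setminus\nu\}$ is a nonempty up-set of the form $[z,\hat 1]$ for some $z \neq \hat 1$; when $\nu \models \varphi$ the only such $x$ is $\hat 1 = \emptyset$ itself. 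By the defining property of the Möbius function, $\sum_{z \leq x \leq \hat 1}\mu(x,\hat 1) = [z = \hat 1]$, so this coefficient equals $[\nu \models \varphi]$, matching $P^\varphi$.

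Finally, $P_\dnf^\varphi$ is handled by the dual argument: $t^{\alpha'_\mathbf{s}} = t^{|d'_\mathbf{s}|} = \sum_{\nu \subseteq V}[\,d'_\mathbf{s} \subseteq \nu\,]\,t^{|\nu|}(1-t)^{k+1-|\nu|}$, and the constant $1$ is $\sum_\nu t^{|\nu|}(1-t)^{k+1-|\nu|}$; so the coefficient of the basis element $\nu$ in $P_\dnf^\varphi$ is $1 - \sum_{x = d'_\mathbf{s} \in L_\dnf^\varphi,\ d'_\mathbf{s}\subseteq\nu}\mu_\dnf(x,\hat 1)$, and the analogous Möbius telescoping --- using that $\nu \models \varphi$ iff $\nu$ contains some minimal DNF clause, i.e.\ the up-set $\{x : d'_x \subseteq \nu\}$ is $[z,\hat 1]$ with $z \neq \hat 1$ exactly when $\nu \models \varphi$ --- gives $1 - [\nu \not\models\varphi] = [\nu\models\varphi]$, again matching. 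I expect the main obstacle to be the bookkeeping in the Möbius telescoping step: one must argue carefully that $\{x \in L_\cnf^\varphi : d_x \subseteq V\setminus\nu\}$ (resp.\ $\{x \in L_\dnf^\varphi : d_x' \subseteq \nu\}$) is always a principal up-set $[z,\hat 1]$ of the lattice, so that $\sum_{x \geq z}\mu(x,\hat 1) = [z=\hat 1]$ applies, and to identify precisely when $z = \hat 1$ in terms of $\nu \models \varphi$ --- this is where monotonicity of $\varphi$ and the lattice structure (closure under unions of clauses) are used, and it is the only non-formal part of the argument.
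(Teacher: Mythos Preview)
Your argument is correct and complete. The one caveat is the throwaway remark that the $2^{k+1}$ polynomials $t^{|\nu|}(1-t)^{k+1-|\nu|}$ are linearly independent: they are not (there are only $k+2$ distinct ones, depending on $|\nu|$), but you never actually use that direction --- you only need that equal coefficients give equal sums, which is trivial --- so the slip is harmless.

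Your route and the paper's differ in presentation, though they rest on the same underlying identity. The paper works entirely on the lattice: it defines functions $f,g:L_\cnf^\varphi\to\mathbb{R}$ by $g(d_\mathbf{s})=\Pr(\text{no clause in }\mathbf{s}\text{ is satisfied})=(1-t)^{\alpha_\mathbf{s}}$ and $f(d_\mathbf{s})=\Pr(\text{exactly the clauses outside }\mathbf{s}\text{ are satisfied})$, observes $g(x)=\sum_{u\le x}f(u)$, applies the M\"obius inversion formula (Proposition~\ref{prp:mobius-inversion}) to get $f(\hat 1)=\sum_x\mu(x,\hat 1)g(x)$, and reads off $P^\varphi(t)=f(\hat 1)=P_\cnf^\varphi(t)$; the DNF side is dual. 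You instead expand each polynomial valuation-by-valuation and reduce to the raw telescoping identity $\sum_{z\le x\le\hat 1}\mu(x,\hat 1)=[z=\hat 1]$, after checking that $\{x:d_x\subseteq V\setminus\nu\}$ is a principal filter $[z,\hat 1]$ (which it is, because the lattice is closed under unions). The paper's version hides the casework on $\nu$ inside the probabilistic definitions of $f$ and $g$, making it a one-line application of M\"obius inversion; your version is more explicit about where monotonicity and the lattice structure enter, at the cost of a bit more bookkeeping. Both are short and either would be fine in the paper.
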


This will imply Lemma~\ref{lem:big-coeff}: indeed, observe that the coefficient of $t^{k+1}$ is $\sum_{\nu \models \varphi} (-1)^{k+1-|\nu|} = (-1)^{k+1} \sum_{\nu \models \varphi} (-1)^{|\nu|}$ in $P^\varphi(t)$, and is $(-1)^{k+1}\mu_\cnf(\hat{0},\hat{1})$ in $P_\cnf^\varphi(t)$, and is $- \mu_\dnf(\hat{0},\hat{1})$ in $P_\dnf^\varphi(t)$.
Since $P^\varphi(t)$ and $P_\cnf^\varphi$ and $P_\dnf^\varphi$ are the same polynomials, these coefficients are equal.
So, let us show Lemma~\ref{lem:polynomials}:

\begin{proof}[Proof of Lemma~\ref{lem:polynomials}]
	Clearly, it is enough to show that these polynomials are equal on $[0;1]$.
	We use Proposition~\ref{prp:mobius-inversion} on $L_\cnf^\varphi$ and on $L_\dnf^\varphi$ to compute $P^\varphi(t)$. 
	We start with $L_\cnf^\varphi$.
	Define the functions $f$ and $g$, from $L_\cnf^\varphi$ to $\mathbb{R}$, as follows: let $d_\mathbf{s} \in L_\cnf^\varphi$, then:
	\begin{enumerate}
		\item $f(d_\mathbf{s}) \defeq \Pr\left( (\lnot \bigvee_{i \in \mathbf{s}} C_i) \land \bigwedge_{i \in V \setminus \mathbf{s}} C_i, \pi_t \right)$; in other words the total probability mass of the valuations that do not satisfy any of the (disjunctive) clauses $C_i$ for $i \in \mathbf{s}$ but satisfy all other clauses $C_i$.
		\item $g(d_\mathbf{s}) \defeq \Pr \left( \lnot \bigvee_{i \in \mathbf{s}} C_i, \pi_t \right)$; in other words the total probability mass of the valuations that do not satisfy any of the clauses $C_i$ for $i \in \mathbf{s}$.
	\end{enumerate}

	We clearly have $g(x) = \sum_{u \leq x} f(u)$ for all $x \in L_\cnf^\varphi$, hence by Proposition~\ref{prp:mobius-inversion} we have
	$f(x) = \sum_{u \leq x} \mu_\cnf(u,x) g(u)$.
	Moreover, for $x = d_\mathbf{s} \in L_\cnf^\varphi$, we have that $g(x) = (1 - t)^{\alpha_\mathbf{s}}$.
	Now, since $P^\varphi(t) = f(\hat{1})$, we indeed have that $P^\varphi(t) = P_\cnf^\varphi(t)$.

	Let us now have a look at $L_\dnf^\varphi$.
	The reasoning is similar, but we include it for completeness.
	For $x = d'_\mathbf{s} \in L_\dnf^\varphi$ define:
	\begin{enumerate}
		\item $f(d'_\mathbf{s}) \defeq \Pr \left( ( \bigwedge_{i \in \mathbf{s}} C'_i) \land \lnot \bigvee_{i \in V \setminus \mathbf{s}} C'_i, \pi_t \right)$; in other words the total probability mass of the valuations that satisfy all the (conjunctive) clauses $C'_i$ for $i \in \mathbf{s}$ but none of the other clauses~$C'_i$.
		\item $g(d'_\mathbf{s}) \defeq \Pr \left(  \bigwedge_{i \in \mathbf{s}} C'_i, \pi_t \right)$; in other words the total probability mass of the valuations that satisfy all the clauses $C'_i$ for $i \in \mathbf{s}$.
	\end{enumerate}
	This time, we have $g(x) = \sum_{u \leq x} f(u)$
	and $f(x) = \sum_{u \leq x} \mu_\dnf(u,x) g(u)$
	and $g(x) = t^{\alpha_\mathbf{s}}$
	for all $x = d'_\mathbf{s}\in L_\dnf^\varphi$. Combining with $P^\varphi(t) = 1 - f(\hat{1})$ we obtain
	that $P^\varphi(t) = P_\dnf^\varphi(t)$.
	This finishes the proof.
\end{proof}

\end{toappendix}

A result that looks very similar to Lemma~\ref{lem:big-coeff} is \emph{Philip Hall's theorem}
(see~\cite[Proposition 3.8.6]{stanley2011enumerative} and text above), relating the
Möbius value $\mu_P(\hat{0},\hat{1})$ of a poset $P$ with the Euler characteristic of a certain
\emph{abstract simplicial complex}\footnote{A synonym for the negation of a monotone Boolean function.} defined from $P$.
However, we did not see a direct relation between this result and our lemma.

Combining Lemma~\ref{lem:big-coeff} with Proposition~\ref{prp:complexity_H-queries}, and with the observation that any degenerate function~$\varphi$ has 
$\eul(\varphi)=0$, we obtain the following simple characterization of the safe $\mathcal{H}^+$-queries:

\begin{corollary}
\label{cor:rephrase}
	Let~$\varphi$ be monotone. If $\eul(\varphi)=0$ then $\pqe(Q_\varphi)$ is PTIME, otherwise it is \#P-hard.
\end{corollary}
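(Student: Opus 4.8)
The statement is a direct consequence of Lemma~\ref{lem:big-coeff} and Proposition~\ref{prp:complexity_H-queries}, once we handle the degenerate case separately. I would organize the argument as a case distinction on whether~$\varphi$ is degenerate.

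First I would dispose of the degenerate case. Suppose~$\varphi$ is degenerate, so there is a variable $l \in V$ with $l \notin \dep(\varphi)$, i.e.\ $\varphi(\nu) = \varphi(\nu^{(l)})$ for every $\nu \subseteq V$. The map $\nu \mapsto \nu^{(l)}$ is an involution on~$2^V$ that preserves $\sat(\varphi)$ (by the displayed equality) and flips the parity of $|\nu|$ (since it adds or removes exactly one element). Partitioning $\sat(\varphi)$ into the orbits $\{\nu,\nu^{(l)}\}$ of this involution, each orbit contributes $(-1)^{|\nu|} + (-1)^{|\nu^{(l)}|} = 0$ to the sum $\sum_{\nu \models \varphi}(-1)^{|\nu|}$, hence $\eul(\varphi) = 0$. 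On the other hand, Proposition~\ref{prp:complexity_H-queries} tells us that $\pqe(Q_\varphi)$ is PTIME when~$\varphi$ is degenerate. So in this case both ``$\eul(\varphi)=0$'' and ``$\pqe(Q_\varphi)$ is PTIME'' hold, which is consistent with the claimed equivalence.

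Next I would treat the nondegenerate case. Suppose~$\varphi$ is nondegenerate. By Lemma~\ref{lem:big-coeff} we have $\eul(\varphi) = \mu_\cnf(\hat 0,\hat 1)$, where $\mu_\cnf$ is the Möbius function of the CNF lattice $L_\cnf^\varphi$. Now apply the second part of Proposition~\ref{prp:complexity_H-queries}: if $\mu_\cnf(\hat 0,\hat 1) = 0$ then $\pqe(Q_\varphi)$ is PTIME, and otherwise it is \#P-hard. Rewriting via the equality $\eul(\varphi) = \mu_\cnf(\hat0,\hat1)$: if $\eul(\varphi)=0$ then $\pqe(Q_\varphi)$ is PTIME, and if $\eul(\varphi)\neq 0$ then $\pqe(Q_\varphi)$ is \#P-hard.

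Finally I would assemble the two cases. If $\eul(\varphi) = 0$: then in the degenerate case $\pqe(Q_\varphi)$ is PTIME, and in the nondegenerate case it is PTIME as just shown; either way PTIME. If $\eul(\varphi) \neq 0$: by the contrapositive of the degenerate-case computation,~$\varphi$ must be nondegenerate, so the nondegenerate analysis applies and gives \#P-hardness. This is exactly the statement of the corollary. There is essentially no serious obstacle here: the only point requiring a little care is the parity/involution argument showing degenerate $\Rightarrow \eul(\varphi) = 0$, and the bookkeeping that ``$\eul(\varphi)\neq 0$'' already forces nondegeneracy so that Lemma~\ref{lem:big-coeff} is applicable.
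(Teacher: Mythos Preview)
Your proposal is correct and follows essentially the same approach as the paper: combine Lemma~\ref{lem:big-coeff} with Proposition~\ref{prp:complexity_H-queries}, using the observation that any degenerate~$\varphi$ has $\eul(\varphi)=0$. The only difference is that you spell out the parity-flipping involution argument for this last observation, which the paper merely states.
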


Using the Euler characteristic instead of the Möbius function to characterize the safe $\mathcal{H}^+$-queries has two advantages.
First, $\eul(\varphi)$ is conceptually much simpler to grasp than $\mu_\cnf(\hat{0},\hat{1})$.\footnote{For instance, a satisfactory consequence is that we can easily express the number of (not necessarily monotone) functions with $\eul(\varphi)=0$: this is 
$\sum_{j=0}^{2^k} \binom{2^k}{j}^2$
.}
The second one is that $\eul(\varphi)$ is always defined, whereas it is not entirely clear how to define the CNF lattice when~$\varphi$ is not monotone.

While the proof of Lemma~\ref{lem:big-coeff} is not very challenging,
the connection does not seem to have been made in the literature so far.

\section{Fragmentable Boolean Functions}
  \label{sec:fragmentability}
  We now introduce our notion of fragmentable Boolean function and show that whenever~$\varphi$ is fragmentable, then $Q_\varphi \in \ddptime$.
We also show that if $\eul(\varphi) \neq 0$ then~$\varphi$ cannot be fragmentable.

\subsection{Definition}

We first need to define what we call an \emph{$\lnot$-$\lor$-template}.

\begin{definition}
\label{def:template}
	An \emph{$\lnot$-$\lor$-template} is a Boolean circuit whose internal nodes (i.e., those that are not a leaf) are either $\lnot$- or $\lor$-gates.
	We call \emph{holes} the variables (i.e., the leaves) of~$T$.
	Let $l_0,\ldots,l_n$ be the holes of $T$, and let $\varphi_0,\ldots,\varphi_n$ be Boolean functions.
	Then $T[\varphi_0,\ldots,\varphi_n]$ represents the Boolean function obtained by substituing each hole $l_i$
	by~$\varphi_i$, and then seeing the result as a “hybrid” Boolean circuit\footnote{“Hybrid” because the leaves of the circuit are Boolean functions, i.e., we do not care how these are represented concretely.}.
	We say that $T[\varphi_0,\ldots,\varphi_n]$ is \emph{deterministic} when every $\lor$-gate \emph{in the template} is deterministic.
\end{definition}

Observe that it can be the case that $T[\varphi_0,\ldots,\varphi_n]$ is deterministic while $T$ itself is not.
For a simple example, take $T$ to be the template with two holes $l_0 \lor l_1$,
and take $\varphi_0 \defeq x$ and $\varphi_1 \defeq \lnot x$.
Then $T[\varphi_0,\varphi_1]$ is deterministic but $T$ is not.
Proposition~\ref{prp:if-degenerate-then-obddptime} then suggests the following definition:

\begin{definition}
\label{def:fragmentablility}
	We say that a Boolean function~$\varphi$ is \emph{fragmentable} if there exist a $\lnot$-$\lor$-template $T$ and degenerate
	Boolean functions $\varphi_0,\ldots,\varphi_n$ (with $n+1$ equals the number of holes of $T$)
	such that $T[\varphi_0,\ldots,\varphi_n]$ is deterministic and is equivalent to~$\varphi$.
\end{definition}

Note that in Definition~\ref{def:template} we allowed $\lnot$-$\lor$-templates to consist of a single leaf, which is then also the root.
In particular, this implies that any degenerate Boolean function is fragmentable.

\begin{example}
\label{expl:q9-fragmentable}
	Consider $\varphi_9$ from Example~\ref{expl:q9}.
	Let $T$ be the $\lnot$-$\lor$-template~$T \defeq l_0 \lor l_1 \lor l_2 \lor l_3$,
	and let~$\varphi_0 \defeq 0 \land \lnot 2 \land 3$; $\varphi_1 \defeq \lnot 1 \land 2 \land 3$;  $\varphi_2 \defeq \lnot 0 \land 1 \land 3$; and
	$\varphi_3 \defeq 0 \land 1 \land 2$, which are all degenerate.
	One can easily see that $T[\varphi_0,\varphi_1,\varphi_2,\varphi_3]$ is deterministic (because any two $\varphi_i,\varphi_j$ with $i\neq j$ are disjoint)
	and (less easily) that $T[\varphi_0,\varphi_1,\varphi_2,\varphi_3]$, i.e., $\varphi_0 \lor \varphi_1 \lor \varphi_2 \lor \varphi_3$, is equivalent to~$\varphi_9$.
	Therefore, $\varphi_9$ is fragmentable. Observe the we did not use $\lnot$-gates in the template.
\end{example}

\subsection{First properties}

The main property of fragmentability is that it implies having \mbox{d-Ds} constructible in polynomial time.
This is actually its sole purpose.

\begin{proposition}
\label{prp:if-fragmentable-then-ddptime}
	If~$\varphi$ is fragmentable then $Q_\varphi \in \ddptime$.
\end{proposition}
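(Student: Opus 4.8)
The plan is to build the desired d-D compositionally out of a fragmentability witness. Fix an arbitrary database $D$, and let an $\lnot$-$\lor$-template $T$ with holes $l_0,\dots,l_n$, together with degenerate Boolean functions $\varphi_0,\dots,\varphi_n$, witness that $\varphi$ is fragmentable; note that $T$ and the $\varphi_i$ depend only on $\varphi$, hence on the fixed query $Q_\varphi$, so $|T|$ is a constant. The starting observation is that both $\psi \mapsto Q_\psi$ and $Q \mapsto \lin(Q,D)$ commute with the Boolean connectives: by definition $Q_\psi$ is just $\psi$ with each variable $i$ replaced by $h_{k,i}$, and a subinstance $D'\subseteq D$ satisfies $Q_1 \lor Q_2$ (resp.\ $\lnot Q$) iff it satisfies $Q_1$ or $Q_2$ (resp.\ does not satisfy $Q$). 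Since $T$ uses only $\lor$- and $\lnot$-gates and $T[\varphi_0,\dots,\varphi_n]\equiv\varphi$, this gives
\[ \lin(Q_\varphi,D) \;=\; \lin\big(T[Q_{\varphi_0},\dots,Q_{\varphi_n}],D\big) \;=\; T\big[\lin(Q_{\varphi_0},D),\dots,\lin(Q_{\varphi_n},D)\big], \]
the right-hand side being $T$ read as a Boolean circuit whose leaves are the Boolean functions $\lin(Q_{\varphi_i},D)$.

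Next I would invoke Proposition~\ref{prp:if-degenerate-then-obddptime}: each $\varphi_i$ is degenerate, hence $Q_{\varphi_i}\in\obddptime\subseteq\ddptime$, so we can compute in polynomial time a d-D $D_i$ computing $\lin(Q_{\varphi_i},D)$. Let $C \defeq T[D_0,\dots,D_n]$ be the circuit obtained by plugging the root of $D_i$ into the hole $l_i$ of $T$ (keeping the $\lor$- and $\lnot$-gates of $T$). Since $|T|$ is constant and each $D_i$ has polynomial size, $C$ has polynomial size and is computable in polynomial time, and by the displayed identity it computes $\lin(Q_\varphi,D)$.

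It then remains to check that $C$ is a genuine d-D. For decomposability: $C$ has no $\land$-gate coming from $T$, so its $\land$-gates are exactly those of the $D_i$, and for such a gate the set $\VARS$ of each input is still computed along directed paths that stay inside the self-contained circuit $D_i$, so it is unchanged and decomposability is inherited (the $D_i$ may share leaf gates, but this adds no new path into a given $D_i$, so it is harmless). For determinism: the $\lor$-gates of $C$ are those of the $D_i$ (deterministic, as the $D_i$ are d-Ds) together with the $\lor$-gates of $T$. Let $g$ be such a gate of $T$ and $g_1\neq g_2$ two of its inputs. Because $T[\varphi_0,\dots,\varphi_n]$ is deterministic, $g_1,g_2$ capture there Boolean functions $\psi_1,\psi_2$ on $V$ with $\psi_1\land\psi_2=\bot$; applying the same commutation argument to the subtemplate rooted at $g_j$ shows that $g_j$ captures in $C$ the function $\lin(Q_{\psi_j},D)$. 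Then
\[ \lin(Q_{\psi_1},D)\land\lin(Q_{\psi_2},D) \;=\; \lin(Q_{\psi_1\land\psi_2},D) \;=\; \lin(Q_\bot,D) \;=\; \bot, \]
the last step because $Q_\bot$ is an unsatisfiable query. Hence every $\lor$-gate of $C$ is deterministic, $C$ is a d-D for $\lin(Q_\varphi,D)$, and $Q_\varphi\in\ddptime$.

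The step I expect to require the most care is the final one, namely transporting ``disjoint on $V$'' to ``disjoint on the tuples of $D$'': this is exactly where the hypothesis that $T[\varphi_0,\dots,\varphi_n]$ is \emph{deterministic} is used, and it goes through precisely because $\psi_1\land\psi_2=\bot$ forces $Q_{\psi_1}\land Q_{\psi_2}=Q_{\psi_1\land\psi_2}=Q_\bot$ to be unsatisfiable, hence to have empty lineage on every database. Everything else — the polynomial size/time bound and decomposability — is routine bookkeeping about how $\VARS$ and captured functions behave under the substitution $T[D_0,\dots,D_n]$.
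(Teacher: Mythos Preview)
Your proof is correct and follows exactly the same approach as the paper: take a fragmentability witness, build d-Ds for each $Q_{\varphi_i}$ via Proposition~\ref{prp:if-degenerate-then-obddptime}, plug them into the holes of $T$, and verify that the result is a d-D computing $\lin(Q_\varphi,D)$. The paper's proof is terser, simply stating ``one can check that $C_\varphi$ is a d-D and that it captures $\lin(Q_\varphi,D)$''; you have supplied precisely those checks, including the key observation that disjointness of $\psi_1,\psi_2$ at the template level forces disjointness of $\lin(Q_{\psi_1},D),\lin(Q_{\psi_2},D)$ at the lineage level.
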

\begin{proof}
	Let $T$ be a $\lnot$-$\lor$-template and $\varphi_0,\ldots,\varphi_n$ be degenerate Boolean functions such that $T[\varphi_0,\ldots,\varphi_n]$
	if deterministic and equivalent to~$\varphi$, and let $D$ be an arbitrary database.
	Since $\varphi_0,\ldots\varphi_n$ are degenerate, by Proposition~\ref{prp:if-degenerate-then-obddptime} we can construct in PTIME
	deterministic decomposable Boolean circuits $C_0,\ldots,C_n$
	capturing the lineages $\lin(Q_{\varphi_i},D)$ of $Q_{\varphi_i}$ on $D$, for $0 \leq i \leq n$.
	Let $C_\varphi$ be the Boolean circuit obtained by plugging the circuits~$C_i$ at the holes of $T$.
	Then one can check that $C_\varphi$ is a d-D and that it captures~$\lin(Q_\varphi,D)$.
\end{proof}

Remember that we are working with data complexity here, so we can assume that we know $T$ and $\varphi_0,\ldots,\varphi_n$ already.
Although we do not know the exact complexity of finding these given~$\varphi$,
the results of the next section will imply (Corollary~\ref{cor:computable}) that this task is computable.

\begin{example}
\label{expl:q9-ddptime}
	Continuing Example~\ref{expl:q9-fragmentable}, since $\varphi_9$ is fragmentable we obtain that $Q_{\varphi_9} \in \ddptime$\footnote{
		We do not claim credit for this fact, as it has been known for at least 4 years by Guy Van den Broeck
		and Dan Olteanu (with whom we initially worked on the problem).}.
\end{example}

Another property of fragmentability is that it implies having zero Euler characteristic.

\begin{proposition}
\label{prp:if-fragmentable-then-euler-null}
	If~$\varphi$ is fragmentable then $\eul(\varphi)=0$.
\end{proposition}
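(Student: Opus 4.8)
The plan is to prove the statement directly by structural induction on the hybrid circuit $C \defeq T[\varphi_0,\ldots,\varphi_n]$ that witnesses fragmentability of $\varphi$: I will show that \emph{every} gate $g$ of $C$ captures a Boolean function on $V$ whose Euler characteristic is $0$. Applying this to the output gate of $C$, and using that $C$ is equivalent to $\varphi$ — hence has the same satisfying valuations, hence the same Euler characteristic — yields $\eul(\varphi)=0$.

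First I would record two elementary closure properties of $\eul$ for Boolean functions over our fixed set $V=\{0,\ldots,k\}$, which is nonempty since $k\geq 1$. (a) For any $\psi$ on $V$, $\eul(\lnot\psi)=-\eul(\psi)$, which is immediate from $\eul(\psi)+\eul(\lnot\psi)=\sum_{\nu\subseteq V}(-1)^{|\nu|}=(1-1)^{|V|}=0$. (b) For pairwise disjoint functions $\psi_1,\ldots,\psi_m$ on $V$, we have $\eul\big(\bigvee_{j}\psi_j\big)=\sum_{j}\eul(\psi_j)$, because $\sat\big(\bigvee_j\psi_j\big)$ is then the disjoint union of the sets $\sat(\psi_j)$. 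I will also use the fact (already noted in the paper before Corollary~\ref{cor:rephrase}) that a degenerate function $\psi$ on $V$ has $\eul(\psi)=0$: pairing each satisfying valuation $\nu$ with $\nu^{(l)}$ for some $l\notin\dep(\psi)$ makes the signs cancel.

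Then I carry out the induction on $C$. Leaves of $C$ capture the functions $\varphi_i$, which are degenerate, so $\eul(\varphi_i)=0$ by the last fact. An $\lnot$-gate inherits $\eul=0$ from its unique input by~(a) and the induction hypothesis. An $\lor$-gate $g$ is — since $T[\varphi_0,\ldots,\varphi_n]$ is \emph{deterministic}, i.e., every $\lor$-gate \emph{of the template} is deterministic — a gate whose inputs capture pairwise disjoint Boolean functions; by~(b) and the induction hypothesis, $g$ also captures a function with $\eul=0$. This completes the induction, and hence the proof.

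I do not expect a genuine obstacle here; the only points requiring care are that the closure facts (a) and (b) rely on $V\neq\emptyset$ (which holds because $k\geq 1$), that all the functions in play — the $\varphi_i$, the functions captured at the internal gates, and $\varphi$ itself — live over the same set $V$, and that the relevant notion of determinism is the one from Definition~\ref{def:template} concerning $\lor$-gates \emph{inside the template} (not determinism of $C$ as an abstract circuit — indeed $T$ may fail to be deterministic even when $T[\varphi_0,\ldots,\varphi_n]$ is, as the example after Definition~\ref{def:template} shows).
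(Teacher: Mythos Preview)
Your proof is correct and follows essentially the same approach as the paper: a bottom-up induction on $T[\varphi_0,\ldots,\varphi_n]$ using the three facts that degenerate functions have $\eul=0$, that $\eul(\lnot\psi)=-\eul(\psi)$, and that $\eul$ is additive over deterministic disjunctions. You simply spell out the justifications for these facts in more detail than the paper does.
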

\begin{proof}
	Easily proved by bottom-up induction on $T[\varphi_0,\ldots,\varphi_n]$ and by using the facts that
	(1) $\eul(\varphi)=0$ when~$\varphi$ is degenerate;
	(2)~$\eul(\lnot\varphi)=-\eul(\varphi)$; and
	(3) $\eul(\varphi \lor \varphi') = \eul(\varphi) + \eul(\varphi')$ when~$\varphi$ and $\varphi'$ are disjoint.
\end{proof}

At first glance, the definition of fragmentability seems completely ad-hoc.
Indeed, it could very well be the case that some $\mathcal{H}$-query, say even $\mathcal{H}^+$-query, is in $\ddptime$
by other means than Proposition~\ref{prp:if-fragmentable-then-ddptime}.
Yet, we will show in the next section the surprising fact that the converse of Proposition~\ref{prp:if-fragmentable-then-euler-null} is also true,
implying that fragmentability is the right notion to consider.

\section{Upper Bound}
  \label{sec:upper_bound}
  We dedicate this section to showing that having a zero Euler characteristic implies being fragmentable.
Formally:

\begin{proposition}
\label{prp:if-euler-null-then-fragmentable}
	If $\eul(\varphi)=0$ then~$\varphi$ is fragmentable.
\end{proposition}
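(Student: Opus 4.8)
The plan is to route the statement through the rewriting operation announced in the introduction. Define a one‑step rewriting on Boolean functions on $V$: say $\varphi$ rewrites to $\varphi'$ if either $\sat(\varphi') = \sat(\varphi)\setminus\{\nu,\nu^{(l)}\}$ for some $l\in V$ and some $\nu$ with $\nu,\nu^{(l)}\models\varphi$ (move~(1)), or $\sat(\varphi') = \sat(\varphi)\cup\{\nu,\nu^{(l)}\}$ for some $l\in V$ and some $\nu$ with $\nu,\nu^{(l)}\not\models\varphi$ (move~(2)). Since (1) and (2) are mutually inverse, the symmetric, reflexive, transitive closure $\simeq$ is an equivalence relation. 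The proposition then reduces to two lemmas: \textbf{(A)} if $\varphi\simeq\bot$ then $\varphi$ is fragmentable; and \textbf{(B)} if $\eul(\varphi)=0$ then $\varphi\simeq\bot$. Chaining (B) and (A) gives the claim.

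For \textbf{(A)} I would induct on the length of a shortest $\simeq$-chain from $\bot$ to $\varphi$. The base case uses that $\bot$ is degenerate (here $V\neq\emptyset$), hence fragmentable by the remark after Definition~\ref{def:fragmentablility}. For the inductive step let $\psi$ be the predecessor of $\varphi$ in the chain, fragmentable by induction via a $\lnot$-$\lor$-template $T'$ with degenerate fillers, and let $\chi$ be the Boolean function with $\sat(\chi)=\{\nu,\nu^{(l)}\}$, which is degenerate since it does not depend on $l$. If $\varphi$ comes from $\psi$ by a move~(2) then $\varphi=\psi\lor\chi$ with $\psi$ and $\chi$ disjoint, so placing a fresh $\lor$-gate above the root of $T'$, with a new hole filled by $\chi$, is a deterministic fragmentation of $\varphi$. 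If $\varphi$ comes from $\psi$ by a move~(1) then $\sat(\chi)\subseteq\sat(\psi)$, hence $\lnot\psi$ and $\chi$ are disjoint and $\varphi=\lnot(\lnot\psi\lor\chi)$; since an $\lnot$-$\lor$-template remains one after prepending $\lnot$-gates and after the disjoint $\lor$ just used, $\varphi$ is again fragmentable. In each case the single new $\lor$-gate is deterministic exactly because of the disjointness noted, and all old gates capture the same functions as before.

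\textbf{(B)} is the combinatorial core, and the step I expect to be the main obstacle. First, $\eul$ is $\simeq$-invariant: two adjacent valuations have opposite‑parity sizes, so each move changes $\sum_{\nu\models\varphi}(-1)^{|\nu|}$ by $0$; in particular $\#\varphi$ is even when $\eul(\varphi)=0$. I would then induct on $\#\varphi$, with $\#\varphi=0$ giving $\varphi=\bot$. If $\sat(\varphi)$ contains two adjacent valuations, delete them with a move~(1): $\#\varphi$ drops by $2$, $\eul$ stays $0$, apply induction. Otherwise $\sat(\varphi)$ is an independent set of the hypercube $2^V$; since $\eul(\varphi)=0$ and $\#\varphi>0$ there are satisfying valuations of both parities, so pick satisfying $\nu_0,\nu_1$ of opposite parity at minimum Hamming distance $d$ among all such pairs. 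Then $d$ is odd and $d\geq 3$, and along any geodesic $\nu_0=p_0,p_1,\dots,p_d=\nu_1$ every interior vertex $p_1,\dots,p_{d-1}$ is non‑satisfying, since otherwise a parity count produces a strictly closer opposite‑parity satisfying pair. Now I would ``slide $\nu_0$ to $\nu_1$ along the geodesic'': for $j=0,\dots,(d-3)/2$, first add the adjacent non‑satisfying pair $\{p_{2j+1},p_{2j+2}\}$ (move~(2)), then delete the adjacent satisfying pair $\{p_{2j},p_{2j+1}\}$ (move~(1)), so that the satisfying set becomes $(\sat(\varphi)\setminus\{\nu_0\})\cup\{p_{2j+2}\}$; a final move~(1) deleting $\{p_{d-1},\nu_1\}$ yields $\sat(\varphi)\setminus\{\nu_0,\nu_1\}$, which has $\#$ smaller by $2$ and still zero Euler characteristic, and induction finishes. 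Verifying that each pair used is legal at the moment it is used is routine because the $p_i$ are pairwise distinct and the interior ones start outside $\sat(\varphi)$; the real content is the minimality/parity fact that frees up the geodesic interior for the sliding. (Pushed further, this same sliding scheme shows that the $\simeq$-classes are precisely the level sets of $\eul$, which the paper invokes elsewhere; only the zero level set is needed here.)
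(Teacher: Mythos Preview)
Your proposal is correct and follows essentially the same route as the paper: the paper also reduces the proposition to (A) ``$\varphi\simeq\bot\Rightarrow\varphi$ fragmentable'' (its Proposition~\ref{prp:if-equiv-bot-then-fragmentable}, proved exactly as you do) and (B) ``$\eul(\varphi)=0\Rightarrow\varphi\simeq\bot$'' (its Proposition~\ref{prp:if-euler-null-then-equiv-bot}), and for (B) it too inducts on $\#\varphi$, finds an opposite-parity satisfying pair joined by a path with non-satisfying interior, and ``chainkills'' along that path. The only cosmetic difference is how the clear path is obtained: the paper takes an arbitrary simple path between any opposite-parity satisfying pair and extracts the last-first clear sub-segment (its Fetching Lemma~\ref{lem:fetch}), whereas you split off the adjacent case and then, in the independent case, take a geodesic between a \emph{closest} opposite-parity pair and argue minimality forces the interior to be empty; both arguments yield the same input to the same sliding/chainkilling step.
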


Before embarking on its proof, we spell out its consequences.
The first one is that all the $\mathcal{H}$-queries $Q_\varphi$ with $\eul(\varphi)=0$ are in $\ddptime$, thanks to Proposition~\ref{prp:if-fragmentable-then-ddptime}:

\begin{theorem}
\label{thm:if-euler-null-then-ddptime}
	If $\eul(\varphi)=0$ then $Q_\varphi \in \ddptime$.
\end{theorem}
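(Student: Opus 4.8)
The plan is to obtain the theorem as an immediate corollary of two facts already established in the excerpt, namely Proposition~\ref{prp:if-euler-null-then-fragmentable} and Proposition~\ref{prp:if-fragmentable-then-ddptime}, both of which I am free to assume. Concretely, given a Boolean function $\varphi$ on $V$ (not necessarily monotone) with $\eul(\varphi)=0$, I would first invoke Proposition~\ref{prp:if-euler-null-then-fragmentable} to conclude that $\varphi$ is fragmentable, and then feed this directly into Proposition~\ref{prp:if-fragmentable-then-ddptime} to conclude that $Q_\varphi \in \ddptime$. There is no further construction to carry out: the statement is precisely the transitive composition of the two implications ``$\eul(\varphi)=0 \Rightarrow \varphi$ fragmentable'' and ``$\varphi$ fragmentable $\Rightarrow Q_\varphi \in \ddptime$.''

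Since the argument is a two-link chain, the only thing worth spelling out is where each link's content lives and which one I expect to carry the weight. The second link, Proposition~\ref{prp:if-fragmentable-then-ddptime}, is the routine one: from a witnessing $\lnot$-$\lor$-template $T$ together with degenerate functions $\varphi_0,\ldots,\varphi_n$ such that $T[\varphi_0,\ldots,\varphi_n]$ is deterministic and equivalent to $\varphi$, one builds in polynomial time d-Ds $C_0,\ldots,C_n$ for the lineages of the degenerate pieces $Q_{\varphi_i}$ on the input database $D$ (this is exactly where Proposition~\ref{prp:if-degenerate-then-obddptime} is used as a black box), plugs them into the holes of $T$, and checks that decomposability and the in-template determinism are inherited, yielding a d-D for $\lin(Q_\varphi,D)$. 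As we work in data complexity, $T$ and the $\varphi_i$ are fixed, so the whole procedure runs in PTIME.

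Accordingly, I expect the genuine obstacle to sit entirely in the first link, Proposition~\ref{prp:if-euler-null-then-fragmentable}: exhibiting, for every $\varphi$ with zero Euler characteristic, an explicit fragmentation witness. That is the combinatorial heart of the development and is established separately (via the $\simeq$-transformation and the reduction to $\varphi \simeq \bot$); here I would treat it purely as a hypothesis rather than unfold it. Thus my proof of Theorem~\ref{thm:if-euler-null-then-ddptime} reduces to a single sentence of composition, and I would write it as such, deliberately \emph{not} re-deriving either of the two propositions on which it rests.
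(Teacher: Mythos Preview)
Your proposal is correct and matches the paper's own argument exactly: the theorem is stated as an immediate consequence of Proposition~\ref{prp:if-euler-null-then-fragmentable} composed with Proposition~\ref{prp:if-fragmentable-then-ddptime}, with no additional work. Your identification of where the real content lies (the fragmentability proposition) is also accurate.
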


This is, in a sense, the main result of this article.
We have represented in Figure~\ref{fig:full-picture} all the $\mathcal{H}$-queries with zero Euler characteristic by the dashed green rectangle.
Theorem~\ref{thm:if-euler-null-then-ddptime} applies in particular to all the safe $\mathcal{H}^+$-queries, thanks to Corollary~\ref{cor:rephrase}.
So we get:

\begin{corollary}
\label{cor:all-safe-H+-queries-ddptime}
	All safe $\mathcal{H}^+$-queries are in $\ddptime$.
\end{corollary}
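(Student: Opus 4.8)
The plan is short, since essentially all the work has already been done in the results preceding this statement and the corollary is just a matter of chaining implications. First I would unwind the definition of a safe $\mathcal{H}^+$-query: by definition it is a query $Q_\varphi$ where $\varphi$ is a \emph{monotone} Boolean function on $V$ and $Q_\varphi$ is safe in the sense of Dalvi and Suciu, i.e., $\pqe(Q_\varphi)$ is in PTIME (equivalently, it is not \#P-hard). Second, I would apply Corollary~\ref{cor:rephrase} in contrapositive form: that corollary states that a monotone $\varphi$ with $\eul(\varphi)\neq 0$ makes $\pqe(Q_\varphi)$ \#P-hard, so safety of $Q_\varphi$ forces $\eul(\varphi)=0$. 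Third, I would feed this $\varphi$ into Theorem~\ref{thm:if-euler-null-then-ddptime} to conclude $Q_\varphi\in\ddptime$.

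Spelled out, the chain of implications I would write down is: $Q_\varphi$ safe $\Rightarrow$ $\eul(\varphi)=0$ (Corollary~\ref{cor:rephrase}, which itself rests on Lemma~\ref{lem:big-coeff} and Proposition~\ref{prp:complexity_H-queries}) $\Rightarrow$ $\varphi$ fragmentable (Proposition~\ref{prp:if-euler-null-then-fragmentable}) $\Rightarrow$ $Q_\varphi\in\ddptime$ (Proposition~\ref{prp:if-fragmentable-then-ddptime}), where the last two steps together are exactly the content of Theorem~\ref{thm:if-euler-null-then-ddptime}.

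I do not expect any real obstacle inside this corollary itself. The only point I would state with a little care is that the ``safe'' and ``\#P-hard'' cases of the dichotomy are mutually exclusive, so that the contrapositive of Corollary~\ref{cor:rephrase} genuinely applies; this follows from the unconditional PTIME-versus-\#P-hard form of Proposition~\ref{prp:complexity_H-queries} (or, if preferred, from $\mathrm{FP}\neq\#\mathrm{P}$). All the genuine difficulty lives upstream --- in Proposition~\ref{prp:if-euler-null-then-fragmentable} (producing an $\lnot$-$\lor$-template and degenerate functions out of a zero-Euler-characteristic $\varphi$ via the transformation announced in Section~\ref{sec:upper_bound}) and in the Möbius--Euler identity of Lemma~\ref{lem:big-coeff} --- none of which this corollary needs to revisit.
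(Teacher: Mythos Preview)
Your proposal is correct and matches the paper's own derivation exactly: the paper simply states that Theorem~\ref{thm:if-euler-null-then-ddptime} applies to all safe $\mathcal{H}^+$-queries thanks to Corollary~\ref{cor:rephrase}, which is precisely the chain of implications you wrote. Your extra care about mutual exclusivity is harmless but not strictly needed, since ``safe'' in the Dalvi--Suciu dichotomy is a structural notion (so Corollary~\ref{cor:rephrase} already gives the biconditional ``safe $\Leftrightarrow \eul(\varphi)=0$'' unconditionally).
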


Remember that the intensional--extensional conjecture was \linebreak
thought to hold for the safe $\mathcal{H}^+$-queries, because the use of inclusion--exclusion seemed unavoidable.
This surprising result shows that it \emph{is} avoidable, and that we can get away by using only decomposability and determinism.

Less importantly, by combining Proposition~\ref{prp:if-euler-null-then-fragmentable} and Proposition~\ref{prp:if-fragmentable-then-euler-null}, we obtain the following:

\begin{corollary}
\label{cor:euler-null-iff-fragmentable}
	$\varphi$ is fragmentable if and only if $\eul(\varphi)=0$.
\end{corollary}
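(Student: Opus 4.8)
The plan is to introduce the equivalence relation $\simeq$ announced in the introduction and to establish three facts: (i) fragmentability (Definition~\ref{def:fragmentablility}) is preserved by $\simeq$; (ii) the Euler characteristic is preserved by $\simeq$; and (iii) every $\varphi$ with $\eul(\varphi)=0$ satisfies $\varphi \simeq \bot$. Since $\bot$ is degenerate, hence fragmentable, combining (iii) with (i) gives the statement. Concretely, say that $\varphi'$ is obtained from $\varphi$ by an \emph{elementary move} if there are a variable $l$ and a valuation $\nu$ with $l\notin\nu$ such that either $\nu,\nu^{(l)}\models\varphi$ and $\sat(\varphi')=\sat(\varphi)\setminus\{\nu,\nu^{(l)}\}$ (a \emph{removal}), or $\nu,\nu^{(l)}\not\models\varphi$ and $\sat(\varphi')=\sat(\varphi)\cup\{\nu,\nu^{(l)}\}$ (an \emph{addition}); let $\simeq$ be the reflexive--transitive closure. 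It is symmetric, since removals and additions are mutually inverse.

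Fact (ii) is immediate: each move deletes or inserts the two endpoints of a hypercube edge, whose sizes have opposite parity, so $\sum_{\nu\models\varphi}(-1)^{|\nu|}$ is unchanged. For (i), suppose $\varphi'$ is fragmentable via a $\lnot$-$\lor$-template $T'$ filled with degenerate functions, and $\varphi'$ is obtained from $\varphi$ by a move on the edge $\{\nu,\nu^{(l)}\}$; write $\delta$ for the Boolean function with $\sat(\delta)=\{\nu,\nu^{(l)}\}$, which is degenerate since it does not depend on $l$. If the move is a removal then $\varphi=\varphi'\lor\delta$, and this $\lor$ is disjoint because $\nu,\nu^{(l)}\notin\sat(\varphi')$; so $\varphi$ is fragmentable by $\lor$-ing a fresh $\delta$-hole onto the root of $T'$. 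If the move is an addition then $\sat(\varphi)=\sat(\varphi')\setminus\{\nu,\nu^{(l)}\}$, hence $\varphi=\lnot(\lnot\varphi'\lor\delta)$, where the inner $\lor$ is \emph{deterministic} precisely because $\sat(\delta)\subseteq\sat(\varphi')$, i.e.\ $\lnot\varphi'$ and $\delta$ are disjoint; prepending a $\lnot$ to $T'$, $\lor$-ing a fresh $\delta$-hole, and prepending another $\lnot$ exhibits $\varphi$ as fragmentable. This addition case is exactly where negation takes the place of inclusion--exclusion. By symmetry of $\simeq$, fragmentability is a $\simeq$-invariant.

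It remains to prove the combinatorial core (iii): $\eul(\varphi)=0$ implies $\varphi\simeq\bot$. As well-founded measure I use the pair $(\#\varphi,\,m(\varphi))$ ordered lexicographically, where $m(\varphi)$ is the least Hamming distance $|\nu_e\triangle\nu_o|$ over satisfying valuations $\nu_e,\nu_o$ with $|\nu_e|$ even and $|\nu_o|$ odd (this is well defined whenever $\eul(\varphi)=0$ and $\#\varphi>0$, since then both parities occur among the satisfying valuations; and $\#\varphi=0$ is the base case $\varphi=\bot$). If some hypercube edge has both endpoints satisfying, one removal keeps $\eul=0$ and decreases $\#\varphi$ by $2$, so we recurse. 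Otherwise $\sat(\varphi)$ is independent in the hypercube; pick $\nu_e,\nu_o$ realizing $d:=m(\varphi)$ (here $d$ is odd, and $d\geq 3$ by independence), choose distinct $l,l'\in\nu_e\triangle\nu_o$, and set $a:=\nu_e^{(l)}$ and $b:=a^{(l')}=\nu_e\triangle\{l,l'\}$. Then $a\notin\sat(\varphi)$ by independence, and $b\notin\sat(\varphi)$, for otherwise $(b,\nu_o)$ would be an even--odd satisfying pair at distance $d-2$, contradicting independence when $d=3$ and the minimality of $d$ when $d\geq 5$. Now an addition on $\{a,b\}$ followed by a removal on $\{\nu_e,a\}$ turns $\varphi$ into $\varphi''$ with $\sat(\varphi'')=(\sat(\varphi)\setminus\{\nu_e\})\cup\{b\}$: we have $\#\varphi''=\#\varphi$, $\eul(\varphi'')=0$, and the pair $(b,\nu_o)$ shows $m(\varphi'')\leq d-2<d$; so the measure strictly decreases and we recurse. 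This proves (iii), and hence the proposition, since $\eul(\varphi)=0$ yields $\varphi\simeq\bot$, and fragmentability of $\bot$ transports back along $\simeq$ by (i).

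The routine parts are (ii) and the template surgery in (i); the real work is in (iii). The main obstacle there is the ``sliding'' step: one would like to walk a satisfying valuation of one parity two coordinates at a time toward a satisfying valuation of the other parity, but the detour vertices used by the addition move must be non-satisfying, which fails for arbitrary choices. Committing to an even--odd satisfying pair of \emph{minimum} Hamming distance is what resolves this: minimality (together with independence, which is what lets us invoke a removal in the easy case) forces the detour vertex $b$ out of $\sat(\varphi)$, so the addition-then-removal pair is always applicable and strictly decreases the distance, with $\#\varphi$ serving as the outer termination measure.
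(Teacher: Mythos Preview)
Your argument covers only one direction of the equivalence. Facts (i) and (iii) indeed yield $\eul(\varphi)=0 \Rightarrow \varphi$ fragmentable, but nothing in (i)--(iii) gives the converse: (i) says fragmentability is a $\simeq$-invariant, not that every fragmentable function is $\simeq \bot$; and (ii), which you state but never actually invoke, only says $\eul$ is a $\simeq$-invariant. To close the gap you need a separate argument that fragmentability forces $\eul(\varphi)=0$. This is Proposition~\ref{prp:if-fragmentable-then-euler-null} in the paper, proved by a straightforward bottom-up induction on the template using $\eul(\psi)=0$ for $\psi$ degenerate, $\eul(\lnot\psi)=-\eul(\psi)$, and $\eul(\psi\lor\psi')=\eul(\psi)+\eul(\psi')$ for disjoint $\psi,\psi'$. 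The omission is easy to repair, but ``combining (iii) with (i) gives the statement'' is incorrect as written.

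For the direction you do prove, your high-level strategy matches the paper's: define the elementary moves, show $\eul(\varphi)=0\Rightarrow\varphi\simeq\bot$, and transport fragmentability back from $\bot$ (your (i) is exactly the template surgery of Proposition~\ref{prp:if-equiv-bot-then-fragmentable}). The difference is in the combinatorial core (iii). The paper inducts on $\#\varphi$ alone: a fetching lemma (Lemma~\ref{lem:fetch}) locates an even--odd satisfying pair joined by a path whose interior misses $\sat(\varphi)$, and then chainkilling (Lemma~\ref{lem:chaining}) colors and uncolors along that entire path to delete both endpoints in one shot, dropping $\#\varphi$ by $2$. Your lexicographic measure $(\#\varphi,m(\varphi))$ with the minimum-distance sliding step is a correct alternative, essentially an unrolled, one-edge-at-a-time version of the paper's chainswap. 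The paper's packaging is a bit cleaner (a single measure, no case split on independence); yours has the virtue of not needing an auxiliary path-finding lemma.
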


In the rest of this section, we prove Proposition~\ref{prp:if-euler-null-then-fragmentable}.
The idea is the following.
In Section~\ref{sec:transformation}, we define a notion of transformation between Boolean functions.
This transformation defines an equivalence relation (denoted $\simeq$) between all Boolean functions, and we show that for
a Boolean function~$\varphi$, if $\varphi \simeq \bot$ then~$\varphi$ is fragmentable.
Then, in Section~\ref{sec:if-euler-null-then-equiv-bot} we show that $\varphi \simeq \bot$ when $\eul(\varphi)=0$.

\subsection{The transformation}
\label{sec:transformation}

(We remind once again the reader that the set $V=\{0,\ldots,k\}$ of variables is fixed, and that we only consider Boolean functions with variables $V$.)
We start with the definition of our notion of transformation between Boolean functions:

\begin{definition}
\label{def:transformation}
	Let $\varphi, \varphi'$ be Boolean functions, $\nu \subseteq V$ a valuation and $l \in V$ a variable.
	Then we write $\varphi \rewr{$+(\nu,l)$}{4} \varphi'$ whenever $\nu \not\models \varphi$ and $\nu^{(l)} \not\models \varphi$ and
	$\sat(\varphi') = \sat(\varphi) \cup \{\nu, \nu^{(l)}\}$.
	Similarly, we write $\varphi \rewr{$-(\nu,l)$}{4} \varphi'$ whenever $\varphi' \rewr{$+(\nu,l)$}{4} \varphi$, i.e., when $\nu \models \varphi$ and $\nu^{(l)} \models \varphi$ and
	$\sat(\varphi') = \sat(\varphi) \setminus \{\nu, \nu^{(l)}\}$.
	We will also use the following auxiliary relations:
	\begin{itemize}
		\item we write $\varphi \rewr{$\pm(\nu,l)$}{4} \varphi'$ when $\varphi \rewr{$+(\nu,l)$}{4} \varphi'$ or $\varphi \rewr{$-(\nu,l)$}{4} \varphi'$;
		\item we write $\varphi \rewr{$+$}{4} \varphi'$ when $\varphi \rewr{$+(\nu,l)$}{4} \varphi'$ for some $\nu,l$, and similarly for $\varphi \rewr{$-$}{4} \varphi'$ and
			$\varphi \rewr{$\pm$}{4} \varphi'$;
		\item we write, $\rewr{$+$}{4}^*$, $\rewr{$-$}{4}^*$ and $\rewr{$\pm$}{4}^*$ for the reflexive
			transitive closures thereof.
	\end{itemize}
\end{definition}

It is clear from the definitions that $\rewr{$\pm$}{4}^*$ is symmetric,
so let us write $\simeq$ the induced equivalence relation.
Next, we explain how to visually understand this transformation. 

\begin{definition}
\label{def:induced-subgraph}
	Let $\mathbf{G}_V$ be the undirected graph with node set~$2^V$ and edge set $\{\{\nu,\nu^{(l)}\} \mid \nu \subseteq V, l\in V\}$.
	Let~$\varphi$ be a Boolean function. We define the colored graph $\mathbf{G}_V[\varphi]$ to be the graph $\mathbf{G}_V$ where we have colored every satisfying valuation
	of~$\varphi$.
\end{definition}

\begin{example}
	Consider again the function $\varphi_9$ from Example~\ref{expl:q9} (here $V=\{0,1,2,3\})$
	The graph $\mathbf{G}_V[\varphi_9]$ is shown in Figure~\ref{fig:G_V-q9} (ignore for now the fact that some edges are dashed and green).
\end{example}

\begin{figure}
\centering
	\begin{scaletikzpicturetowidth}{\linewidth}
	\begin{tikzpicture}[scale=\tikzscale]
	\tikzset{nodestyle/.style={draw,rectangle}}
	% Generated with the python script in ../code/generate_tikz_for_posets.py, using the file ../code/functions/q9-poset
% AND THEN MODIFIED BY HAND TO COLOR A PATH IN GREEN TO ILLUSTRATE CHAINSWAPPING LATER
% SO DO NOT DELETE THIS FILE

 ===== DRAWING NODES ====

\node[nodestyle] (emptyset) at (0.0, 0.0) {$\emptyset$};
\node[nodestyle] (0) at (-2.85, 1.3) {$\{0\}$};
\node[nodestyle] (1) at (-0.95, 1.3) {$\{1\}$};
\node[nodestyle] (2) at (0.95, 1.3) {$\{2\}$};
\node[nodestyle] (3) at (2.85, 1.3) {$\{3\}$};
\node[nodestyle] (01) at (-4.75, 2.6) {$\{0, 1\}$};
\node[nodestyle] (02) at (-2.85, 2.6) {$\{0, 2\}$};
\node[nodestyle,fill=orange] (03) at (-0.95, 2.6) {$\{0, 3\}$};
\node[nodestyle] (12) at (0.95, 2.6) {$\{1, 2\}$};
\node[nodestyle,fill=orange] (13) at (2.85, 2.6) {$\{1, 3\}$};
\node[nodestyle,fill=orange] (23) at (4.75, 2.6) {$\{2, 3\}$};
\node[nodestyle,fill=orange] (012) at (-2.85, 3.9) {$\{0, 1, 2\}$};
\node[nodestyle,fill=orange] (013) at (-0.95, 3.9) {$\{0, 1, 3\}$};
\node[nodestyle,fill=orange] (023) at (0.95, 3.9) {$\{0, 2, 3\}$};
\node[nodestyle,fill=orange] (123) at (2.85, 3.9) {$\{1, 2, 3\}$};
\node[nodestyle,fill=orange] (0123) at (0.0, 5.2) {$\{0, 1, 2, 3\}$};

 ===== DRAWING EDGES ====

\draw[black,thick] (emptyset) -- (0);
\draw[black,thick] (emptyset) -- (1);
\draw[black,thick] (emptyset) -- (2);
\draw[black,thick] (emptyset) -- (3);
\draw[black,thick,draw=black!50!green,dashed] (0) -- (01);
\draw[black,thick] (0) -- (02);
\draw[black,thick] (0) -- (03);
\draw[black,thick,draw=black!50!green,dashed] (1) -- (01);
\draw[black,thick,draw=black!50!green,dashed] (1) -- (12);
\draw[black,thick] (1) -- (13);
\draw[black,thick] (2) -- (02);
\draw[black,thick] (2) -- (12);
\draw[black,thick] (2) -- (23);
\draw[black,thick] (3) -- (03);
\draw[black,thick] (3) -- (13);
\draw[black,thick] (3) -- (23);
\draw[black,thick] (01) -- (012);
\draw[black,thick] (01) -- (013);
\draw[black,thick] (02) -- (012);
\draw[black,thick] (02) -- (023);
\draw[black,thick] (03) -- (013);
\draw[black,thick] (03) -- (023);
\draw[black,thick,draw=black!50!green,dashed] (12) -- (012);
\draw[black,thick] (12) -- (123);
\draw[black,thick] (13) -- (013);
\draw[black,thick] (13) -- (123);
\draw[black,thick] (23) -- (023);
\draw[black,thick] (23) -- (123);
\draw[black,thick] (012) -- (0123);
\draw[black,thick] (013) -- (0123);
\draw[black,thick] (023) -- (0123);
\draw[black,thick] (123) -- (0123);
	\end{tikzpicture}
	\end{scaletikzpicturetowidth}
	\caption{The colored graph $\mathbf{G}_V[\varphi_9]$.
	Note that it is different from $L_\cnf^{\varphi_9}$ (Figure~\ref{fig:lattice}).
	In particular, the semantics of the nodes is not the same.}
\label{fig:G_V-q9}
\end{figure}
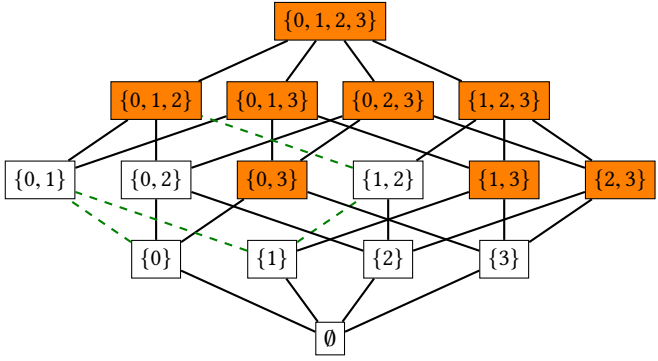

Then, we have $\varphi \rewr{$\pm$}{4}^* \varphi'$ whenever we can go from $\mathbf{G}_V[\varphi]$ to $\mathbf{G}_V[\varphi']$ 
by iteratively (1) coloring two uncolored adjacent nodes; or (2) uncoloring two adjacent colored nodes.
In particular, observe that this transformation never changes the Euler characteristic.
We illustrate the transformation in Figure~\ref{fig:chainswap} (ignore for now the last sentence in the description).

We can now show that if $\varphi \simeq \bot$, then~$\varphi$ is fragmentable:

\begin{proposition}
\label{prp:if-equiv-bot-then-fragmentable}
	If $\varphi \simeq \bot$ then~$\varphi$ is fragmentable.
\end{proposition}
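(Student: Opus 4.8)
The plan is to show that the class of fragmentable Boolean functions is closed under the elementary transformation steps $\rewr{$+(\nu,l)$}{4}$ and $\rewr{$-(\nu,l)$}{4}$, and then to observe that $\bot$ itself is fragmentable (indeed it is degenerate, since it depends on no variable, so by the remark after Definition~\ref{def:fragmentablility} it is fragmentable). Combined, these two facts immediately give the proposition: if $\varphi \simeq \bot$ then there is a finite chain $\bot \rewr{$\pm$}{4} \cdots \rewr{$\pm$}{4} \varphi$, and by closure each function along the chain — in particular $\varphi$ — is fragmentable.

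So the heart of the proof is a single lemma: \emph{if $\varphi$ is fragmentable and $\varphi \rewr{$\pm(\nu,l)$}{4} \varphi'$, then $\varphi'$ is fragmentable.} By symmetry of $\rewr{$\pm$}{4}^*$ it suffices to handle the case $\varphi \rewr{$+(\nu,l)$}{4} \varphi'$, i.e. $\sat(\varphi') = \sat(\varphi) \sqcup \{\nu,\nu^{(l)}\}$ with $\nu,\nu^{(l)} \notin \sat(\varphi)$. Let $T[\varphi_0,\ldots,\varphi_n]$ be a deterministic realization of $\varphi$ with all $\varphi_i$ degenerate. The key observation is that the ``extra'' piece $\psi \defeq \varphi_\nu \lor \varphi_{\nu^{(l)}}$ (in the notation of Appendix~\ref{apx:fink2016dichotomies}, where $\sat(\varphi_\nu) = \{\nu\}$) is itself degenerate: it depends on no variable other than those in $V \setminus \{l\}$ that take a fixed value, and crucially it does not depend on $l$ (flipping $l$ swaps $\nu$ and $\nu^{(l)}$, both satisfying $\psi$). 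Moreover $\psi$ is disjoint from $\varphi$ by construction. Hence I would form the new template $T' \defeq T'' \lor l_{n+1}$ where $T''$ is $T$ with its root, and plug $\varphi_0,\ldots,\varphi_n$ into the holes of $T$ and $\psi$ into the new hole $l_{n+1}$. Then $T'[\varphi_0,\ldots,\varphi_n,\psi] = \varphi \lor \psi = \varphi'$, every $\lor$-gate inside $T$ is still deterministic (its inputs are unchanged), and the new top $\lor$-gate is deterministic because $\varphi$ and $\psi$ are disjoint. All the plugged functions are degenerate, so $\varphi'$ is fragmentable.

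The main obstacle — and the only place one has to be slightly careful — is the definition of ``deterministic'' for an $\lnot$-$\lor$-template: Definition~\ref{def:fragmentablility} requires every $\lor$-gate \emph{in the template} to be deterministic, and a template's root need not be an $\lor$-gate (it could be a leaf, or a $\lnot$-gate). If the root of $T$ is already an $\lor$-gate, I can simply add $\psi$ as one more input to it, and determinism of this gate is preserved precisely because $\psi$ is disjoint from the Boolean function captured by $T$, i.e. from $\varphi$ (the other inputs of the root, being inputs of a gate of the original deterministic template, are pairwise disjoint, and each is implied by $\varphi$, hence disjoint from $\psi$). If the root of $T$ is a leaf or a $\lnot$-gate, I instead create a fresh $\lor$-gate with two inputs, the old root and the new hole $l_{n+1} \mapsto \psi$; the only $\lor$-gate to check is this new one, and it is deterministic since $\varphi$ and $\psi$ are disjoint. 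A second small point to verify is that $\psi = \varphi_\nu \lor \varphi_{\nu^{(l)}}$ really is degenerate: it has exactly two satisfying valuations differing only in coordinate $l$, so it does not depend on $l$, hence $\dep(\psi) \neq V$. With these checks in place the induction goes through and the proposition follows.
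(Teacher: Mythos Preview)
Your overall strategy matches the paper's: start from~$\bot$ (which is degenerate, hence fragmentable), run the chain $\bot \rewr{$\pm$}{4} \cdots \rewr{$\pm$}{4} \varphi$, and show that fragmentability is preserved at each step. Your treatment of the $+$ step is correct and is exactly what the paper does (it simply writes $T_i \defeq T_{i-1} \lor l_i$ without the root-type case split, but your more detailed check is fine).

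There is, however, a real gap: the sentence ``By symmetry of $\rewr{$\pm$}{4}^*$ it suffices to handle the case $\varphi \rewr{$+(\nu,l)$}{4} \varphi'$'' is wrong. Symmetry of the relation tells you that $\varphi \rewr{$-$}{4} \varphi'$ is the same as $\varphi' \rewr{$+$}{4} \varphi$, so applying your $+$-lemma only yields ``$\varphi'$ fragmentable $\Rightarrow$ $\varphi$ fragmentable'', which is the wrong direction. You cannot avoid treating the $-$ step separately. Concretely, the chain from~$\bot$ to~$\varphi$ will in general contain genuine $-$ steps that cannot be eliminated: as discussed in Section~\ref{sec:open}, the existence of a chain $\bot \rewr{$+$}{4}^* \varphi$ using only $+$ steps is equivalent to the colored subgraph of $\mathbf{G}_V[\varphi]$ having a perfect matching, and there are functions with $\varphi \simeq \bot$ for which this fails.

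The fix is short and is precisely where the $\lnot$ in ``$\lnot$-$\lor$-template'' earns its keep. If $\varphi \rewr{$-(\nu,l)$}{4} \varphi'$ and $\psi$ is your degenerate two-point function, then $\varphi' = \lnot(\lnot \varphi \lor \psi)$, and this disjunction is deterministic because $\nu,\nu^{(l)} \models \varphi$ forces $\lnot\varphi \land \psi = \bot$. Taking $T_i \defeq \lnot(\lnot T_{i-1} \lor l_i)$ gives the witness for $\varphi'$. This is exactly the paper's argument; once you add this case your proof is complete and identical to it.
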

\begin{proof}
	For some $n \in \mathbb{N}$, we have $\bot = \varphi_0 \rewr{$\pm(\nu_1,l_1)$}{4} \cdots \rewr{$\pm(\nu_n,l_n)$}{4} \varphi_n = \varphi$ for some valuations and variables $\nu_i,l_i$.
	We show by induction on $0 \leq i \leq n$ that~$\varphi_i$ is fragmentable by constructing a~$\lnot$-$\lor$-template~$T_i$ having~$i+1$ holes~$l_0,\ldots,l_i$ and degenerate
	Boolean functions~$\psi_j$ for~$0 \leq j \leq i$ such that~$T_i[\psi_0,\ldots,\psi_i]$ is deterministic and represents~$\varphi_i$.
	The base case of $i=0$ is trivial since~$\bot$ is degenerate, hence fragmentable.
	For the inductive case, supose $i > 0$ with $\varphi_{i-1}$ fragmentable, and let~$T_{i-1}$ and~$\psi_0,\ldots,\psi_{i-1}$ be a template and degenerate functions
	witnessing that~$\varphi_{i-1}$ is fragmentable.
	Define~$\psi_i$ by $\sat(\psi_i) \defeq \{\nu_i,\nu_i^{(l_i)}\}$.
	Surely~$\psi_i$ is degenerate (it does not depend on variable~$l_i$).
	We then distinguish the two possible cases:
	\begin{itemize}
		\item we have $\varphi_{i-1} \rewr{$+(\nu_i,l_i)$}{4} \varphi_i$.
			Then we can write~$\varphi_i$ as the deterministic disjunction $\varphi_{i-1} \lor \psi_i $.
			But then we can define~$T_i$ to be the template~$T_i \defeq T_{i-1} \lor l_i$, 
			and one can easily check that~$T_i[\psi_0,\ldots,\psi_i]$ is deterministic and represents~$\varphi_i$.
			Therefore,~$\varphi_i$ is indeed fragmentable.
		\item we have $\varphi_{i-1} \rewr{$-(\nu_i,l_i)$}{4} \varphi_i$.
			Then we can write $\varphi_i$ as $\lnot(\lnot\varphi_{i-1} \lor \psi_i)$, with the disjunction being deterministic.
			But then we can define~$T_i$ to be the template~$T_i \defeq \lnot(\lnot T_{i-1} \lor l_i)$, and once again
			it is direct that~$T_i[\psi_0,\ldots,\psi_i]$ is deterministic and represents~$\varphi_i$.
			Hence~$\varphi_i$ is fragmentable.\qedhere
	\end{itemize}
\end{proof}

\subsection{If $\eul(\varphi)=0$ then $\varphi \simeq \bot$}
\label{sec:if-euler-null-then-equiv-bot}

The missing piece to show Proposition~\ref{prp:if-euler-null-then-fragmentable} is to prove that whenever $\eul(\varphi)=0$ then~$\varphi$ is
equivalent to~$\bot$.
This is what we do in this section.
Formally:

\begin{proposition}
\label{prp:if-euler-null-then-equiv-bot}
	If $\eul(\varphi)=0$ then $\varphi \simeq \bot$.
\end{proposition}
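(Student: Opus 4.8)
The plan is to argue directly in the colored-hypercube picture of Definition~\ref{def:induced-subgraph}: for each $\varphi$ with $\eul(\varphi)=0$ I will exhibit an explicit sequence of elementary moves (coloring or uncoloring two adjacent nodes of $\mathbf{G}_V$) leading from $\mathbf{G}_V[\varphi]$ to $\mathbf{G}_V[\bot]$. Two facts are worth recalling first: $\mathbf{G}_V$ is bipartite, the two sides being the valuations of even cardinality and those of odd cardinality (every edge flips one variable, hence changes the parity of the cardinality); and if $\eul(\varphi)=0$ with $\varphi\neq\bot$, then $\varphi$ has at least one satisfying valuation of each parity, since $\eul(\varphi)$ equals the number of satisfying valuations of even cardinality minus the number of those of odd cardinality. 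I will proceed by induction on the potential $\Phi(\varphi)\defeq(\#\varphi,\,d(\varphi))$, ordered lexicographically, where $d(\varphi)$ is the minimum, over all pairs formed by an even-cardinality satisfying valuation and an odd-cardinality one, of their distance in $\mathbf{G}_V$; note that $d(\varphi)$ is well-defined whenever $\#\varphi>0$ and is odd since $\mathbf{G}_V$ is bipartite. The base case $\#\varphi=0$ is just $\varphi=\bot$.

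For the inductive step, assume $\#\varphi>0$, fix a satisfying even valuation $\mu$ and a satisfying odd valuation $\mu'$ with $\mathrm{dist}(\mu,\mu')=d(\varphi)=:d$, and fix a shortest path $\mu=p_0,p_1,\dots,p_d=\mu'$ in $\mathbf{G}_V$. If $d=1$, then $\mu$ and $\mu'$ are adjacent and both satisfying, so uncoloring them is a legal move $\varphi\rewr{$-(\mu,l)$}{4}\varphi'$ (with $l$ the variable such that $p_1=\mu^{(l)}$); this keeps $\eul$ equal to $0$ and decreases $\#\varphi$ by $2$, so by induction $\varphi'\simeq\bot$, hence $\varphi\simeq\bot$. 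If $d\geq3$, I perform a two-step \emph{slide}: first color the adjacent pair $\{p_1,p_2\}$, then uncolor the adjacent pair $\{p_0,p_1\}=\{\mu,p_1\}$; in the notation of Definition~\ref{def:transformation}, $\varphi\rewr{$+(p_1,l')$}{4}\psi\rewr{$-(\mu,l)$}{4}\varphi'$ with $\sat(\varphi')=(\sat(\varphi)\setminus\{\mu\})\cup\{p_2\}$, where $p_1=\mu^{(l)}$ and $p_2=p_1^{(l')}$ for the appropriate variable $l'$. I then check that $\Phi$ strictly decreases: $\#\varphi'=\#\varphi$ and $\eul(\varphi')=\eul(\varphi)=0$ (one even valuation $\mu$ was removed and one valuation $p_2$ of the same parity was added), while the pair $(p_2,\mu')$ witnesses $d(\varphi')\leq d-2<d$; so $\Phi(\varphi')<\Phi(\varphi)$ and we conclude once more by induction.

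The step I expect to be the crux is checking that the slide is actually legal, i.e., that the two auxiliary nodes $p_1,p_2$ are uncolored in $\mathbf{G}_V[\varphi]$ (which is also what makes $\sat(\varphi')$ come out as stated). This is exactly where the minimality in the definition of $d(\varphi)$ enters: $p_1$ has odd cardinality and lies at distance $1$ from $\mu$, so if $p_1$ were satisfying it would yield an even/odd satisfying pair at distance $1<d$, contradicting minimality; and $p_2$ has even cardinality and lies at distance $d-2$ from $\mu'$ (distances along a shortest path being exact), so if $p_2$ were satisfying it would yield such a pair at distance $d-2<d$, again a contradiction. I also expect to point out that this is precisely why the transformation must allow both additions and removals: with removals only, one cannot get past a configuration whose satisfying valuations form an independent set of $\mathbf{G}_V$, and with additions only, $\#\varphi$ can never decrease; this echoes the claim, announced in the introduction, that rule (1) or rule (2) alone does not suffice. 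Termination is automatic, since $\Phi$ ranges over a well-ordered set and strictly decreases at each step until $\bot$ is reached; and as $\simeq$ is the equivalence relation generated by $\rewr{$\pm$}{4}$, the exhibited sequence of moves witnesses $\varphi\simeq\bot$.
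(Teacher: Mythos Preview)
Your proof is correct and is essentially the same argument as the paper's, just organized differently. The paper factors the work into two lemmas: the \emph{fetching lemma} (Lemma~\ref{lem:fetch}) locates two satisfying valuations of opposite parity with only non-satisfying nodes on a path between them, and the \emph{chaining lemma} (Lemma~\ref{lem:chaining}, chainkilling) then uncolors both endpoints by a sequence of $+$/$-$ moves along that path; the main induction is simply on $\#\varphi$. Your two-move ``slide'' is exactly one iteration of chainkilling, and you compensate for not doing the whole path at once by refining the induction to the lexicographic potential $(\#\varphi,\,d(\varphi))$. Your use of the \emph{global} minimum-distance pair to certify that $p_1,p_2$ are uncolored plays the same role as the $\max$/$\min$ index trick inside the paper's fetching lemma. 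The net effect is that you have inlined the two lemmas into a single self-contained argument; nothing is gained or lost mathematically.
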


In order to do that, we need two simple lemmas, which we will also reuse in the next section.
The first is what we call the \emph{chaining lemma}:

\begin{lemma}[(Chaining lemma)]
\label{lem:chaining}
	Let~$\varphi$ be a Boolean function, and $\nu \neq \nu'$ be two valuations such that there is a simple path
	$\nu = \nu_0 - \cdots - \nu_{n+1} = \nu'$ from $\nu$ to $\nu'$ in 
	$\mathbf{G}_V$ with $n \geq 0$ and $\nu_i \notin \sat(\varphi)$ for $1 \leq i \leq n$.
	Then we have the following:
	\begin{description}
		\item[\bf Chainkilling.] If $(-1)^{|\nu|} \neq (-1)^{|\nu'|}$ (i.e., $n$ is even) and $\{\nu,\nu'\} \subseteq \sat(\varphi)$ then,
			defining~$\varphi'$ by $\sat(\varphi') \defeq \sat(\varphi) \setminus \{\nu,\nu'\}$, we have $\varphi \rewr{$\pm$}{4}^* \varphi'$.
			In other words, we can uncolor both $\nu$ and $\nu'$.
			We say that \emph{we have chainkilled~$\nu$ and~$\nu'$}.
		\item[\bf Chainswapping.] If $(-1)^{|\nu|} = (-1)^{|\nu'|}$ (i.e., $n$ is odd) and $\nu \in \sat(\varphi)$ and $\nu' \notin \sat(\varphi)$ then,
			defining~$\varphi'$ by $\sat(\varphi') \defeq (\sat(\varphi) \setminus \{\nu\}) \cup \{\nu'\}$, we have $\varphi \rewr{$\pm$}{4}^* \varphi'$.
			In other words, we can uncolor $\nu$ and color~$\nu'$.
			We say that \emph{we have chainswapped~$\nu$ to~$\nu'$}.
	\end{description}
\end{lemma}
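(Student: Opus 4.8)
The plan is to establish \textbf{Chainswapping} first and then obtain \textbf{Chainkilling} from it with one extra elementary rewrite. The workhorse is a ``two-step move'' gadget: if $\psi$ is the current function and $\mu_0 - \mu_1 - \mu_2$ are three pairwise distinct vertices of $\mathbf{G}_V$ forming a subpath, with $\mu_0 \in \sat(\psi)$ and $\mu_1, \mu_2 \notin \sat(\psi)$, then first applying $\psi \rewr{$+(\mu_1,l)$}{4} \psi_1$ (where $\mu_2 = \mu_1^{(l)}$; legal since $\mu_1, \mu_2 \notin \sat(\psi)$) and then $\psi_1 \rewr{$-(\mu_0,l')$}{4} \psi_2$ (where $\mu_1 = \mu_0^{(l')}$; legal since $\mu_0, \mu_1 \in \sat(\psi_1)$) yields $\sat(\psi_2) = (\sat(\psi) \setminus \{\mu_0\}) \cup \{\mu_2\}$. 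In words, this shifts a single colored vertex two steps along the path while leaving every other vertex untouched, so $\psi \rewr{$\pm$}{4}^* \psi_2$.

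For \textbf{Chainswapping} I would induct on the odd integer $n$. Given the hypotheses, the only colored vertex among $\nu_0, \ldots, \nu_{n+1}$ is $\nu_0$, so the two-step move applies to $\nu_0 - \nu_1 - \nu_2$ and produces a function in which the only colored path-vertex is $\nu_2$, while $\nu_3, \ldots, \nu_{n+1}$ are untouched. This is exactly the Chainswapping configuration on the simple subpath $\nu_2 - \cdots - \nu_{n+1}$, which has the odd number $n-2$ of interior vertices (and the parity condition still holds, since this subpath has $n-1$ edges). The base case $n = 1$ is a single two-step move, which already sends the color from $\nu_0$ to $\nu_2 = \nu'$; the induction then gives $\varphi \rewr{$\pm$}{4}^* \varphi'$ with $\sat(\varphi') = (\sat(\varphi) \setminus \{\nu\}) \cup \{\nu'\}$. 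One checks, using that the path is simple, that at each intermediate step the precondition of the applied $\rewr{$+$}{4}$ or $\rewr{$-$}{4}$ is met and that vertices off the path are never affected; the hypothesis $(-1)^{|\nu|} = (-1)^{|\nu'|}$ is precisely what makes the successive two-step moves terminate at $\nu_{n+1}$ rather than overshoot it.

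For \textbf{Chainkilling}, $n$ is even. When $n = 0$ the path is a single edge $\nu - \nu'$ with both endpoints colored, so a single $\varphi \rewr{$-(\nu,l)$}{4} \varphi'$ does the job. When $n \geq 2$, I would apply Chainswapping to the simple subpath $\nu_1 - \cdots - \nu_{n+1}$: its interior vertices $\nu_2, \ldots, \nu_n$ are uncolored, its endpoint $\nu_{n+1} = \nu'$ is colored, its endpoint $\nu_1$ is uncolored, and it has $n$ (even) edges so $(-1)^{|\nu_1|} = (-1)^{|\nu_{n+1}|}$; hence we may reach $\psi$ with $\sat(\psi) = (\sat(\varphi) \setminus \{\nu_{n+1}\}) \cup \{\nu_1\}$. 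Crucially $\nu_0$ lies off this subpath, so it stays colored throughout; now $\nu_0$ and $\nu_1$ are adjacent and both colored, so $\psi \rewr{$-(\nu_0,l)$}{4} \varphi'$ with $\sat(\varphi') = \sat(\psi) \setminus \{\nu_0, \nu_1\} = \sat(\varphi) \setminus \{\nu_0, \nu_{n+1}\} = \sat(\varphi) \setminus \{\nu, \nu'\}$, using $\nu_0 \neq \nu_1$ and $\nu_0 \neq \nu_{n+1}$ from simplicity of the original path. Composing gives $\varphi \rewr{$\pm$}{4}^* \varphi'$, as required.

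I expect no conceptual difficulty here; the main (routine) obstacle is the bookkeeping of colors, i.e.\ verifying at every elementary step that the two endpoints of the edge being rewritten are in the state ($+$: both uncolored; $-$: both colored) demanded by Definition~\ref{def:transformation}. All of this follows mechanically from tracking which $\nu_i$ is currently colored, together with the fact that the path is simple so that the $\nu_i$ are pairwise distinct.
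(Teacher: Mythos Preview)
Your proof is correct and follows essentially the same idea as the paper: both arguments rest on the elementary ``two-step move'' (color an uncolored adjacent pair, then uncolor an overlapping colored pair) iterated along the path. The only difference is organizational --- the paper proves \textbf{Chainkilling} directly and remarks that \textbf{Chainswapping} is similar, whereas you establish \textbf{Chainswapping} first and derive \textbf{Chainkilling} from it with one final $\rewr{$-$}{4}$; this is a harmless reordering of the same construction.
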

\begin{proof}
	We only explain the chainkilling part, as chainswapping works similarly.
	Let $n=2i$. For $0 \leq j < i$, do the following: color the nodes $\nu_{2j+1}$ and $\nu_{2j+2}$ and then uncolor the nodes
	$\nu_{2j}$ and $\nu_{2j+1}$. Finally, uncolor the nodes~$\nu_{2i}$ and~$\nu_{2i+1}$.
	(Or, alternatively, first color all the nodes on the path and then uncolor everything.)
\end{proof}

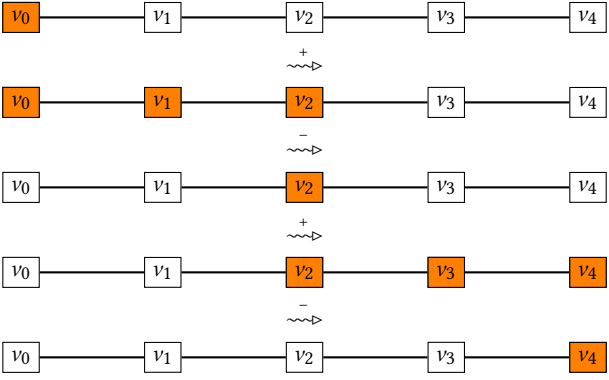
\begin{figure}
\centering
\begin{scaletikzpicturetowidth}{\linewidth}
\begin{tikzpicture}[scale=\tikzscale]	
\tikzset{nodestyle/.style={draw,rectangle}}
\foreach \i [evaluate=\i as \ii using \i*0.6] in {0,1,2,3,4} {
\foreach \j in {0,1,2,3,4} {
	\node[name=n\i\j,nodestyle] at (\j, \ii) {$\nu_\j$};
		}
	\foreach \k/\l in {0/1,1/2,2/3,3/4} {
			\draw[black, thick] (n\i\k) -- (n\i\l);
			}
			}

\foreach \i [evaluate=\i as \ii using \i*0.6] in {0,1,2,3} {
	\ifodd\i
	\node at (2,\ii+0.3) {$\rewr{$+$}{3}$};
	\else
	\node at (2,\ii+0.3) {$\rewr{$-$}{3}$};
	\fi
	}

	\node[nodestyle,fill=orange] at (0, 4*0.6) {$\nu_0$};

	\node[nodestyle,fill=orange] at (0, 3*0.6) {$\nu_0$};
	\node[nodestyle,fill=orange] at (1, 3*0.6) {$\nu_1$};
	\node[nodestyle,fill=orange] at (2, 3*0.6) {$\nu_2$};

	\node[nodestyle,fill=orange] at (2, 2*0.6) {$\nu_2$};

	\node[nodestyle,fill=orange] at (2, 1*0.6) {$\nu_2$};
	\node[nodestyle,fill=orange] at (3, 1*0.6) {$\nu_3$};
	\node[nodestyle,fill=orange] at (4, 1*0.6) {$\nu_4$};

	\node[nodestyle,fill=orange] at (4, 0*0.6) {$\nu_4$};
\end{tikzpicture}
\end{scaletikzpicturetowidth}
	\caption{Imagine that the path at the top is a subgraph of the colored graph $\mathbf{G}_V[\varphi]$ for some function~$\varphi$ (for instance, the dashed green path
	in Figure~\ref{fig:G_V-q9}).
	The consecutive subgraphs are obtained by single steps of our transformation.
	The total transformation also illustrates what we call a chainswap (Definition~\ref{lem:chaining}).}
\label{fig:chainswap}
\end{figure}

Figure~\ref{fig:chainswap} illustrates a chainswap.
In this section we will only need the chainkilling part of the lemma, but we will use chainswapping in the next section.
The second lemma that we need is the following (in this section we only need its instantiation when $\eul(\varphi)=0$), which we call the \emph{fetching lemma}:

\begin{lemma}[(Fetching lemma)]
\label{lem:fetch}
	Let~$\varphi$ be such that $\#\varphi \neq |\eul(\varphi)|$.
	Then there exist satisfying valuations $\nu,\nu'$ of~$\varphi$ with $(-1)^{|\nu|} \neq (-1)^{|\nu'|}$ and a simple path
	$\nu = \nu_0 - \cdots - \nu_{n+1} = \nu'$ from $\nu$ to $\nu'$ in 
	$\mathbf{G}_V$ (hence with $n$ even) such that $\nu_i \notin \sat(\varphi)$ for $1 \leq i \leq n$.
\end{lemma}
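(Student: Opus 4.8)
The plan is to first translate the numerical hypothesis $\#\varphi \neq |\eul(\varphi)|$ into a purely combinatorial statement about the levels of the hypercube, and then to extract the required path by a direct connectivity argument in $\mathbf{G}_V$. For the first part, let $a$ (resp.\ $b$) be the number of satisfying valuations of $\varphi$ of even (resp.\ odd) size, so that $\#\varphi = a+b$ and $\eul(\varphi) = a-b$, hence $|\eul(\varphi)| = |a-b|$. Since $a,b \geq 0$, one has $a+b = |a-b|$ if and only if $a=0$ or $b=0$; therefore $\#\varphi \neq |\eul(\varphi)|$ is equivalent to $a \geq 1$ \emph{and} $b \geq 1$, i.e., to the existence of a satisfying valuation of even size and one of odd size. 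I would fix two such valuations $\mu, \mu'$.

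For the second part, recall that $\mathbf{G}_V$ is the Boolean hypercube, which is connected, and that every one of its edges $\{\sigma,\sigma^{(l)}\}$ joins valuations of opposite parity. I would take a simple path $\mu = \sigma_0 - \sigma_1 - \cdots - \sigma_m = \mu'$ in $\mathbf{G}_V$ (so $m$ is odd), and then list the indices $0 = i_0 < i_1 < \cdots < i_r = m$ of those $\sigma_{i_j}$ that satisfy $\varphi$; this list contains both endpoints, and $r \geq 1$. Since $\sigma_{i_0} = \mu$ has even size and $\sigma_{i_r} = \mu'$ has odd size, there is an index $j$ with $(-1)^{|\sigma_{i_j}|} \neq (-1)^{|\sigma_{i_{j+1}}|}$. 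Setting $\nu := \sigma_{i_j}$, $\nu' := \sigma_{i_{j+1}}$, and considering the sub-path $\nu = \sigma_{i_j} - \sigma_{i_j+1} - \cdots - \sigma_{i_{j+1}} = \nu'$, one checks directly that it satisfies all the requirements of the lemma: it is simple as a sub-path of a simple path; its interior vertices $\sigma_{i_j+1},\ldots,\sigma_{i_{j+1}-1}$ avoid $\sat(\varphi)$, since otherwise one of them would appear in the list strictly between $i_j$ and $i_{j+1}$; its endpoints are satisfying valuations of opposite parity, so $\nu \neq \nu'$ and the number $n := i_{j+1}-i_j-1$ of interior vertices is even (each edge flips parity, so a path with opposite-parity endpoints has odd length).

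I do not expect a real obstacle here: the statement is essentially a bookkeeping lemma, and the only points needing a little care are the elementary arithmetic of the first step (handling the absolute value and the case split on $a=0$ or $b=0$) and the observation that ``consecutive in the list of satisfying vertices along the path'' forces the corresponding sub-path to have a non-satisfying interior. Connectedness of $\mathbf{G}_V$ and the parity-alternation along its edges are immediate, and no finer structure of $\varphi$ is used.
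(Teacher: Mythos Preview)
Your proof is correct and follows essentially the same approach as the paper: both translate the hypothesis into the existence of satisfying valuations of opposite parity, take a simple path between them in the hypercube, and extract a sub-path whose interior avoids $\sat(\varphi)$ and whose endpoints have opposite parity. The only cosmetic difference is in how the sub-path is selected---you list all satisfying vertices along the path and pick a consecutive pair where parity flips, while the paper picks the last satisfying vertex of one parity and then the first of the other parity after it---but these are equivalent bookkeeping choices.
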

\begin{proof}
	Since $\#\varphi \neq |\eul(\varphi)|$, there exist satisfying valuations $\nu'',\nu'''$ of~$\varphi$ with $(-1)^{|\nu''|} \neq (-1)^{|\nu'''|}$.
	Let $\nu'' = \nu''_0 - \cdots - \nu''_{m+1} = \nu''''$ be an arbitrary simple path from $\nu''$ to $\nu'''$ in 
	$\mathbf{G}_V$ (such a path clearly exists because $\mathbf{G}_V$ is a connected graph).
	Now, let $i \defeq \max(0 \leq j \leq m \mid (-1)^{|\nu''_j|} = (-1)^{|\nu''|} \text{ and }\nu''_j \models \varphi)$, and then
	let $i' \defeq \min(i < j \leq m+1 \mid (-1)^{|\nu''_{j}|} = (-1)^{|\nu'''|} \text{ and }\nu''_j \models \varphi)$.
	Then we can take $\nu$ to be $\nu''_i$ and $\nu'$ to be $\nu''_{i'}$, which satisfy the desired property.
\end{proof}

Given the right circumstances ($\#\varphi \neq |\eul(\varphi)|$), this lemma fetches two valuations $\nu,\nu'$ for us to chainkill them.
With these tools in place, it is now easy to prove Proposition~\ref{prp:if-euler-null-then-equiv-bot}:

\begin{proof}[Proof of Proposition~\ref{prp:if-euler-null-then-equiv-bot}]
	Let~$\varphi$ be such that $\eul(\varphi)=0$.
	We prove the claim by induction on $\#\varphi$. When $\#\varphi=0$ then we have $\varphi = \bot$, hence a fortiori $\varphi \simeq \bot$.
	Assume now that $\#\varphi > 0$.
	Lemma~\ref{lem:fetch} fetches two satisfying valuations $\nu,\nu$ of~$\varphi$, which we then chainkill.
	Let~$\varphi'$ be the function obtained. We again have $\eul(\varphi')=0$, because the transformation $\rewr{$\pm$}{4}$ does not change the Euler characteristic.
	Moreover, we have $\#\varphi' = \#\varphi -2$, so that $\varphi' \simeq \bot$ by induction hypothesis, and therefore $\varphi \simeq \bot$ as well.
\end{proof}

Notice how the same valuation could be colored
and uncolored multiple times, in case it appears on the paths of different pairs of valuations that are chainkilled.
Also, and although this is of absolutely no use to us, we note here the amusing fact that an analogue of Proposition~\ref{prp:if-euler-null-then-equiv-bot} works for any connected bipartite colored graph.
Moreover, by inspection of the proofs, we also obtain the following:

\begin{corollary}
\label{cor:computable}
	It is computable, given a fragmentable Boolean function~$\varphi$, to find a $\lnot$-$\lor$-template $T$
	and degenerate Boolean functions $\varphi_0,\ldots,\varphi_n$ such that $T[\varphi_0,\ldots,\varphi_n]$
	if deterministic and equivalent to~$\varphi$.
\end{corollary}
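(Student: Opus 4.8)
The plan is to observe that every ingredient in the chain of implications leading from $\eul(\varphi)=0$ to fragmentability is effective, and to assemble them into an explicit algorithm. Since $V=\{0,\dots,k\}$ is fixed and finite, a Boolean function on $V$ is a finite object — say, its set $\sat(\varphi)$ of satisfying valuations, equivalently its truth table — so here "computable" means: a procedure that manipulates such truth tables and terminates. The two phases of the construction mirror the proofs of Proposition~\ref{prp:if-euler-null-then-equiv-bot} and Proposition~\ref{prp:if-equiv-bot-then-fragmentable}.

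First I would note that, by Proposition~\ref{prp:if-fragmentable-then-euler-null}, the hypothesis that $\varphi$ is fragmentable already guarantees $\eul(\varphi)=0$ (indeed $\eul(\varphi)$, a finite signed sum over $\sat(\varphi)$, is trivially computable, so via Corollary~\ref{cor:euler-null-iff-fragmentable} one could even decide fragmentability up front). Then I would turn the proof of Proposition~\ref{prp:if-euler-null-then-equiv-bot} into an algorithm producing an explicit witnessing chain $\bot = \varphi_0 \rewr{$\pm(\nu_1,l_1)$}{4} \cdots \rewr{$\pm(\nu_n,l_n)$}{4} \varphi_n = \varphi$. That proof is an induction on $\#\varphi$ which, at each step, calls the fetching lemma (Lemma~\ref{lem:fetch}) and then chainkills. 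The fetching lemma is effective: scan all pairs of satisfying valuations of $\varphi$ to find two of opposite parity (such a pair exists since $\#\varphi \neq |\eul(\varphi)| = 0$), pick any simple path between them in the finite graph $\mathbf{G}_V$ by breadth-first search, and extract the sub-path delimited by the stated $\max$ and $\min$ — all finite computations. Chainkilling along a path of even length is, by the proof of Lemma~\ref{lem:chaining}, a fixed finite sequence of elementary $\rewr{$\pm$}{4}$ moves (coloring/uncoloring pairs of adjacent nodes), so it yields an explicit finite list of moves. Running this $\#\varphi/2$ times produces an explicit sequence of elementary moves from $\varphi$ down to $\bot$; reversing it (legitimate because $\rewr{$\pm$}{4}^*$ is symmetric, each elementary $+$ move being the inverse of a $-$ move on the same $(\nu,l)$ and vice versa) gives the desired chain.

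Finally I would turn the proof of Proposition~\ref{prp:if-equiv-bot-then-fragmentable} into the second phase: read the chain left to right while maintaining a $\lnot$-$\lor$-template $T_i$ and degenerate functions $\psi_0,\dots,\psi_i$. Initialize with the single-leaf template $T_0 \defeq l_0$ and $\psi_0 \defeq \bot$. For the $i$-th move $\varphi_{i-1}\rewr{$\pm(\nu_i,l_i)$}{4}\varphi_i$, set $\sat(\psi_i)\defeq\{\nu_i,\nu_i^{(l_i)}\}$ (degenerate, since it does not depend on $l_i$), and put $T_i \defeq T_{i-1}\lor l_i$ if the move is a $+$, or $T_i \defeq \lnot(\lnot T_{i-1}\lor l_i)$ if it is a $-$; finally output $T\defeq T_n$ together with $\psi_0,\dots,\psi_n$. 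The (inductive) correctness argument already given in the proof of Proposition~\ref{prp:if-equiv-bot-then-fragmentable} shows that $T[\psi_0,\dots,\psi_n]$ is deterministic and equivalent to $\varphi$, and every operation performed is manifestly computable.

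I do not expect a genuine obstacle here: the only point requiring care is to phrase "computable" over the right finite data (truth tables, say, rather than CNF representations, to sidestep any minimization subtleties) and to double-check that the fetching lemma's path search together with its parity bookkeeping, and the chaining lemma's move sequence, are truly finitary — which they are, as $\mathbf{G}_V$ has only $2^{k+1}$ vertices. One could, if desired, additionally extract a crude complexity bound from this procedure, but that is not needed for the statement.
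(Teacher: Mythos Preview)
Your proposal is correct and follows exactly the paper's approach: the paper simply states that the corollary follows ``by inspection of the proofs'' (of Propositions~\ref{prp:if-euler-null-then-equiv-bot} and~\ref{prp:if-equiv-bot-then-fragmentable}), and you have faithfully unpacked what that inspection amounts to. Your explicit description of the two phases, the effectiveness of the fetching and chaining lemmas, and the inductive template construction are precisely the content hidden behind the paper's one-line justification.
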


This concludes the upper-bound results of this paper.
In the next section, we reuse the tools that we have developed so far to carry a detailed analysis of the remaining cases ($\eul(\varphi) \neq 0$).

\section{Equivalences and Hardness}
  \label{sec:hardness}
  The results of the last section imply that, for any two Boolean functions $\varphi,\varphi'$ with $\eul(\varphi)=\eul(\varphi')=0$, we have $\varphi \simeq \varphi'$.
Indeed, by Proposition~\ref{prp:if-euler-null-then-equiv-bot} we have $\varphi \simeq \bot$ and $\varphi' \simeq \bot$, implying $\varphi \simeq \varphi'$.
In fact, we can show a more general statement about our transformation. We claim:

\begin{proposition}
\label{prp:same-euler-iff-simeq}
	We have $\eul(\varphi)=\eul(\varphi')$ if and only if $\varphi \simeq \varphi'$.
\end{proposition}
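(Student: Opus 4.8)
The plan is to prove the two directions separately, the forward one being immediate. If $\varphi\simeq\varphi'$: a single step $\varphi\rewr{$\pm(\nu,l)$}{4}\varphi'$ adds or removes from $\sat$ the pair $\{\nu,\nu^{(l)}\}$, whose two members have opposite size-parities, so $(-1)^{|\nu|}+(-1)^{|\nu^{(l)}|}=0$ and $\eul$ is unchanged; hence $\eul$ is invariant along $\rewr{$\pm$}{4}^*$ (this was already observed in Section~\ref{sec:transformation}), giving $\eul(\varphi)=\eul(\varphi')$. For the converse, I would prove that every $\varphi$ is $\simeq$-equivalent to a canonical function $\chi_c$ depending only on $c\defeq\eul(\varphi)$; then $\eul(\varphi)=\eul(\varphi')$ yields $\varphi\simeq\chi_c\simeq\varphi'$. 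This proceeds in two stages.

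\emph{Stage 1 (shrink the set of satisfying valuations).} Run the argument of Proposition~\ref{prp:if-euler-null-then-equiv-bot}, but using Lemma~\ref{lem:fetch} in full generality rather than only its $\eul=0$ instance: as long as $\#\varphi\neq|\eul(\varphi)|$, the fetching lemma gives two satisfying valuations of opposite parity joined by a path avoiding $\sat(\varphi)$, which the chaining lemma (chainkilling) removes, decreasing $\#\varphi$ by $2$ and leaving $\eul(\varphi)$ fixed. Iterating, $\varphi\simeq\psi$ with $\#\psi=|c|$; this is consistent since $|\eul(\varphi)|\le 2^{k}$ always, being at most the larger of the numbers of even- and odd-sized satisfying valuations. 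Since $|\eul(\psi)|=\#\psi$ now holds, all satisfying valuations of $\psi$ share one size-parity $p$ (even if $c>0$, odd if $c<0$; and if $c=0$ then $\psi=\bot$ and we are done).

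\emph{Stage 2 (rearrange the satisfying valuations).} It remains to show that any two Boolean functions with exactly $m\defeq|c|$ satisfying valuations, all of a fixed size-parity $p$, are $\simeq$. The key observation is that a \emph{length-two} chainswap is unconditionally available: for parity-$p$ valuations $\nu,\mu$ at Hamming distance $2$, say $\mu=(\nu^{(a)})^{(b)}$, with $\nu\models\psi$ and $\mu\not\models\psi$, the path $\nu-\nu^{(a)}-\mu$ has its sole internal vertex $\nu^{(a)}$ of parity $1-p$, hence outside $\sat(\psi)$ regardless of what else is currently satisfied; so Lemma~\ref{lem:chaining} lets us slide the satisfying valuation from $\nu$ to $\mu$, the result still having $m$ satisfying valuations, all of parity $p$. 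Thus passing between two such functions is exactly token-sliding reconfiguration of $m$ indistinguishable tokens on the graph $H_p$ whose vertices are the parity-$p$ valuations and whose edges join those at Hamming distance $2$. This graph $H_p$ is connected (it is $K_2$ for $k=1$, $K_4$ for $k=2$, and connected in general), and $1\le m\le|V(H_p)|=2^{k}$, so any $m$-token configuration is reachable from any other — a standard fact about token sliding, provable by induction on the size of the symmetric difference with the target (route a misplaced token along a path of $H_p$, temporarily displacing and then restoring the tokens it meets). Fixing once and for all a canonical $m$-subset $S_c\subseteq V(H_p)$ and letting $\chi_c$ be the function with $\sat(\chi_c)=S_c$, we obtain $\psi\simeq\chi_c$, which completes the converse.

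The main obstacle is Stage 2 — more precisely, isolating the right ``atomic'' move. An arbitrary chainswap can be blocked by the currently satisfied valuations, which, when $m$ is large compared with $k$, may separate its two endpoints; the distance-$2$ chainswaps above, by contrast, are never blocked, because their only internal vertex lies on the opposite side of the bipartition of $\mathbf{G}_V$. Once this is seen, Stage 2 reduces to the routine connectivity of token sliding on the half-cube, and Stage 1 is merely a reuse of Lemmas~\ref{lem:chaining} and~\ref{lem:fetch}.
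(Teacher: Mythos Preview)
Your argument is correct and your Stage~1 coincides with the paper's Lemma~\ref{lem:simeq-min}. The routes diverge at Stage~2. The paper proceeds in two further steps: it first pushes each of $\varphi_{\min},\varphi'_{\min}$ into a \emph{canonical form} (Definition~\ref{def:canonical}, Lemma~\ref{lem:simeq-check}) in which the satisfying valuations are the even-sized sets of smallest possible cardinality; this uses nontrivial chainswaps (not just length-$2$ ones) to eliminate ``bad pairs''. Only then does it argue that two canonical forms with the same count agree, by chainswapping along paths that alternate between sizes $M$ and $M{+}1$, where $M$ is the maximum size present---in effect, token sliding restricted to a single layer (the Johnson graph $J(k{+}1,M)$). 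Your approach bypasses the canonical-form detour entirely: the observation that a distance-$2$ chainswap is \emph{never} blocked (its sole internal vertex has the wrong parity) lets you work directly on the full halved cube $H_p$, reducing the problem to connectivity of unlabeled token sliding there. This is more economical for the purposes of Proposition~\ref{prp:same-euler-iff-simeq}; the paper's canonical form, on the other hand, is reused elsewhere (e.g.\ in the discussion around Open Problem~\ref{open:if-euler-not-null-then-hard}), which is presumably why it was isolated. One small caveat: the parenthetical ``temporarily displacing and then restoring the tokens it meets'' is not quite what the induction on $|S\triangle T|$ actually does---the clean argument repeatedly slides the \emph{last} occupied vertex on the path to the current target, telescoping to $(S\setminus\{v_i\})\cup\{w\}$ for some $v_i\notin T$ on the path; nothing is restored, but $|S\triangle T|$ still drops by~$2$. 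This is essentially the same manoeuvre the paper performs at the end of Section~\ref{sec:proof-equiv}.
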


The proof is similar to that of Proposition~\ref{prp:if-euler-null-then-equiv-bot}, but is more involved.
We first discuss the consequences of Proposition~\ref{prp:same-euler-iff-simeq} in terms of tractability and compilation of the $\mathcal{H}$-queries
in Section~\ref{sec:equiv-consequences}, and then prove the proposition in Section~\ref{sec:proof-equiv}.

\subsection{Consequences}
\label{sec:equiv-consequences}

Proposition~\ref{prp:same-euler-iff-simeq} allows us to prove:

\begin{theorem}
\label{thm:equivalences}
	Let $\varphi,\varphi'$ such that $\eul(\varphi)=\eul(\varphi')$. Then the following hold:
	\begin{description}
		\item[(a)] we have $\pqe(Q_\varphi) \equiv_\mathrm{T} \pqe(Q_{\varphi'})$;
		\item[(b)] we have $Q_\varphi \in \ddptime$ iff $Q_{\varphi'} \in \ddptime$;
		\item[(c)] we have $Q_\varphi \in \ddpsize$ iff $Q_{\varphi'} \in \ddpsize$.
	\end{description}
\end{theorem}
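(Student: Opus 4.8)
The plan is to derive Theorem~\ref{thm:equivalences} as a corollary of Proposition~\ref{prp:same-euler-iff-simeq}: since $\eul(\varphi)=\eul(\varphi')$ gives $\varphi \simeq \varphi'$, it suffices to show that each of the three properties (a), (b), (c) is preserved along a single step $\varphi \rewr{$\pm$}{4} \varphi'$ of the transformation, and then invoke symmetry and transitivity of $\simeq$. So the real work is a ``one-step'' lemma: if $\varphi \rewr{$\pm(\nu,l)$}{4} \varphi'$, then $\pqe(Q_\varphi) \equiv_\mathrm{T} \pqe(Q_{\varphi'})$, and $Q_\varphi \in \ddptime \iff Q_{\varphi'} \in \ddptime$, and $Q_\varphi \in \ddpsize \iff Q_{\varphi'} \in \ddpsize$. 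By the definition of $\rewr{$\pm$}{4}$ we may assume $\varphi \rewr{$+(\nu,l)$}{4} \varphi'$, i.e.\ $\sat(\varphi') = \sat(\varphi) \cup \{\nu,\nu^{(l)}\}$ with $\nu,\nu^{(l)} \notin \sat(\varphi)$; the other direction is handled by swapping the roles of $\varphi$ and $\varphi'$.

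The key structural observation is that $\varphi' = \varphi \lor \psi$ where $\psi$ is the degenerate function with $\sat(\psi) = \{\nu,\nu^{(l)}\}$, and that this disjunction is \emph{disjoint} (since $\nu,\nu^{(l)} \notin \sat(\varphi)$). Dually, $\varphi = \varphi' \land \lnot\psi'$ for the degenerate function $\psi'$ with $\sat(\psi') = 2^V \setminus \{\nu,\nu^{(l)}\}$. First I would treat (b) and (c) together. For the $\ddptime$ direction $\varphi \Rightarrow \varphi'$: given a database $D$, build a d-D for $\lin(Q_\varphi,D)$ by hypothesis, and a d-D for $\lin(Q_\psi,D)$ in PTIME via Proposition~\ref{prp:if-degenerate-then-obddptime} (since $\psi$ is degenerate); then $\lin(Q_{\varphi'},D) = \lin(Q_\varphi,D) \lor \lin(Q_\psi,D)$, and this $\lor$-gate is deterministic because the two lineages capture disjoint Boolean functions (as $Q_\varphi$ and $Q_\psi$ are disjoint queries when $\varphi \land \psi = \bot$), so the combined circuit is a d-D. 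This is exactly the plugging-into-a-template argument from the proof of Proposition~\ref{prp:if-fragmentable-then-ddptime}; indeed one can phrase the whole thing by noting that the one-step transformation corresponds precisely to wrapping the template with $T \mapsto T \lor l_i$ (for $+$) or $T \mapsto \lnot(\lnot T \lor l_i)$ (for $-$), as in Proposition~\ref{prp:if-equiv-bot-then-fragmentable}. For the reverse direction $\varphi' \Rightarrow \varphi$: one wants to go from a d-D for $\lin(Q_{\varphi'},D)$ to one for $\lin(Q_\varphi,D) = \lin(Q_{\varphi'},D) \land \lnot \lin(Q_{\psi'},D)$, again using that $\psi'$ is degenerate so $Q_{\psi'}$ has a PTIME-constructible d-D (hence so does $\lnot Q_{\psi'}$, since negating a d-D is trivial at the top) and that the conjunction is decomposable? — no, it is not decomposable in general. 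The cleaner route is to observe that the relation $\simeq$ is symmetric at the level of single steps: $\varphi \rewr{$+(\nu,l)$}{4} \varphi'$ is the same as $\varphi' \rewr{$-(\nu,l)$}{4} \varphi$, and for the $-$ direction the template manipulation $\lnot(\lnot T \lor l)$ again yields a d-D, so both directions of (b) follow from the same ``wrap the template'' move applied to $\varphi$ and to $\varphi'$ respectively — but this needs that \emph{both} are fragmentable-relative-to-each-other, which is not automatic. The correct and simplest fix is: prove the one-step lemma only in the direction that adds the $\lor l$ or $\lnot(\lnot \cdot \lor l)$ wrapper, giving ``$\varphi$ compilable $\Rightarrow$ $\varphi'$ compilable'' for $+$ and ``$\varphi'$ compilable $\Rightarrow$ $\varphi$ compilable'' for $-$; combined with the fact that any $\simeq$-path can be traversed in either direction (replacing each $+$ step by a $-$ step and vice versa), we get compilability of one endpoint iff compilability of the other. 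The identical argument works verbatim for (c) ($\ddpsize$), since the template-plugging only adds a constant number of gates and the degenerate pieces have polynomial-size d-Ds by Proposition~\ref{prp:if-degenerate-then-obddptime}.

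For (a), the Turing reductions: to reduce $\pqe(Q_\varphi)$ to $\pqe(Q_{\varphi'})$ when $\varphi' = \varphi \lor \psi$ disjointly, note $\Pr(Q_{\varphi'},(D,\pi)) = \Pr(Q_\varphi,(D,\pi)) + \Pr(Q_\psi,(D,\pi))$, and $\pqe(Q_\psi)$ is in PTIME since $\psi$ is degenerate (Proposition~\ref{prp:if-degenerate-then-obddptime} gives an OBDD, hence PTIME probability); so $\Pr(Q_\varphi,(D,\pi)) = \Pr(Q_{\varphi'},(D,\pi)) - \Pr(Q_\psi,(D,\pi))$ is computed with one oracle call to $\pqe(Q_{\varphi'})$ plus a PTIME computation. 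For the reverse, $\Pr(Q_{\varphi'},(D,\pi)) = \Pr(Q_\varphi,(D,\pi)) + \Pr(Q_\psi,(D,\pi))$ directly, again one oracle call plus PTIME. As before, composing along a $\simeq$-path (traversed in whichever direction is convenient at each step, which is fine because $\equiv_\mathrm{T}$ is symmetric and transitive) yields $\pqe(Q_\varphi) \equiv_\mathrm{T} \pqe(Q_{\varphi'})$. The main obstacle I anticipate is the bookkeeping around ``disjointness of queries vs.\ disjointness of the defining Boolean functions'': one must check carefully that $\varphi \land \psi = \bot$ (as Boolean functions on $V$) really does imply that the queries $Q_\varphi$ and $Q_\psi$ are logically contradictory, so that the $\lor$ of their lineages is deterministic and their probabilities add; this uses that the $h_{k,i}$ are just the variables being substituted, so a valuation of $V$ corresponds to which $h_{k,i}$ hold, and disjointness of $\varphi,\psi$ on $2^V$ is exactly disjointness of the induced queries. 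A secondary subtlety is making sure the ``traverse the path in either direction'' move is legitimate for property (c), where there is no oracle and no symmetry to lean on a priori — but it is fine because a single step $\varphi \rewr{$\pm$}{4} \varphi'$ already gives the biconditional $Q_\varphi \in \ddpsize \iff Q_{\varphi'} \in \ddpsize$ once we have proven the two one-directional implications (one for the $+$ reading, one for the $-$ reading of the same step).
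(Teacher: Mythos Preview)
Your proposal is correct and follows essentially the same route as the paper: invoke Proposition~\ref{prp:same-euler-iff-simeq} to get a $\simeq$-chain, then propagate the property along each step by wrapping with the templates $T \lor l$ (for $+$) and $\lnot(\lnot T \lor l)$ (for $-$), using Proposition~\ref{prp:if-degenerate-then-obddptime} for the degenerate piece $\psi_i$; this is exactly how the paper proceeds, explicitly pointing back to the construction in Proposition~\ref{prp:if-equiv-bot-then-fragmentable}. Your additive-probability argument for (a) and your handling of bidirectionality (each single step is both a $+$ and a $-$ read in opposite directions, giving the biconditional) are fine, and your remark that disjointness of $\varphi,\psi$ on $2^V$ lifts to disjointness of $Q_\varphi,Q_\psi$ is the right justification for determinism of the lineage $\lor$. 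One cosmetic slip: in your abandoned dead end you wrote $\varphi = \varphi' \land \lnot\psi'$ with $\sat(\psi') = 2^V \setminus \{\nu,\nu^{(l)}\}$, which should be $\varphi = \varphi' \land \psi'$ (no negation); but since you correctly discard that approach in favor of the template wrapping, it does not affect the argument.
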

\begin{proof}
	The proof is similar to that of Proposition~\ref{prp:if-equiv-bot-then-fragmentable}.
	We sketch the only-if directions of (b) and (c), but (a) works similarly.
	By Proposition~\ref{prp:same-euler-iff-simeq}, we have $\varphi = \varphi_0 \rewr{$\pm(\nu_1,l_1)$}{4} \cdots \rewr{$\pm(\nu_n,l_n)$}{4} \varphi_n = \varphi'$
	for some $n \in \mathbb{N}$ and valuations and variables $\nu_i,l_i$.
	We show by induction on $0 \leq i \leq n$ that $Q_{\varphi_i} \in \ddptime$.
	The case $i=0$ holds by assumption.
	For the inductive case we proceed just as in Proposition~\ref{prp:if-equiv-bot-then-fragmentable}, but this time we use the 
	degeneracy of~$\psi_i$ to obtain $Q_{\psi_i} \in \ddptime$
	by Proposition~\ref{prp:if-degenerate-then-obddptime}.
	The whole construction can clearly be carried out in PTIME data complexity.
\end{proof}

Observe how each item says something different: as far as we can tell, and
except when $\eul(\varphi)=\eul(\varphi')=0$, in which case all three items are equivalent
since $\pqe(Q_\varphi) \in \ddptime$ by Section~\ref{sec:upper_bound},
there is no obvious implication between any two of them.

Also observe that Theorem~\ref{thm:equivalences}, together with the observations that $\eul(\bot)=0$ and $Q_\bot \in \ddptime$, imply
our main result of Theorem~\ref{thm:if-euler-null-then-ddptime}.
We decided to present them separately to make the proofs more modular, so that someone only interested into the upper bound results of the paper can
understand the relevant parts more easily (since, as we said, Proposition~\ref{prp:same-euler-iff-simeq} requires more work; see also Figure~\ref{fig:proof-structure}).

We must now discuss a result of~\cite{beame2017exact} concerning FBDDs for the $\mathcal{H}$-queries which, on the surface, seems closely related to the last two items of Theorem~\ref{thm:equivalences}.
There, the authors show the following (we paraphrase to fit our notation):

\begin{theorem}[{\cite[Theorem~3.9]{beame2017exact}}]
\label{thm:beame-fbdds}
	Let~$\varphi$ be nondegenerate, and $\varphi'$ be an arbitrary Boolean function.
	Then any FBDD $F$ representing the lineage of $Q_\varphi$ on a database $D$
	can be transformed into an FBDD $F'$ representing the lineage of $Q_{\varphi'}$ on the same database,
	with only a polynomial increase in size.
\end{theorem}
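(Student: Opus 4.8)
The plan is to reduce the statement to a \emph{vector extraction} claim: from $F$ one should be able to build, with only polynomial blow‑up, an FBDD‑like structure $G$ over the same tuples whose sinks are labelled by valuations $\vec v \in \{0,1\}^{V}$ and which, on every input sub‑database, reaches the sink labelled $\bigl(\lin(h_{k,0},D),\dots,\lin(h_{k,k},D)\bigr)$ evaluated on that sub‑database. Writing $\ell_i \defeq \lin(h_{k,i},D)$ and using that $\lin(Q_\psi,D)=\psi[0\mapsto\ell_0,\dots,k\mapsto\ell_k]$ for \emph{every} Boolean function $\psi$ on $V$ (substitution commutes with taking lineages), the desired FBDD $F'$ for $\lin(Q_{\varphi'},D)$ is then obtained from $G$ simply by relabelling each sink $\vec v$ with the bit $\varphi'(\vec v)$, so $|F'|\le|G|$. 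Hence everything reduces to producing such a $G$ of size polynomial in $|F|$.

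To build $G$ I would exploit nondegeneracy. For each coordinate $i\in V$, since $\varphi$ depends on $i$ there is a valuation $\mu_i$ of $V\setminus\{i\}$ for which the restriction $\varphi[\,j\mapsto\mu_i(j) : j\neq i\,]$ is exactly $x_i$ or $\lnot x_i$. I would realise the event ``$\ell_j=\mu_i(j)$ for all $j\neq i$'' at the level of the tuples: for $j$ with $\mu_i(j)=1$ it suffices to pin the (at most two) tuples of a single clause of $\ell_j$ to true, and for $j$ with $\mu_i(j)=0$ it suffices to pin a whole relation occurring in $\ell_j$ to false. The key combinatorial point is that the predicates $R,S_1,\dots,S_k,T$ are arranged along a path and each $\ell_j$ uses two consecutive ones, so one can choose a single relation for every $j\neq i$ that avoids the two predicates of $\ell_i$, except possibly for the at most two neighbouring clauses $j=i\pm1$ forced to true, which must touch a boundary predicate of $\ell_i$; those finitely many tuples are read (and branched on) up front, so no input is ever ``lied about''. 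Conditioning $F$ accordingly yields an FBDD for $\pm\ell_i$ restricted by those finitely many pinned tuples, from which $\ell_i$ itself is recovered. Carrying this out for $i=0,1,\dots,k$ in turn, each time only querying tuples not read so far and re‑using the remembered values on the shared predicates $S_i$, and concatenating the results, produces $G$; its size is $|F|$ times a polynomial in $|D|$ whose exponent depends only on $k$, hence polynomial since $k$ is fixed.

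The step I expect to be the genuine obstacle is making this ``conditioning plus gadget'' argument uniform over $F$ viewed as a DAG rather than a decision tree: different source‑to‑sink paths may reach the same node of $F$ having read different sets of tuples, so one cannot naively splice gadgets below a sink without risking a second read of some tuple. This is exactly where nondegeneracy is needed a second time — one must argue that along every path $F$ already ``commits'' enough about the atoms $h_{k,0},\dots,h_{k,k}$ that the spliced‑in gadgets only ever query tuples that are always fresh, and then bound the size of the resulting object. Verifying this, together with the bookkeeping around the shared predicates $S_i$ and the two boundary atoms $h_{k,0},h_{k,k}$ (which behave slightly differently from the middle ones), is the technical heart of the argument, and is the part carried out in~\cite{beame2017exact}.
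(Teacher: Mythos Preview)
This theorem is not proved in the present paper: it is quoted from~\cite{beame2017exact}, and the only remark about its proof is a single sentence stating that it ``proceeds by chirurgically examining the FBDD $F$ in order to transform it into $F'$'', i.e., it works at the tuple level rather than at the level of~$\varphi$. There is therefore no in-paper argument to compare your sketch against beyond that one line, with which your outline is certainly consistent.

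On the sketch itself: the reduction to a vector-extraction problem, and the use of nondegeneracy to isolate each $\ell_i$ by conditioning $F$ along a valuation $\mu_i$ of the remaining coordinates, are the right high-level ideas. But the concrete plan of concatenating $k{+}1$ conditioned copies of $F$ does not, as written, respect the read-once constraint. The conditioned FBDD you obtain for $\ell_i$ still reads tuples from $S_i$ and $S_{i+1}$, and those same relations are needed for $\ell_{i-1}$ and $\ell_{i+1}$; ``re-using the remembered values on the shared predicates $S_i$'' then means carrying forward \emph{all} previously read $S_i$-tuples as branching state, not a bounded number of them, and that is an exponential blow-up. Your claimed size bound (``$|F|$ times a polynomial in $|D|$'') is thus not justified by the construction you describe. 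You do correctly identify this as the genuine obstacle in your last paragraph, and the fix you hint at --- that along every root-to-sink path $F$ already commits to enough information about each $\ell_i$ for the spliced-in gadgets to touch only fresh tuples --- is indeed the crux; but establishing that claim is essentially the whole content of the theorem, and you defer it entirely to~\cite{beame2017exact}. So your proposal frames the problem well and locates the difficulty precisely, but it does not itself close the gap.
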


First, we note that the proof techniques are seemingly very different: while we did not need to work at the tuple-level, the proof of Theorem~3.9 proceeds by
chirugically examining the FBDD $F$ in order to transform it into $F'$.
In fact, one could even argue that the two results are inherently incomparable, for the following reasons:
\begin{enumerate}
	\item Theorem~\ref{thm:equivalences}, items (b) and (c), do not hold when applied to FBDDs.
		Indeed, take $\varphi \defeq \bot$ and $\varphi'$ to be $\varphi_9$ from	Example~\ref{expl:q9}.
		We have $\eul(\bot)=\eul(\varphi_9) = 0$.
		Yet, $Q_\bot \in \fbddptime$ (clearly), but $Q_{\varphi_9} \notin \fbddpsize$ by the lower bounds of~\cite{beame2017exact}. 
	\item Less convincingly, if we could show a version of Theorem~\ref{thm:beame-fbdds} with \mbox{d-Ds} instead of FBDDs, then \emph{all}
		$\mathcal{H}$-queries (for $k \geq 2$) would be in $\ddpsize$, including those that are \#P-hard by~\cite{dalvi2012dichotomy}.
		Indeed, for all $k \in \mathbb{N}_{\geq 2}$, there clearly exists a query $Q_\varphi \in \mathcal{H}_k$ such that
		$\varphi$ is nondegenerate and $\eul(\varphi)=0$.
		Then by Theorem~\ref{thm:if-euler-null-then-ddptime} we have $Q_\varphi \in \ddptime$, hence any other $\mathcal{H}_k$-query would
		also be in $\ddpsize$.
		This would not be in contradiction with anything, as far as we know, but it would still be quite surprising.
\end{enumerate}

Second, by combining Theorem~\ref{thm:beame-fbdds} with an exponential lower bound on FBDD representations of one particular nondegenerate
$\mathcal{H}$-query $Q_{\varphi_{\mathrm{big-FBDDs}}}$, the authors of~\cite{beame2017exact} obtain an exponential lower bound on
FBDDs for all nondegenerate $\mathcal{H}$-queries.
In our case we observe a similar phenomenon with Theorem~\ref{thm:equivalences}, where an exponential lower bound on d-D
representations for a query $Q_{\varphi_{\mathrm{big-d-Ds}}}$ would
also apply to all queries with same Euler characteristic.
However, the existence of such a query $Q_{\varphi_{\mathrm{big-d-Ds}}}$ would answer a long-standing open problem in knowledge compilation:
that of separating \mbox{d-Ds} (or even d-DNNFs) with
DNFs~\cite{darwiche2002knowledge,bova2016knowledge,amarilli2019connecting}.
This is because the lineage of any UCQ on a database can always be computed in PTIME as a~DNF.

Nevertheless, we can still use Theorem~\ref{thm:equivalences} to show a hardness result extending that of ~\cite{dalvi2012dichotomy} for $\mathcal{H}^+$-queries (though this was not the primary goal of this paper). Specifically, we show:

\begin{proposition}
\label{prp:hardness}
	Let~$\varphi$ be a Boolean function (not necessarily monotone) with~$\eul(\varphi) \neq 0$
	and such that~$\min(\eul(\varphi) : \varphi \text{ is monotone}) \leq \eul(\varphi) \leq \max(\eul(\varphi) : \varphi \text{ is monotone})$.
	Then $\pqe(Q_\varphi)$ is \#P-hard.
\end{proposition}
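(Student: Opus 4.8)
The plan is to reduce the claim to the monotone case, where \#P-hardness is already supplied by Corollary~\ref{cor:rephrase}, and to transport hardness back along the Turing-equivalence of Theorem~\ref{thm:equivalences}(a). Concretely, write $E_V \defeq \{\eul(\psi) \mid \psi \text{ monotone on } V\}$, and set $\ell \defeq \min E_V$ and $u \defeq \max E_V$. The key step I would prove is that $E_V$ is exactly the integer interval $\{\ell, \ell+1, \ldots, u\}$. Granting this: let $\varphi$ satisfy the hypotheses, so $\ell \leq \eul(\varphi) \leq u$ and $\eul(\varphi) \neq 0$. Then $\eul(\varphi) \in E_V$, so there is a monotone $\varphi'$ on $V$ with $\eul(\varphi') = \eul(\varphi) \neq 0$; by Corollary~\ref{cor:rephrase}, $\pqe(Q_{\varphi'})$ is \#P-hard. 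Since $\eul(\varphi) = \eul(\varphi')$, Theorem~\ref{thm:equivalences}(a) gives $\pqe(Q_\varphi) \equiv_{\mathrm{T}} \pqe(Q_{\varphi'})$, so in particular $\pqe(Q_{\varphi'}) \leq_{\mathrm{T}} \pqe(Q_\varphi)$ and therefore $\pqe(Q_\varphi)$ is \#P-hard as well. (Note that $\varphi'$ is automatically nondegenerate, since degenerate functions have zero Euler characteristic; this is what makes Corollary~\ref{cor:rephrase} bite.)

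It remains to establish that $E_V = \{\ell,\ldots,u\}$. The inclusion $\subseteq$ is by definition, so I would focus on $\supseteq$. Consider the undirected graph $\mathbf{M}_V$ whose vertices are the monotone Boolean functions on $V$, with $\psi$ adjacent to $\psi'$ whenever $\sat(\psi')$ is obtained from $\sat(\psi)$ by inserting or deleting a single valuation. Two observations. First, each edge of $\mathbf{M}_V$ changes the Euler characteristic by exactly $\pm 1$: inserting a valuation $\nu$ changes $\sum_{\mu \models \psi}(-1)^{|\mu|}$ by $(-1)^{|\nu|}$. Second, $\mathbf{M}_V$ is connected: for any monotone $\psi$, build $\sat(\psi)$ from $\emptyset$ by inserting its elements in order of non-increasing size; at each stage the current set is upward-closed, because when $\nu$ is inserted every proper superset of $\nu$ is in $\sat(\psi)$ (upward-closure of $\sat(\psi)$) and strictly larger, hence already inserted. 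This exhibits a path in $\mathbf{M}_V$ from $\bot$ to $\psi$. Now take a path in $\mathbf{M}_V$ from $\bot$ to some $\psi_u$ with $\eul(\psi_u)=u$, and a path from $\bot$ to some $\psi_\ell$ with $\eul(\psi_\ell)=\ell$; concatenating them (one reversed) yields a path from $\psi_\ell$ to $\psi_u$ whose trace of Euler characteristics starts at $\ell$, ends at $u$, and changes by $\pm 1$ at each step. By the discrete intermediate value theorem this trace takes every integer value in $\{\ell,\ldots,u\}$, so every such value lies in $E_V$, proving $\supseteq$.

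I expect the only mildly delicate point — the ``main obstacle'', such as it is — to be the connectedness/intermediate-value argument: one must check that the one-step moves genuinely stay within the class of monotone functions (handled by the non-increasing-size insertion order, equivalently by only ever deleting minimal satisfying valuations), and that a minimizer and a maximizer of $\eul$ can be joined by such moves (routing both through $\bot$). Everything else is bookkeeping, modulo the two black boxes Corollary~\ref{cor:rephrase} and Theorem~\ref{thm:equivalences}(a), and the observation $\eul(\bot)=0$ that anchors the interval at $0$.
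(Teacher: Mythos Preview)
Your proof is correct and follows essentially the same approach as the paper: both reduce to finding a monotone $\varphi'$ with $\eul(\varphi')=\eul(\varphi)$, then invoke Corollary~\ref{cor:rephrase} and Theorem~\ref{thm:equivalences}(a). The paper's existence argument is the dual of yours---it starts from a monotone maximizer and \emph{removes} satisfying valuations of maximal size one at a time down to~$\bot$ (handling $\eul(\varphi)<0$ by symmetry), whereas you \emph{insert} valuations in non-increasing size order from~$\bot$ up to any monotone target and route both extremizers through~$\bot$; either way one gets a path along which the Euler characteristic changes by $\pm 1$ and the discrete intermediate value theorem finishes.
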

\begin{proof}
	Let~$\varphi$ be such a Boolean function. We show in Appendix~\ref{apx:max-euler} that there exists a monotone Boolean function $\varphi_{\mathrm{mon}}$
	such that $\eul(\varphi_{\mathrm{mon}}) = \eul(\varphi)$.
	But then we have $\eul(\varphi_{\mathrm{mon}}) \neq 0$, hence by Corollary~\ref{cor:rephrase} we have that $\pqe(Q_{\varphi_{\mathrm{mon}}})$ is \#P-hard,
	and by Theorem~\ref{thm:equivalences}, item (a), it holds that $\pqe(Q_\varphi)$ is \#P-hard as well.
\end{proof}

\begin{toappendix}
	\label{apx:max-euler}
	In this section we prove the following, which we used in the proof of Proposition~\ref{prp:hardness} (remember, once again, that~$k \in \mathbb{N}_{\geq 1}$
	and the set of variables~$V = \{0,\ldots,k\}$ are fixed):
	
	\begin{lemma}
	\label{lem:max-euler}
		Let~$\varphi$ be a Boolean function (not necessarily monotone) with~$\eul(\varphi) \neq 0$
		and such that~$\min(\eul(\varphi) : \varphi \text{ is monotone}) \leq \eul(\varphi) \leq \max(\eul(\varphi) : \varphi \text{ is monotone})$.
		Then there exists a monotone Boolean function $\varphi_{\mathrm{mon}}$
		such that $\eul(\varphi_{\mathrm{mon}}) = \eul(\varphi)$.
	\end{lemma}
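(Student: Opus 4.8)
The plan is to prove the slightly stronger fact that the set $E \defeq \{\eul(\psi) \mid \psi \text{ monotone on } V\}$ is precisely the integer interval $[\min E, \max E] \cap \mathbb{Z}$. Since there are only finitely many Boolean functions on $V$, the quantities $\min E$ and $\max E$ are well defined, and the hypothesis of the lemma says exactly that $\eul(\varphi) \in [\min E, \max E]$; so once we know $E = [\min E, \max E] \cap \mathbb{Z}$, we immediately get a monotone $\varphi_{\mathrm{mon}}$ with $\eul(\varphi_{\mathrm{mon}}) = \eul(\varphi)$, which is the statement. (Note that $\eul(\varphi) \neq 0$ plays no role in this lemma; it is only needed in Proposition~\ref{prp:hardness} to invoke Corollary~\ref{cor:rephrase}.)

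The engine of the argument is the trivial observation that if $\sat(\psi') = \sat(\psi) \cup \{\nu\}$ for a single valuation $\nu$, then $\eul(\psi') = \eul(\psi) + (-1)^{|\nu|}$; so adding or removing one satisfying valuation changes the Euler characteristic by exactly $1$. Hence it suffices to produce a finite sequence of monotone functions $\psi_0, \psi_1, \ldots, \psi_m$ such that $\eul(\psi_0) = \min E$, $\eul(\psi_m) = \max E$, and for each $j$ the set $\sat(\psi_{j+1})$ is obtained from $\sat(\psi_j)$ by adding or removing a single valuation. Then $j \mapsto \eul(\psi_j)$ is a walk on $\mathbb{Z}$ with steps of size $1$ running from $\min E$ to $\max E$, so by the discrete intermediate value theorem it hits every integer in between, in particular $\eul(\varphi)$; I would then take $\varphi_{\mathrm{mon}}$ to be the corresponding $\psi_j$.

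To build this sequence I would route through $\bot$. From a monotone $\psi \neq \bot$, pick a minimal satisfying valuation $\mu$ of $\psi$ (one no proper subset of which is satisfying) and remove it; the result is still monotone, since if $\nu \in \sat(\psi) \setminus \{\mu\}$ and $\nu \subseteq \nu'$, then $\nu' \in \sat(\psi)$, and $\nu' \neq \mu$ because $\nu' \supseteq \nu$ and $\mu$ is minimal. Iterating this $\#\psi$ times reaches $\bot$, each step preserving monotonicity and removing exactly one valuation. Symmetrically, $\bot$ can be grown into any monotone $\psi$ by inserting the valuations of $\sat(\psi)$ one at a time in order of non-increasing cardinality: when a valuation $\nu \in \sat(\psi)$ is inserted, every proper superset of $\nu$ has strictly larger cardinality and lies in $\sat(\psi)$, hence has already been inserted, so every intermediate set stays upward closed, i.e.\ represents a monotone function. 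Concatenating a removal walk from a monotone function realizing $\min E$ down to $\bot$ with a growth walk from $\bot$ up to a monotone function realizing $\max E$ yields the desired $\psi_0, \ldots, \psi_m$.

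I do not expect a serious obstacle; the only care needed is the two bookkeeping checks above that the intermediate functions stay monotone (achieved by removing minimal valuations and inserting in order of decreasing size), together with noting that finiteness of the universe guarantees $\min E$ and $\max E$ exist. I would isolate the ``single valuation changes $\eul$ by $\pm 1$'' remark and the discrete intermediate value theorem as one-line observations, and devote the rest of the proof to the two walks.
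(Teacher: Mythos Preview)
Your proof is correct and follows essentially the same approach as the paper: both walk through a sequence of monotone functions that differ by a single satisfying valuation (so $\eul$ changes by $\pm 1$ at each step), route through $\bot$, and apply the discrete intermediate value theorem. The only cosmetic differences are that the paper removes valuations of \emph{maximal} size (rather than minimal ones) and treats the cases $\eul(\varphi)>0$ and $\eul(\varphi)<0$ separately by symmetry rather than concatenating a walk from a minimizer to $\bot$ to a maximizer.
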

	\begin{proof}
		We will assume wlog that~$\eul(\varphi) > 0$, as the case of~$\eul(\varphi) < 0$ is symmetric.
		The idea of the proof is to start from a monotone Boolean function~$\varphi_M$ that maximizes~$\eul(\varphi_M)$, and then
		to modify~$\varphi_M$ (specifically, we will remove satisfying valuations) to show that
		all the intermediate values between~$0$ and~$\eul(\varphi_M)$ for the Euler characteristic
		(hence, also the value~$\eul(\varphi)$) are achievable by a monotone Boolean function.
		More formally, consider the Boolean functions~$\psi_0,\ldots,\psi_{|\sat(\varphi_M)|}$, obtained
		by starting with~$\psi_0 \defeq \varphi_M$ and iteratively removing exactly one satisfying assignment of maximal size.
		Then each~$\psi_i$ for~$ 0 \leq i \leq |\sat(\varphi_M)|$ is a monotone Boolean function, and we have that~$\psi_{|\sat(\varphi_M)|} = \bot$.
		Moreover, the Euler characteristic of~$\psi_i,\psi_{i+1}$ differ only by one.
		Since we have~$\eul(\bot)=0$, and since we know that $0 < \eul(\varphi) \leq \eul(\psi_0)$,
		there exists a~$\psi_i$	such that~$\eul(\psi_i) = \eul(\varphi)$, and we can then take~$\psi_i$ to be~$\varphi_{\mathrm{mon}}$.
		This concludes the proof.
	\end{proof}
	
	For completeness (but this is of no use to us), we note here that the monotone Boolean functions~$\varphi_M$ such that~$|\eul(\varphi_M)|$ is maximized
	are characterized precisely in~\cite{bjorner1988extended}.
	Since~\cite{bjorner1988extended} is all about simplicial complexes, we will paraphrase their result in terms of Boolean functions here.
	To do it correctly,
	we must keep in mind that (1) for our purposes, a “simplicial complex” is the same thing as “the negation of a monotone Boolean function”, so we need to reverse the powerset; and (2) in simplicial complexes terminology, the \emph{dimension} of a face~$\nu \subseteq V$ is~$|\nu| -1$. We then have:

	\begin{theorem}[({See \cite[Theorem 1.4]{bjorner1988extended}})]
	\label{thm:max-euler}
		A monotone Boolean function~$\varphi_M$ such that~$|\eul(\varphi_M)| = \mathrm{M}(k)$ can only be obtained as follows:
		\begin{itemize}
			\item If~$k$ is even, then take $\sat(\varphi_M) \defeq \{\nu \subseteq V \mid |\nu| \geq n/2 + 1\}$;
			\item If~$k$ is odd, then take $\sat(\varphi_M) \defeq \{\nu \subseteq V \mid |\nu| \geq (n-1)/2 +1\}$, or take
				$\sat(\varphi_M) \defeq \{\nu \subseteq V \mid |\nu| \geq (n+1)/2 +2\}$.
		\end{itemize}
	\end{theorem}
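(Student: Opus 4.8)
The plan is to recast the statement in the language of simplicial complexes, where it becomes the characterization of Björner and Kalai that the paper already cites (\cite[Theorem~1.4]{bjorner1988extended}), and then to separate the routine ingredients from the real obstacle. Set $N \defeq |V| = k+1$. To a monotone Boolean function~$\varphi$ on~$V$, associate the down-closed family $\Delta_\varphi \defeq 2^V \setminus \sat(\varphi)$, viewed as an abstract simplicial complex on vertex set~$V$. Since $N \geq 2$ we have $\sum_{\nu \subseteq V}(-1)^{|\nu|} = 0$, hence $\eul(\varphi) = -\sum_{\sigma \in \Delta_\varphi}(-1)^{|\sigma|} = \tilde\chi(\Delta_\varphi)$, the reduced Euler characteristic. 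Moreover any maximizer~$\varphi_M$ must be nondegenerate, since a degenerate function has $\eul = 0$ while $\mathrm{M}(k) = \binom{k}{\lfloor k/2\rfloor} > 0$; so $\Delta_{\varphi_M}$ uses all~$N$ vertices. The claim thus reduces to: among simplicial complexes on~$N$ vertices, $|\tilde\chi|$ is maximized \emph{precisely} by the ``middle skeleta'' of the $(N-1)$-simplex.

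For the value and the candidate extremizers I would compute directly with the skeleta. For $0 \leq t \leq N$ let $\Gamma_t \defeq \{\sigma \subseteq V : |\sigma| \leq t\}$; it has $f_i(\Gamma_t) = \binom{N}{i+1}$ for $-1 \leq i \leq t-1$, so by the alternating-binomial identity $\sum_{j=0}^{t}(-1)^j\binom{N}{j} = (-1)^t\binom{N-1}{t}$ we get $|\tilde\chi(\Gamma_t)| = \binom{N-1}{t} = \binom{k}{t}$, which is maximal exactly for $t = \lfloor k/2\rfloor$, and also for $t = \lceil k/2\rceil$ when~$k$ is odd. Translating back through the dictionary, $\Gamma_t = \Delta_\varphi$ precisely when $\sat(\varphi) = \{\nu : |\nu| \geq t+1\}$, which recovers the displayed extremal function when~$k$ is even and the two displayed ones (coming from the two middle binomial coefficients) when~$k$ is odd, together with $\mathrm{M}(k) = \binom{k}{\lfloor k/2\rfloor}$.

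The real content is the matching upper bound together with the uniqueness: no complex on~$N$ vertices beats $\binom{k}{\lfloor k/2\rfloor}$, and only the~$\Gamma_t$ above attain it. For this I would invoke the Kruskal--Katona theorem: the $f$-vectors of simplicial complexes are exactly the $M$-sequences, so, fixing $f_0 = N$ and writing $\tilde\chi = \sum_i (-1)^i f_i$, one must determine which $M$-sequences maximize the absolute alternating sum. Increasing $f_i$ by one shifts $\tilde\chi$ by $(-1)^i$, so one would like the $f_i$ at the ``good'' cardinalities as large as possible and the others as small as possible; but the cascade inequalities of Kruskal--Katona couple consecutive $f_i$'s, and unwinding the Macaulay representations should show that the alternating sum is at most $\binom{N-1}{\lfloor (N-1)/2\rfloor}$ in absolute value, with equality forcing every~$f_i$ up to a middle cardinality to be full --- that is, forcing $\Delta$ to be one of the~$\Gamma_t$; Alexander duality on~$N$ vertices (which flips the sign of $\tilde\chi$ according to the parity of~$N$, matching $\Gamma_t^\vee = \Gamma_{N-t-1}$) then explains why, for~$k$ odd, the maximizers come in the displayed dual pair. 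I expect this Kruskal--Katona optimization, and above all the equality analysis --- ruling out non-pure and ``compressed-but-non-skeletal'' complexes, and correctly splitting the even/odd cases --- to be the main obstacle. In particular, the tempting shortcut through the recursion $\tilde\chi(\Delta) = \tilde\chi(\mathrm{del}_v\Delta) - \tilde\chi(\mathrm{lk}_v\Delta)$ is not self-contained: by itself it only gives $|\tilde\chi(\Delta)| \leq 2\binom{N-2}{\lfloor (N-2)/2\rfloor}$, which is off by a constant factor for one parity of~$N$. Since the precise statement is exactly \cite[Theorem~1.4]{bjorner1988extended}, once the dictionary above is set up the cleanest route is to cite it.
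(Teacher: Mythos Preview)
Your proposal is correct and matches the paper's treatment: the paper does not prove this theorem at all but simply paraphrases \cite[Theorem~1.4]{bjorner1988extended} after setting up the same dictionary $\Delta_\varphi = 2^V \setminus \sat(\varphi)$ between monotone Boolean functions and simplicial complexes. Your write-up goes further than the paper by verifying explicitly that the skeleta achieve $\binom{k}{\lfloor k/2\rfloor}$ and by sketching the Kruskal--Katona route to the upper bound and uniqueness, but you land in the same place, namely citing Bj\"orner--Kalai for the hard direction.
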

	\clearpage
\end{toappendix}

We represented by the dashed red rectangle in Figure~\ref{fig:full-picture} all the $\mathcal{H}$-queries to which this result applies.
This unfortunately does not apply to all the $\mathcal{H}$-queries $Q_\varphi$ with $\eul(\varphi) \neq 0$:
for instance, consider the function $\varphi_{\mathrm{max-Euler}}$ whose satisfying
valuations are exactly all the valuations of even size.
Then we have $\eul(\varphi_{\mathrm{max-Euler}}) = 2^k$, and this value is not reachable by a monotone function.

\subsection{Proof of Proposition~\ref{prp:same-euler-iff-simeq}}
\label{sec:proof-equiv}

As we have already observed, the transformation does not change the Euler characteristic, so we only need to show
the “only if” part of Proposition~\ref{prp:same-euler-iff-simeq}, i.e., that if $\eul(\varphi)=\eul(\varphi')$ then $\varphi \simeq \varphi$.
Furthermore, since $\eul(\varphi)=-\eul(\lnot\varphi)$, it is enough to show it when $\eul(\varphi) \geq 0$.
We proceed in three steps.
The first step is to transform the functions so that they have only satisfying valuations of even size.
Formally:

\begin{lemma}
\label{lem:simeq-min}
	Let~$\varphi$ a Boolean function with $\eul(\varphi) \geq 0$.
	Then there exists $\varphi_{\mathrm{min}} \simeq \varphi$ having only
	satisfying valuations of even size.
\end{lemma}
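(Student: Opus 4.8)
The plan is to mimic the proof of Proposition~\ref{prp:if-euler-null-then-equiv-bot}, but to stop the process one step earlier. Write $e$ (resp.\ $o$) for the number of satisfying valuations of $\varphi$ of even (resp.\ odd) size, so that $\#\varphi = e + o$ and $\eul(\varphi) = e - o$. Since $e, o \geq 0$ we always have $\#\varphi \geq |\eul(\varphi)|$, and when $\eul(\varphi) \geq 0$ and $\#\varphi = |\eul(\varphi)| = \eul(\varphi)$ the identity $e + o = e - o$ forces $o = 0$, i.e.\ every satisfying valuation of $\varphi$ has even size. I would therefore prove the lemma by induction on $\#\varphi$: in the base case $\#\varphi = |\eul(\varphi)|$ we are already done, and may take $\varphi_{\mathrm{min}} \defeq \varphi$.

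In the inductive step I would assume $\#\varphi > |\eul(\varphi)|$. Then the hypothesis $\#\varphi \neq |\eul(\varphi)|$ of the fetching lemma (Lemma~\ref{lem:fetch}) holds, so it produces satisfying valuations $\nu, \nu'$ of $\varphi$ with $(-1)^{|\nu|} \neq (-1)^{|\nu'|}$ together with a simple path between them in $\mathbf{G}_V$ all of whose interior nodes are non-satisfying. This is exactly the setup for the chainkilling case of the chaining lemma (Lemma~\ref{lem:chaining}), which yields $\varphi'$ with $\sat(\varphi') = \sat(\varphi) \setminus \{\nu, \nu'\}$ and $\varphi \rewr{$\pm$}{4}^* \varphi'$. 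The transformation preserves the Euler characteristic, so $\eul(\varphi') = \eul(\varphi) \geq 0$, and $\#\varphi' = \#\varphi - 2$. By the induction hypothesis there is $\varphi_{\mathrm{min}} \simeq \varphi'$ whose satisfying valuations all have even size, and since $\simeq$ is transitive this gives $\varphi_{\mathrm{min}} \simeq \varphi$.

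I do not expect a genuine obstacle here: the heavy lifting is already packaged into the fetching and chaining lemmas. The only new ingredient is the elementary remark that $\#\varphi = |\eul(\varphi)|$ together with $\eul(\varphi) \geq 0$ forces $o = 0$, which is what pins down where the process terminates — at a function with only even-size satisfying valuations, rather than at $\bot$. One must only be careful to maintain $\eul(\varphi') \geq 0$ when invoking the induction hypothesis, and this is immediate since $\rewr{$\pm$}{4}$ does not change $\eul$. Note that the case $\eul(\varphi) = 0$ recovers Proposition~\ref{prp:if-euler-null-then-equiv-bot}, since then the induction necessarily runs all the way down to $\bot$.
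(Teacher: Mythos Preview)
Your proof is correct and follows essentially the same approach as the paper: iterate fetch (Lemma~\ref{lem:fetch}) then chainkill (Lemma~\ref{lem:chaining}) until no odd-size satisfying valuations remain. The only cosmetic difference is that you induct on $\#\varphi$ with base case $\#\varphi = |\eul(\varphi)|$, whereas the paper phrases the same loop as ``until~$\varphi$ has no satisfying valuations of odd size'' and notes that each chainkill decreases the number of odd-size satisfying valuations by one.
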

\begin{proof}
	We proceed as in the proof of Proposition~\ref{prp:if-euler-null-then-equiv-bot}: until~$\varphi$ has satisfying valuations
	of odd size, we use Lemma~\ref{lem:fetch} to
	fetch two valuations $\nu,\nu'$, and then we chainkill them with Lemma~\ref{lem:chaining}.
	Indeed, this always decreases the number of valuations of odd size by one.
\end{proof}

By applying Lemma~\ref{lem:simeq-min} to~$\varphi$ and $\varphi'$, we are left with two functions $\varphi_{\mathrm{min}}$ and $\varphi_{\mathrm{min}}'$, both
having only valuations of even size, and with $\#\varphi_{\mathrm{min}} = \#\varphi_{\mathrm{min}}'$.
The second step is then to transform these into what we call a \emph{canonical form}.

\begin{definition}
\label{def:canonical}
	We say that a Boolean function~$\varphi$ is in canonical form when (1)~$\varphi$ only has satisfying valuations of even size;
	and (2) there does not exist two valuations $\nu,\nu'$ of even size with $|\nu'| < |\nu|$ and $\nu \in \sat(\varphi)$ but $\nu' \notin \sat(\varphi)$
	(in other words, the satisfying valuations of~$\varphi$ are the smallest possible). Let us call such a pair of valuations a \emph{bad pair}.
\end{definition}

We then show:

\begin{lemma}
\label{lem:simeq-check}
	Let~$\varphi$ be a Boolean function having only satisfying valuations of even size.
	Then there exists a Boolean function $\check{\varphi}$ in canonical form such that $\check{\varphi} \simeq \varphi$.
\end{lemma}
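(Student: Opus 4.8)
The plan is to prove the statement by strong induction on $\Phi(\varphi) \defeq \sum_{\nu \models \varphi} |\nu|$. If $\varphi$ is already in canonical form we take $\check\varphi \defeq \varphi$; otherwise we will produce a Boolean function $\varphi' \simeq \varphi$ that still has only even-size satisfying valuations but with $\Phi(\varphi') < \Phi(\varphi)$, and conclude by the induction hypothesis applied to $\varphi'$. (This is well-founded since $\Phi \geq 0$; and it is legitimate because our transformation preserves the Euler characteristic --- indeed it preserves $\#\varphi$ --- and because a single chainswap is allowed to pass through intermediate functions whose satisfying valuations are not all of even size, only its two endpoints needing that property.)

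The function $\varphi'$ will be obtained from a single chainswap (Lemma~\ref{lem:chaining}) that moves one satisfying valuation $\mu$ of $\varphi$ to a non-satisfying valuation $\rho$ with $|\rho| < |\mu|$, both of even size; this keeps all satisfying valuations of even size and strictly lowers $\Phi$. Since $\varphi$ is not in canonical form and has only even satisfying valuations, it has a bad pair (Definition~\ref{def:canonical}), which here simply means that there is a non-satisfying even-size valuation $\rho$ with $|\rho| < 2m$, where $2m \defeq \max\{|\nu| : \nu \models \varphi\}$. Fix such a $\rho$ and some $\mu \models \varphi$ with $|\mu| = 2m$. If $\sat(\varphi)$ is not closed under deleting two elements --- say $\mu \setminus \{a,b\} \not\models \varphi$ for some $a,b \in \mu$ --- then the length-two path from $\mu$ to $\mu \setminus \{a,b\}$ through $\mu \setminus \{a\}$ is a valid chainswap, because its single internal vertex $\mu \setminus \{a\}$ has odd size and is therefore non-satisfying, and we are done. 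Otherwise $\sat(\varphi)$ is downward closed under removing pairs, and hence is exactly the set of even-size faces of the abstract simplicial complex $\Delta$ that is the downward closure of $\sat(\varphi)$; in particular every non-satisfying even-size valuation is a non-face of $\Delta$, so all of its supersets are non-faces too, and in particular non-satisfying.

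In this remaining case we route the chainswap path from $\mu$ to $\rho$ so that, after its first step, it stays inside the set $N$ of non-satisfying even-size valuations. When $2m + 2 \leq k+1$: step up from $\mu$ to $\mu \cup \{a,b\}$, which is non-satisfying since its size $2m+2$ exceeds the maximum satisfying size; then navigate among the even-size valuations of size $\geq 2m+2$ (all non-satisfying, being too large) and among the even-size supersets of $\rho$ (all non-satisfying, being supersets of the non-face $\rho$), and come down to $\rho$; inserting the odd-size intermediate vertices --- automatically non-satisfying --- turns this walk into a simple path in $\mathbf{G}_V$ all of whose internal vertices are non-satisfying, so Lemma~\ref{lem:chaining} applies (both endpoints have even size). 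The boundary cases, where there is no room ``above'' $\mu$, are handled by hand: $2m = k+1$ cannot occur, since downward-pair-closedness then forces $\sat(\varphi)$ to contain every even-size valuation and hence makes $\varphi$ canonical; and if $2m = k$, writing $\mu = V \setminus \{v_0\}$ (note $v_0 \in \rho$, otherwise $\rho \subseteq \mu$ would be a satisfying face) and choosing three distinct $x,y,z \in V \setminus \rho$ (at least three are available since $|V\setminus\rho| \geq 3$), the path visiting $V\setminus\{v_0\}$, $V\setminus\{v_0,x\}$, $V\setminus\{x\}$, $V\setminus\{x,y\}$, $V\setminus\{x,y,z\}$ in this order is a valid chainswap: its odd-size vertices are non-satisfying, $V\setminus\{x\}$ and $V\setminus\{x,y,z\}$ are non-satisfying because they contain $\rho$ hence are supersets of a non-face of $\Delta$, and the endpoint has size $k-2 < |\mu|$.

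The main obstacle is precisely this last step --- showing that some $\Phi$-decreasing valid chainswap exists whenever $\varphi$ is not yet in canonical form. When $\sat(\varphi)$ is ``pair-downward-closed'' no short chainswap is available, and one is forced to route the chainswap path around the satisfying region by first increasing the valuation size and then decreasing it; this is what makes the case analysis, and the separate treatment of $2m \in \{k,k+1\}$, necessary. Iterating until $\Phi$ can no longer be decreased --- equivalently, until the function is in canonical form --- yields the desired $\check\varphi \simeq \varphi$.
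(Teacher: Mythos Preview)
Your proof is essentially correct and takes a genuinely different route from the paper's. The paper inducts on the number $B_\varphi$ of bad pairs and, given a bad pair $(\nu,\nu')$, works locally: when $\nu'\subseteq\nu$ it chainswaps the last satisfying vertex on a descending path from $\nu$ to $\nu'$; when $\nu'\not\subseteq\nu$ it first performs an auxiliary chainswap along a zig-zag path alternating between the levels $|\nu'|$ and $|\nu'|+1$ to reduce to the first case. Your argument instead inducts on $\Phi(\varphi)=\sum_{\nu\models\varphi}|\nu|$ and splits globally on whether $\sat(\varphi)$ is closed under deleting pairs. When it is not, a length-$2$ chainswap suffices; when it is, you exploit the pleasant structural fact that $\sat(\varphi)$ is exactly the set of even-size faces of a simplicial complex $\Delta$, which lets you route a chainswap path first upward (above the maximal satisfying size, hence outside $\sat(\varphi)$) and then downward through supersets of the non-face $\rho$ (hence again outside $\sat(\varphi)$). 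The paper's approach is cleaner in that it avoids the boundary analysis for $2m\in\{k,k+1\}$; yours buys a nice structural lemma and a simpler induction invariant.

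Two small points. First, a notational slip: after fixing $\mu$ of maximal size, in Case~A you write ``say $\mu\setminus\{a,b\}\not\models\varphi$'' for the witness of non-closure under pair deletion, but this witness need not be your fixed $\mu$; use a fresh symbol for the witness and the length-$2$ chainswap still decreases $\Phi$. Second, in Sub-case~B1 the phrase ``turns this walk into a simple path'' is a bit quick: the concatenated up--down walk need not be simple, but you can extract a simple subpath, and its internal vertices are then a subset of the walk's internal vertices (hence still non-satisfying); note also that $\mu$ cannot reappear on the descending part because $\rho\not\subseteq\mu$ in Case~B.
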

\begin{proof}
	We prove the claim by induction on the number $B_\varphi$ of bad pairs.
	If $B_\varphi = 0$ then we are done, so assume $B_\varphi >0$, and let~$(\nu,\nu')$ be a bad pair of~$\varphi$.
	We distinguish two cases:
	\begin{itemize}
		\item we have $\nu' \subseteq \nu$. Then there exists a descending path $p$ from~$\nu$ to~$\nu'$ in $\mathbf{G}_V$.
			By “descending” we mean that
			the sizes of the valuations along the path are strictly decreasing (by $1$, of course).
			Let~$\nu_i$ be a valuation on that path that satisfies~$\varphi$ and such that no intermediary valuation on $p$ from~$\nu_i$ to~$\nu'$
			satisfies~$\varphi$.
			Such a~$\nu_i$ clearly exists, since in the worst case we can take $\nu_i \defeq \nu$.
			Observe that $(\nu_i,\nu')$ is a bad pair.
			We now chainswap~$\nu_i$ to~$\nu'$. Let~$\varphi'$ be the function obtained.
			It is clear that~$\varphi'$ only has satisfying valuations of even size, and that $B_{\varphi'} = B_{\varphi} -1$
			(since we have eliminated
			the bad pair $(\nu_i,\nu')$ without introducing others), so that the claim holds by induction hypothesis.
		\item we have $\nu' \not\subseteq \nu$.
			Then consider another valuation~$\nu''$ 
			with $|\nu'|=|\nu''|$ and such that there is a descending path from~$\nu$ to~$\nu''$ in $\mathbf{G}_V$
			(i.e., such that $\nu'' \subseteq \nu$: there clearly exist one such~$\nu''$).
			If~$\nu''$ does not satisfy~$\varphi$, simply use the preceding item with~$\nu''$ instead of~$\nu'$, and we are done.
			If~$\nu''$ satisfies~$\varphi$, then consider a path
			$\nu'' = \nu^=_0 - \nu^+_1 - \nu^=_1 - \cdots - \nu^+_j - \nu^=_j - \cdots \nu^+_n - \nu^=_n = \nu'$
			from~$\nu''$ to~$\nu'$ that alternates between valuations~$\nu^=_j$ of size $|\nu''|=|\nu'|$ and valuations~$\nu^+_j$ of size $|\nu''|+1$. 
			Again, such a path clearly exists.
			For instance if $\nu = \{0,1,2,3\}$ and $\nu' = \{4,5\}$ and $\nu'' = \{2,3\}$, we have the path
			$\nu'' = \{2,3\} - \{2,3,4\} - \{3,4\} - \{3,4,5\} - \{4,5\} = \nu'$.
			Now, consider the smallest $j$ such that $\nu^=_j \notin \sat(\varphi)$ (which is defined since $\nu' = \nu^=_n \notin \sat(\varphi)$).
			We now chainswap~$\nu''$ to~$\nu^=_j$, which does not introduce any bad pair, and preserves the
			fact that all satisfying valuations are of even size.
			Then, we can simply use the preceding item with~$\nu''$ instead of~$\nu'$.\qedhere
			\end{itemize}
	\end{proof}

	We now apply Lemma~\ref{lem:simeq-check} to $\varphi_{\mathrm{min}}$ and $\varphi_{\mathrm{min}}'$, thus obtaining~$\check{\varphi}_{\mathrm{min}}$
	and $\check{\varphi}_{\mathrm{min}}'$, both in canonical form, and with
	$\#\check{\varphi}_{\mathrm{min}} = \#\check{\varphi}_{\mathrm{min}}'$ again.
	The third step is to show that $\check{\varphi}_{\mathrm{min}} \simeq \check{\varphi}_{\mathrm{min}}'$.
	Letting $M \defeq \max(|\nu| : \nu \models \check{\varphi}_{\mathrm{min}})$ and 
	$M' \defeq \max(|\nu| : \nu \models \check{\varphi}_{\mathrm{min}}')$, we necessarily have~$M=M'$ because the functions are in canonical form and
	have the same number of satisfying valuations.
	If~$M = |V|$ then clearly $\check{\varphi}_{\mathrm{min}} = \check{\varphi}_{\mathrm{min}}'$, because then
	$\sat(\check{\varphi}_{\mathrm{min}})$ and $\sat(\check{\varphi}_{\mathrm{min}}')$ both contain exactly \emph{all} the valuations of even size.
	Otherwise, we can use valuations of size~$M+1$ to transform, say, $\check{\varphi}_{\mathrm{min}}$ into $\check{\varphi}_{\mathrm{min}}'$,
	by iteratively doing the following:
	\begin{enumerate}
		\item pick a valuation~$\nu \in \sat(\check{\varphi}_{\mathrm{min}}) \setminus \sat(\check{\varphi}_{\mathrm{min}}')$ and
			a valuation~$\nu' \in \sat(\check{\varphi}_{\mathrm{min}}') \setminus \sat(\check{\varphi}_{\mathrm{min}})$;
			these exist and are necessarily of size~$M$;
		\item let $\nu = \nu^=_0 - \nu^+_1 - \nu^=_1 - \cdots - \nu^+_j - \nu^=_j - \cdots \nu^+_n - \nu^=_n = \nu'$ be a simple path
			from~$\nu$ to~$\nu'$ that alternates between valuations~$\nu^=_j$ of size $|\nu|=|\nu'|=M$ and valuations~$\nu^+_j$ of size~$M+1$. 
			Let $0 = j_0 < \ldots <j_m < n$ be all the indices such that $\nu^=_{j_l} \models \check{\varphi}_{\mathrm{min}}$.
			We then chainswap~$\nu^=_{j_m}$ to~$\nu'$, and then for $0 \leq p < m$ we chainswap~$\nu^=_{j_p}$ to~$\nu^=_{j_{p+1}}$.
			The total transformation then simply amounts to uncoloring~$\nu$ and coloring~$\nu'$.
	\end{enumerate}
	Therefore, we indeed have $\check{\varphi}_{\mathrm{min}} \simeq \check{\varphi}_{\mathrm{min}}'$,
	which implies $\varphi_{\mathrm{min}} \simeq \varphi_{\mathrm{min}}'$ and then $\varphi \simeq \varphi'$,
	concluding the proof of Proposition~\ref{prp:same-euler-iff-simeq}.

\section{Open Problems and Future Work}
  \label{sec:open}
  In this section we mention some of the numerous questions that this work must leave behind.
Some of them are directly related to other open problems in knowledge compilation, while others seem specific to probabilistic databases.
We also justify the definition of our transformation.

\paragraph{\bf Hardness and lower bounds}
We start with the question of showing hardness for the queries in the dotted gray rectangle of Figure~\ref{fig:full-picture}:

\begin{openpb}
	\label{open:if-euler-not-null-then-hard}
	Show \#P-hardness of all $\mathcal{H}$-queries $Q_\varphi$ with $\eul(\varphi) \neq 0$.
\end{openpb}

This would require adapting the proofs of~\cite{dalvi2012dichotomy} to UCQs with negations, which appears to be challenging.
Note that, thanks to Theorem~\ref{thm:equivalences} and Lemmas~\ref{lem:simeq-min} and~\ref{lem:simeq-check}, it
would be enough to show the hardness of the queries in canonical form.
In an orthogonal direction, and as we have mentioned already, Theorems~\ref{thm:if-euler-null-then-ddptime} and \ref{thm:equivalences} are begging for a counterpart lower bound:

\begin{openpb}
	\label{open:lower-dd}
	Show superpolynomial lower bounds for d-D representations of the lineages of all $\mathcal{H}$-queries $Q_\varphi$ with $\eul(\varphi) \neq 0$.
\end{openpb}

However, a lower bound on \mbox{d-Ds} seems far out of reach with current techniques.
Instead, a framework to show lower bounds on d-DNNFs has recently been introduced in~\cite{bova2016knowledge}, using
tools from communication complexity.
Hence, maybe showing a lower bound on d-DNNFs representations could be an easier target
(though this would still answer an important open problem in knowledge compilation).

\paragraph{\bf Using fewer negations}
A brief inspection of our proofs indicates that we have spent a generous number of negations to construct the \mbox{d-Ds}.
One can readily wonder if that was really necessary.
In this regard, an easy observation is that if $\varphi \rewr{$-$}{4}^* \bot$, then $Q_\varphi \in \ddnnfptime$,
and that if $\varphi \rewr{$+$}{4}^* \top$ then $\lnot Q_\varphi \in \ddnnfptime$.
This was actually the approach taken in~\cite{monet2018towards}.
The facts $\varphi \rewr{$-$}{4}^* \bot$ and $\varphi \rewr{$+$}{4}^* \top$ can be reformulated using more standard notions of graph theory: we have $\varphi \rewr{$-$}{4}^* \bot$
iff the subgraph of $\mathbf{G}_V[\varphi]$ induced by the colored nodes has a perfect matching, and $\varphi \rewr{$+$}{4}^* \top$
iff that induced by the non-colored nodes has a perfect matching. We then conjecture the following:

\begin{conjecture}[(See also~\cite{cstheory_perfect_matchings,monet2018towards})]
	\label{open:perfect-matchings}
	If~$\varphi$ is monotone and $\eul(\varphi)=0$, then the subgraph of $\mathbf{G}_V[\varphi]$ induced by the colored nodes,
	\emph{or} that induced by the non-colored nodes,
	has a perfect matching.
\end{conjecture}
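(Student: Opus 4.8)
\medskip
\noindent\emph{Proof proposal.}
Being a conjecture, this asks for a plan of attack rather than a finished argument. By the reformulation recorded above, it suffices to prove that for monotone~$\varphi$ with $\eul(\varphi)=0$ one has $\varphi \rewr{$-$}{4}^* \bot$ or $\varphi \rewr{$+$}{4}^* \top$; equivalently, that the subgraph $G_{\mathrm{col}}$ of $\mathbf{G}_V$ induced by $\sat(\varphi)$ has a perfect matching, or that the subgraph $G_{\mathrm{unc}}$ induced by $2^{V}\setminus\sat(\varphi)$ does. Recall that $\mathbf{G}_V$ is the $(k{+}1)$-cube, bipartite with the even- and the odd-size valuations as its two sides; monotonicity makes $\sat(\varphi)$ an up-set (so $G_{\mathrm{col}}$ is connected, through the join of any two of its vertices) and its complement a down-set; and $\eul(\varphi)=0$ says exactly that $\sat(\varphi)$, hence also its complement, has as many even- as odd-size members, so that both $G_{\mathrm{col}}$ and $G_{\mathrm{unc}}$ are balanced bipartite graphs --- the obvious necessary condition for a perfect matching.

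The first move is to fix a symmetric chain decomposition $\mathcal{C}_1,\dots,\mathcal{C}_m$ of $2^{V}$ (say, the standard nested one), each $\mathcal{C}_j$ being a path in $\mathbf{G}_V$ symmetric about the middle layer. Since $\sat(\varphi)$ is an up-set, $\sat(\varphi)\cap\mathcal{C}_j$ is a top segment of $\mathcal{C}_j$, of some length $t_j$, and the complement meets $\mathcal{C}_j$ in the complementary bottom segment, of length $\ell_j-t_j$ where $\ell_j=|\mathcal{C}_j|$. Matching consecutive vertices inside each chain disposes of every chain whose relevant segment has even length; the chains whose segment is odd leave exactly one \emph{residual} vertex apiece, of a size-parity forced by the chain. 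A short computation gives $\eul(\varphi)=\sum_{j\,:\,t_j\text{ odd}}(-1)^{|\mathrm{top}(\mathcal{C}_j)|}$, so $\eul(\varphi)=0$ states precisely that, for $G_{\mathrm{col}}$ (and, by the same computation on the complement, also for $G_{\mathrm{unc}}$), the residual vertices split evenly between the two sides of $\mathbf{G}_V$. It then remains to match the residuals of \emph{one} of the two subgraphs across chains using hypercube edges, and this is the step I expect to be hard.

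For that step I would exploit the remaining slack: the choice of symmetric chain decomposition, the choice within each odd chain of which vertex of the forced parity is residual (there are several, at positions of equal parity along the chain), and the choice of how to pair residuals. The target lemma is that after suitable such choices the residuals of $G_{\mathrm{col}}$ \emph{or} those of $G_{\mathrm{unc}}$ can always be matched completely, proved by contradiction via Hall's theorem: if $G_{\mathrm{col}}$ has no perfect matching then some set $S$ of even-size members of $\sat(\varphi)$ satisfies $|N_{\mathbf{G}_V}(S)\cap\sat(\varphi)|<|S|$; since $\sat(\varphi)$ is an up-set, every proper superset of a member of $S$ already lies in $\sat(\varphi)$, which forces $S$ and the obstruction it produces to concentrate in the top layers of the lattice, while dually any Hall obstruction for the down-set $2^{V}\setminus\sat(\varphi)$ concentrates in the bottom layers. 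The heart of the matter is to show that these two dangerous regions cannot coexist when $\eul(\varphi)=0$ --- intuitively, an up-set top-heavy enough to fail Hall has a complement so bottom-light that it cannot, and symmetrically --- so that Hall's condition holds on one of the two sides. I would try to make this quantitative by tracking the deficiency layer by layer, using the normalized matching property of the Boolean lattice between consecutive layers together with the balance $\eul(\varphi)=0$.

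A related source of intuition, and a partial reduction for the ``$G_{\mathrm{col}}$'' case, is the cofactor split of $\mathbf{G}_V$ along the coordinate~$0$: the up-set $\sat(\varphi)$ has monotone cofactors $\varphi_0\Rightarrow\varphi_1$, the crossing matching absorbs the copy of $\sat(\varphi_0)$, and one is left to match, inside a $k$-cube, the convex set $\sat(\varphi_1)\setminus\sat(\varphi_0)$, which has Euler characteristic $\eul(\varphi_1)-\eul(\varphi_0)=-\eul(\varphi)=0$. However, the cofactors of a convex set need not again nest, so this does not by itself close an induction, and identifying an invariant (some class of Euler-zero sets, stable under cofactoring, for which a matching always exists on at least one side) that does is part of the challenge. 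The main obstacle, common to both routes, is precisely this interaction between the coloured and the uncoloured side: the paper's own remarks that using only~(1) or only~(2) is not enough already show that neither side suffices in isolation for arbitrary Boolean functions with $\eul(\varphi)=0$, and what must be pinned down is exactly how the hypothesis of monotonicity rules out both sides failing at once.
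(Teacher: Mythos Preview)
This statement is a \emph{conjecture} in the paper, not a theorem: the paper offers no proof and explicitly lists it as open. What the paper does provide is evidence and constraints: a non-monotone counterexample $\varphi_{\mathrm{no\text{-}PM}}$ showing monotonicity is essential, a monotone example $\varphi_{\mathrm{one\text{-}neg}}$ (with $k=5$) showing the ``or'' cannot be strengthened to ``and'', and computational verification for all nondegenerate monotone $\varphi$ with $k\le 5$. So there is no paper proof to compare your proposal against.

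That said, your plan is a sensible research outline, and you are honest about where it stalls. The symmetric chain decomposition step is correct and does reduce the problem to matching the residuals; your formula $\eul(\varphi)=\sum_{j:\,t_j\text{ odd}}(-1)^{|\mathrm{top}(\mathcal{C}_j)|}$ is right, so the residuals are indeed balanced. But the step you flag as hard --- showing that Hall obstructions on the coloured and uncoloured sides cannot coexist under monotonicity and $\eul(\varphi)=0$ --- is precisely the whole conjecture, and nothing in your sketch (normalized matching, layer-by-layer deficiency tracking, the cofactor split) gives a mechanism that actually exploits the ``or''. The paper's example $\varphi_{\mathrm{one\text{-}neg}}$ is a useful test case: any candidate argument must fail for $G_{\mathrm{col}}$ there yet succeed for $G_{\mathrm{unc}}$, so you should check that your Hall-deficiency heuristic really distinguishes the two sides on that instance before investing further in it. The cofactor route is also instructive as a cautionary tale: the difference $\sat(\varphi_1)\setminus\sat(\varphi_0)$ is convex but not monotone, and the paper's non-monotone counterexample shows that Euler-zero alone is not enough once monotonicity is lost --- so that reduction, as you note, does not recurse.
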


First, we note that the claim does not hold if we do not impose~$\varphi$ to be monotone.
Indeed, consider the function $\varphi_{\mathrm{no-PM}}$ for $k=4$, whose graph $\mathbf{G}_V[\varphi_{\mathrm{no-PM}}]$ we have depicted in Figure~\ref{fig:non-monotone-no-PM}.
\begin{figure*}[h]
\centering
	\begin{scaletikzpicturetowidth}{\linewidth}
	\begin{tikzpicture}[scale=\tikzscale]
	\tikzset{nodestyle/.style={draw,rectangle}}
	% Generated with the python script in ../code/generate_tikz_for_posets.py, using the file ../code/functions/phi_matchings-poset

 ===== DRAWING NODES ====

\node[nodestyle] (emptyset) at (0.0, 0.0) {$\emptyset$};
\node[nodestyle] (0) at (-3.8, 1.5) {$0$};
\node[nodestyle] (1) at (-1.9, 1.5) {$1$};
\node[nodestyle] (2) at (0.0, 1.5) {$2$};
\node[nodestyle] (3) at (1.9, 1.5) {$3$};
\node[nodestyle] (4) at (3.8, 1.5) {$4$};
\node[nodestyle] (01) at (-8.55, 3.0) {$01$};
\node[nodestyle] (02) at (-6.65, 3.0) {$02$};
\node[nodestyle,fill=orange] (03) at (-4.75, 3.0) {$03$};
\node[nodestyle,fill=orange] (04) at (-2.85, 3.0) {$04$};
\node[nodestyle] (12) at (-0.95, 3.0) {$12$};
\node[nodestyle] (13) at (0.95, 3.0) {$13$};
\node[nodestyle] (14) at (2.85, 3.0) {$14$};
\node[nodestyle] (23) at (4.75, 3.0) {$23$};
\node[nodestyle] (24) at (6.65, 3.0) {$24$};
\node[nodestyle,fill=orange] (34) at (8.55, 3.0) {$34$};
\node[nodestyle,fill=orange] (012) at (-8.55, 4.5) {$012$};
\node[nodestyle,fill=orange] (013) at (-6.65, 4.5) {$013$};
\node[nodestyle,fill=orange] (014) at (-4.75, 4.5) {$014$};
\node[nodestyle] (023) at (-2.85, 4.5) {$023$};
\node[nodestyle,fill=orange] (024) at (-0.95, 4.5) {$024$};
\node[nodestyle] (034) at (0.95, 4.5) {$034$};
\node[nodestyle] (123) at (2.85, 4.5) {$123$};
\node[nodestyle,fill=orange] (124) at (4.75, 4.5) {$124$};
\node[nodestyle] (134) at (6.65, 4.5) {$134$};
\node[nodestyle] (234) at (8.55, 4.5) {$234$};
\node[nodestyle] (0123) at (-3.8, 6.0) {$0123$};
\node[nodestyle] (0124) at (-1.9, 6.0) {$0124$};
\node[nodestyle,fill=orange] (0134) at (0.0, 6.0) {$0134$};
\node[nodestyle,fill=orange] (0234) at (1.9, 6.0) {$0234$};
\node[nodestyle] (1234) at (3.8, 6.0) {$1234$};
\node[nodestyle] (01234) at (0.0, 7.5) {$01234$};

 ===== DRAWING EDGES ====

\draw[black,thick] (emptyset) -- (0);
\draw[black,thick] (emptyset) -- (1);
\draw[black,thick] (emptyset) -- (2);
\draw[black,thick] (emptyset) -- (3);
\draw[black,thick] (emptyset) -- (4);
\draw[black,thick] (0) -- (01);
\draw[black,thick] (0) -- (02);
\draw[black,thick] (0) -- (03);
\draw[black,thick] (0) -- (04);
\draw[black,thick] (1) -- (01);
\draw[black,thick] (1) -- (12);
\draw[black,thick] (1) -- (13);
\draw[black,thick] (1) -- (14);
\draw[black,thick] (2) -- (02);
\draw[black,thick] (2) -- (12);
\draw[black,thick] (2) -- (23);
\draw[black,thick] (2) -- (24);
\draw[black,thick] (3) -- (03);
\draw[black,thick] (3) -- (13);
\draw[black,thick] (3) -- (23);
\draw[black,thick] (3) -- (34);
\draw[black,thick] (4) -- (04);
\draw[black,thick] (4) -- (14);
\draw[black,thick] (4) -- (24);
\draw[black,thick] (4) -- (34);
\draw[black,thick] (01) -- (012);
\draw[black,thick] (01) -- (013);
\draw[black,thick] (01) -- (014);
\draw[black,thick] (02) -- (012);
\draw[black,thick] (02) -- (023);
\draw[black,thick] (02) -- (024);
\draw[black,thick] (03) -- (013);
\draw[black,thick] (03) -- (023);
\draw[black,thick] (03) -- (034);
\draw[black,thick] (04) -- (014);
\draw[black,thick] (04) -- (024);
\draw[black,thick] (04) -- (034);
\draw[black,thick] (12) -- (012);
\draw[black,thick] (12) -- (123);
\draw[black,thick] (12) -- (124);
\draw[black,thick] (13) -- (013);
\draw[black,thick] (13) -- (123);
\draw[black,thick] (13) -- (134);
\draw[black,thick] (14) -- (014);
\draw[black,thick] (14) -- (124);
\draw[black,thick] (14) -- (134);
\draw[black,thick] (23) -- (023);
\draw[black,thick] (23) -- (123);
\draw[black,thick] (23) -- (234);
\draw[black,thick] (24) -- (024);
\draw[black,thick] (24) -- (124);
\draw[black,thick] (24) -- (234);
\draw[black,thick] (34) -- (034);
\draw[black,thick] (34) -- (134);
\draw[black,thick] (34) -- (234);
\draw[black,thick] (012) -- (0123);
\draw[black,thick] (012) -- (0124);
\draw[black,thick] (013) -- (0123);
\draw[black,thick] (013) -- (0134);
\draw[black,thick] (014) -- (0124);
\draw[black,thick] (014) -- (0134);
\draw[black,thick] (023) -- (0123);
\draw[black,thick] (023) -- (0234);
\draw[black,thick] (024) -- (0124);
\draw[black,thick] (024) -- (0234);
\draw[black,thick] (034) -- (0134);
\draw[black,thick] (034) -- (0234);
\draw[black,thick] (123) -- (0123);
\draw[black,thick] (123) -- (1234);
\draw[black,thick] (124) -- (0124);
\draw[black,thick] (124) -- (1234);
\draw[black,thick] (134) -- (0134);
\draw[black,thick] (134) -- (1234);
\draw[black,thick] (234) -- (0234);
\draw[black,thick] (234) -- (1234);
\draw[black,thick] (0123) -- (01234);
\draw[black,thick] (0124) -- (01234);
\draw[black,thick] (0134) -- (01234);
\draw[black,thick] (0234) -- (01234);
\draw[black,thick] (1234) -- (01234);
	\end{tikzpicture}
	\end{scaletikzpicturetowidth}
	\caption{The colored graph $\mathbf{G}_V[\varphi_{\mathrm{no-PM}}]$, illustrating that Conjecture~\ref{open:perfect-matchings} must be restricted to monotone functions.
	For space reasons we write, e.g., $024$ instead of $\{0,2,4\}$.}
\label{fig:non-monotone-no-PM}
\end{figure*}
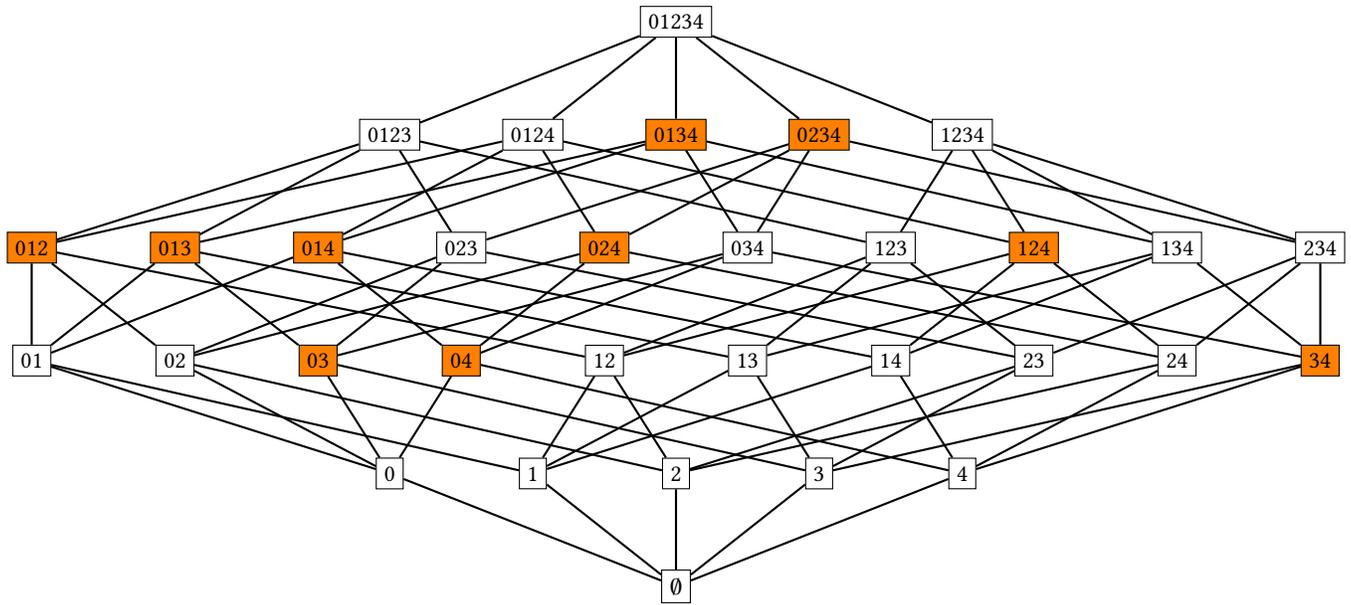
This function has zero Euler characteristic, yet it is easy to see that the subgraph induced by the colored (resp., non-colored) nodes has no perfect matching:
the colored node $\{3,4\}$ (resp., non-colored node $\{0,3,4\}$) is isolated.
In a sense, this Boolean function justifies the definition of our transformation, and shows that the approach of~\cite{monet2018towards} was doomed to fail
for the $\mathcal{H}$-queries that are not UCQs.
Second, and much harder to see, the “or” in Conjecture~\ref{open:perfect-matchings} is necessary.
Indeed, there exists a monotone function $\varphi_{\mathrm{one-neg}}$ (with $k=5$) such that the colored nodes have
no perfect matching but the non-colored nodes do (and vice versa, by symmetry).
We depicted $\mathbf{G}_V[\varphi_{\mathrm{one-neg}}]$ in Appendix~\ref{apx:open} (this is actually the smallest such function).
Third, we have checked in~\cite{monet2018towards}, using the SAT solver Glucose~\cite{audemard2009predicting}, that this conjecture holds for all
monotone Boolean functions with $k \leq 5$, amounting to about 20 million non-isomorphic (under permutation of the variables)
nondegenerate\footnote{Since for degenerate functions the conjecture clearly holds.} functions.
This conjecture seems also closely related to the open problem in knowledge compilation of determining whether d-DNNFs are closed under negation\footnote{
	Formally: given a d-DNNF $D$, is there a d-DNNF~$D'$ of polynomial size representing~$\lnot D$?}.

\begin{toappendix}
\label{apx:open}

	\begin{figure*}[h]
\centering
	\begin{scaletikzpicturetowidth}{\linewidth}
	\begin{tikzpicture}[scale=\tikzscale]
	\tikzset{nodestyle/.style={draw,rectangle}}
	\input{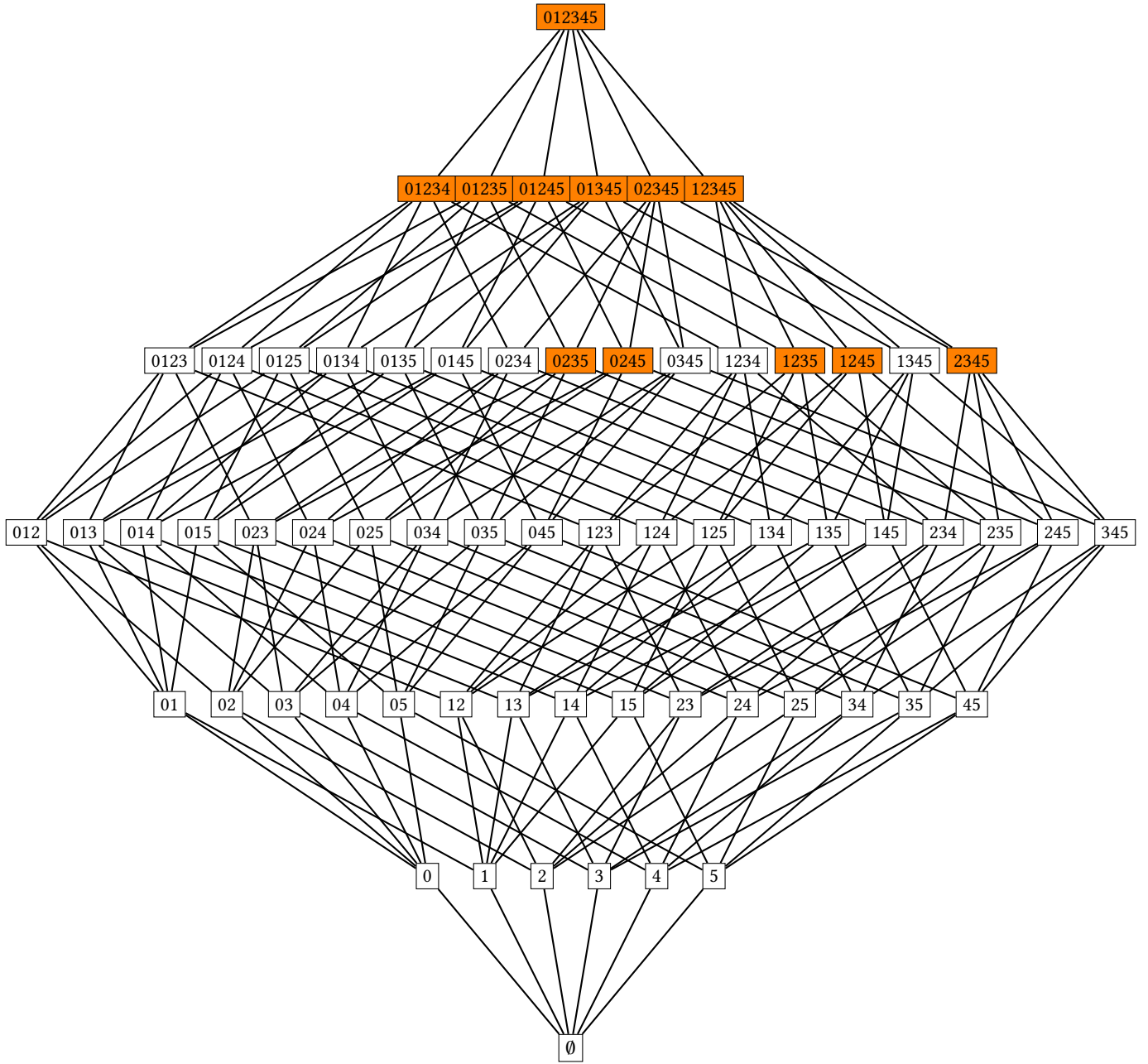}
	\end{tikzpicture}
	\end{scaletikzpicturetowidth}
	\caption{The colored graph $\mathbf{G}_V[\varphi_{\mathrm{one-neg}}]$, having $\eul(\varphi)=0$, illustrating that in Conjecture~\ref{open:perfect-matchings},
	the “or” is necessary: the colored nodes have no perfect matching (because $012345$ needs to be matched to both $01234$ and $01345$), but the non-colored ones do (checked with a SAT solver).}
\label{fig:one-neg}
\end{figure*}
\clearpage
\end{toappendix}

\paragraph{\bf Generalize our technique to all UCQs}
We now discuss the most important question: extending the techniques we have developed to capture a larger class of queries than the $\mathcal{H}$-queries.
To do this, recall that the algorithm of~\cite{dalvi2012dichotomy} for UCQs recursively alternates between two steps.
The first step is to find what is called a \emph{separator variable}, intuitively ensuring that
the subqueries obtained are independent.
This step can clearly be simulated with \mbox{d-Ds}.
The second step is to perform inclusion--exclusion using Möbius's inversion formula.
Here the algorithm recurses on the subqueries of the CNF lattice whose Möbius coefficient is not zero, potentially allowing for
\#P-hard queries to be ignored during the computation.
It is this step that seems hard to simulate using a knowledge compilation approach.
With this article, we have shown that we can simulate it, in the special case where the bottom term in the CNF lattice is the only hard subquery (and has a zero Möbius coefficient), and where we can recursively construct \mbox{d-Ds} for subqueries that
are disjunctions of two connected terms in the poset of the original query (i.e., the poset of Definition~\ref{def:induced-subgraph}).
Although we do not have a concrete example at hand\footnote{We do not think that such an example would be particularly enlightening.}, this already seems to define a larger class of queries than the $\mathcal{H}$-queries.
However, it is far from obvious how to extend our technique to avoid \emph{other} \#P-hard queries than the bottom one.
We leave this task for future work.

\begin{openpb}
	\label{open:generalize}
	Generalize our techniques to all UCQs, i.e., show that all safe UCQs are in $\ddptime$.
\end{openpb}

\section{Conclusion}
  \label{sec:conclusion}
  To the best of our knowledge, we provide the first result formally proving that the inclusion--exclusion principle can be simulated using
decomposability and determinism only.
We see this as yet another indication that knowledge compilation is an effective way to treat query answering on probabilistic databases~\cite{jha2013knowledge,amarilli2015provenance,amarilli2019connecting}.
Although this new technique applies only to a restricted class of UCQs, the queries considered here seem to already contain the core difficulty of the intensional--extensional conjecture.
We think that solving this problem for all UCQs will require solving it for UCQs with negations (more precisely, for Boolean combinations of CQs).
This is reminiscent of the algorithm of~\cite{dalvi2012dichotomy} for UCQs, which, even when applied to a CQ, can introduce UCQs in the computation.

\begin{acks}
	I acknowledge Dan Olteanu for our initial work on the problem~\cite{monet2018towards}.
	In particular, this joint work already contains the idea of covering the satisfying valuations of~$\varphi$ with mutually exclusive terms,
	which we reuse intensively here (see the paragraph “The difference with~\cite{monet2018towards}” in the introduction for
	more details).
	I also acknowledge Guy Van den Broeck, Paul Beame and Vincent Liew, who had already worked on this problem and attempted similar approaches. Guy Van den Broeck and Dan Olteanu were the first to notice that one could use negation to attack the problem. I note here that Paul Beame and his student Vincent Liew independently found the query from Figure~\ref{open:perfect-matchings} (which was also a counterexample to one of their approaches).
	I am grateful to Pierre Senellart for careful proofreading of the paper.
	I thank the Millennium Institute for Foundational Research on Data (IMFD) for funding this research.
\end{acks}

\begin{toappendix}
	\section{Overall Structure}
	\label{apx:proof-structure}
	\begin{figure}[h]
\def\svgwidth{\linewidth}
	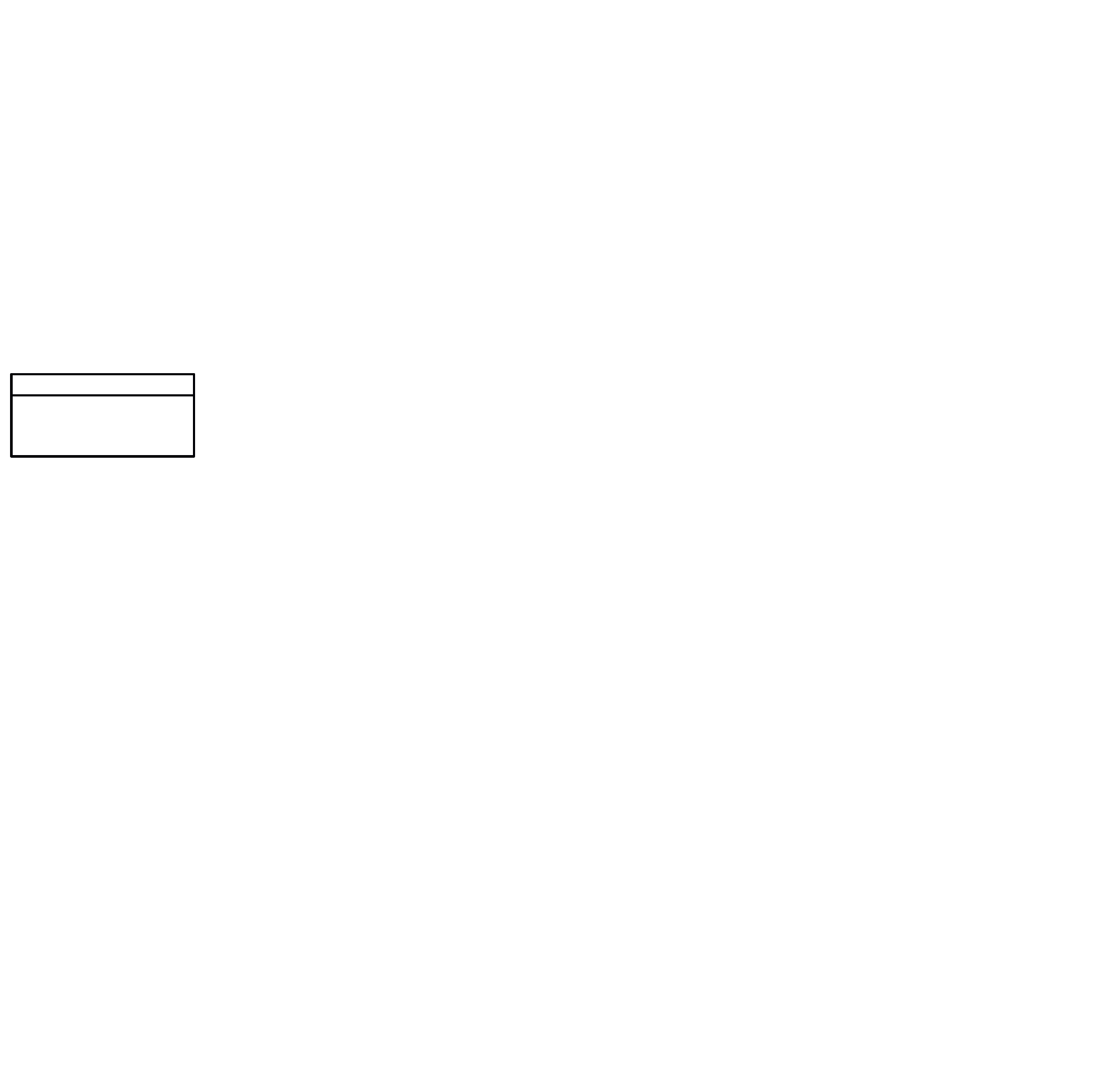 
	\caption{
		DAG representing the general structure of the proofs. Dashed regions indicate the section in which a result first appears.
		Remember that $k \in \mathbb{N}_{\geq 1}$ and the set of variables $V= \{0,\ldots,k\}$ are fixed.}
	\label{fig:proof-structure}
\end{figure}
\end{toappendix}

%%
%% The next two lines define the bibliography style to be used, and
%% the bibliography file.
\bibliographystyle{ACM-Reference-Format}
\bibliography{main}

%%% -*-BibTeX-*-
%%% Do NOT edit. File created by BibTeX with style
%%% ACM-Reference-Format-Journals [18-Jan-2012].

\begin{thebibliography}{39}

%%% ====================================================================
%%% NOTE TO THE USER: you can override these defaults by providing
%%% customized versions of any of these macros before the \bibliography
%%% command.  Each of them MUST provide its own final punctuation,
%%% except for \shownote{}, \showDOI{}, and \showURL{}.  The latter two
%%% do not use final punctuation, in order to avoid confusing it with
%%% the Web address.
%%%
%%% To suppress output of a particular field, define its macro to expand
%%% to an empty string, or better, \unskip, like this:
%%%
%%% \newcommand{\showDOI}[1]{\unskip}   % LaTeX syntax
%%%
%%% \def \showDOI #1{\unskip}           % plain TeX syntax
%%%
%%% ====================================================================

\ifx \showCODEN    \undefined \def \showCODEN     #1{\unskip}     \fi
\ifx \showDOI      \undefined \def \showDOI       #1{#1}\fi
\ifx \showISBNx    \undefined \def \showISBNx     #1{\unskip}     \fi
\ifx \showISBNxiii \undefined \def \showISBNxiii  #1{\unskip}     \fi
\ifx \showISSN     \undefined \def \showISSN      #1{\unskip}     \fi
\ifx \showLCCN     \undefined \def \showLCCN      #1{\unskip}     \fi
\ifx \shownote     \undefined \def \shownote      #1{#1}          \fi
\ifx \showarticletitle \undefined \def \showarticletitle #1{#1}   \fi
\ifx \showURL      \undefined \def \showURL       {\relax}        \fi
% The following commands are used for tagged output and should be
% invisible to TeX
\providecommand\bibfield[2]{#2}
\providecommand\bibinfo[2]{#2}
\providecommand\natexlab[1]{#1}
\providecommand\showeprint[2][]{arXiv:#2}

\bibitem[\protect\citeauthoryear{Akers}{Akers}{1978}]%
        {akers1978binary}
\bibfield{author}{\bibinfo{person}{Sheldon~B. Akers}.}
  \bibinfo{year}{1978}\natexlab{}.
\newblock
  \showarticletitle{\href{https://dl.acm.org/citation.cfm?id=1310815}{Binary
  decision diagrams}}.
\newblock \bibinfo{journal}{\emph{IEEE Transactions on computers}}
  \bibinfo{number}{6} (\bibinfo{year}{1978}), \bibinfo{pages}{509--516}.
\newblock


\bibitem[\protect\citeauthoryear{Amarilli, Bourhis, Jachiet, and
  Mengel}{Amarilli et~al\mbox{.}}{2017}]%
        {amarilli2017circuit}
\bibfield{author}{\bibinfo{person}{Antoine Amarilli}, \bibinfo{person}{Pierre
  Bourhis}, \bibinfo{person}{Louis Jachiet}, {and} \bibinfo{person}{Stefan
  Mengel}.} \bibinfo{year}{2017}\natexlab{}.
\newblock
  \showarticletitle{\href{https://arxiv.org/pdf/1702.05589}{A~circuit-based
  approach to efficient enumeration}}. In \bibinfo{booktitle}{\emph{ICALP}}.
\newblock


\bibitem[\protect\citeauthoryear{Amarilli, Bourhis, and Senellart}{Amarilli
  et~al\mbox{.}}{2015}]%
        {amarilli2015provenance}
\bibfield{author}{\bibinfo{person}{Antoine Amarilli}, \bibinfo{person}{Pierre
  Bourhis}, {and} \bibinfo{person}{Pierre Senellart}.}
  \bibinfo{year}{2015}\natexlab{}.
\newblock
  \showarticletitle{\href{https://arxiv.org/pdf/1511.08723.pdf}{Provenance
  circuits for trees and treelike instances}}. In
  \bibinfo{booktitle}{\emph{International Colloquium on Automata, Languages,
  and Programming}}. Springer, \bibinfo{pages}{56--68}.
\newblock


\bibitem[\protect\citeauthoryear{Amarilli, Capelli, Monet, and
  Senellart}{Amarilli et~al\mbox{.}}{2019}]%
        {amarilli2019connecting}
\bibfield{author}{\bibinfo{person}{Antoine Amarilli}, \bibinfo{person}{Florent
  Capelli}, \bibinfo{person}{Mika{\"e}l Monet}, {and} \bibinfo{person}{Pierre
  Senellart}.} \bibinfo{year}{2019}\natexlab{}.
\newblock
  \showarticletitle{\href{https://arxiv.org/pdf/1811.02944.pdf}{Connecting
  knowledge compilation classes and width parameters}}.
\newblock \bibinfo{journal}{\emph{Theory of Computing Systems}}
  (\bibinfo{date}{Jun} \bibinfo{year}{2019}).
\newblock


\bibitem[\protect\citeauthoryear{Audemard and Simon}{Audemard and
  Simon}{2009}]%
        {audemard2009predicting}
\bibfield{author}{\bibinfo{person}{Gilles Audemard} {and}
  \bibinfo{person}{Laurent Simon}.} \bibinfo{year}{2009}\natexlab{}.
\newblock
  \showarticletitle{\href{http://ijcai.org/Proceedings/09/Papers/074.pdf}{Predicting
  learnt clauses quality in modern {SAT} solvers}}. In
  \bibinfo{booktitle}{\emph{{IJCAI}}}.
\newblock


\bibitem[\protect\citeauthoryear{Beame, Li, Roy, and Suciu}{Beame
  et~al\mbox{.}}{2017}]%
        {beame2017exact}
\bibfield{author}{\bibinfo{person}{Paul Beame}, \bibinfo{person}{Jerry Li},
  \bibinfo{person}{Sudeepa Roy}, {and} \bibinfo{person}{Dan Suciu}.}
  \bibinfo{year}{2017}\natexlab{}.
\newblock
  \showarticletitle{\href{https://dl.acm.org/citation.cfm?id=2984632}{Exact
  model counting of query expressions: limitations of propositional methods}}.
\newblock \bibinfo{journal}{\emph{ACM Transactions on Database Systems}}
  \bibinfo{volume}{42}, \bibinfo{number}{1} (\bibinfo{year}{2017}),
  \bibinfo{pages}{1}.
\newblock


\bibitem[\protect\citeauthoryear{Bj{\"o}rner and Kalai}{Bj{\"o}rner and
  Kalai}{1988}]%
        {bjorner1988extended}
\bibfield{author}{\bibinfo{person}{Anders Bj{\"o}rner} {and}
  \bibinfo{person}{Gil Kalai}.} \bibinfo{year}{1988}\natexlab{}.
\newblock
  \showarticletitle{\href{https://projecteuclid.org/download/pdf_1/euclid.acta/1485890544}{An
  extended Euler-Poincar{\'e} theorem}}.
\newblock \bibinfo{journal}{\emph{Acta Mathematica}} \bibinfo{volume}{161},
  \bibinfo{number}{1} (\bibinfo{year}{1988}), \bibinfo{pages}{279--303}.
\newblock


\bibitem[\protect\citeauthoryear{Bova, Capelli, Mengel, and Slivovsky}{Bova
  et~al\mbox{.}}{2016}]%
        {bova2016knowledge}
\bibfield{author}{\bibinfo{person}{Simone Bova}, \bibinfo{person}{Florent
  Capelli}, \bibinfo{person}{Stefan Mengel}, {and} \bibinfo{person}{Friedrich
  Slivovsky}.} \bibinfo{year}{2016}\natexlab{}.
\newblock
  \showarticletitle{\href{https://www.ijcai.org/Proceedings/16/Papers/147.pdf}{Knowledge
  compilation meets communication complexity}}. In
  \bibinfo{booktitle}{\emph{IJCAI}}, Vol.~\bibinfo{volume}{16}.
  \bibinfo{pages}{1008--1014}.
\newblock


\bibitem[\protect\citeauthoryear{Bova and Szeider}{Bova and Szeider}{2017}]%
        {bova2017circuit}
\bibfield{author}{\bibinfo{person}{Simone Bova} {and} \bibinfo{person}{Stefan
  Szeider}.} \bibinfo{year}{2017}\natexlab{}.
\newblock \showarticletitle{\href{https://arxiv.org/abs/1701.04626}{Circuit
  treewidth, sentential decision, and query compilation}}. In
  \bibinfo{booktitle}{\emph{PODS}}. ACM, \bibinfo{pages}{233--246}.
\newblock


\bibitem[\protect\citeauthoryear{{Bryant}}{{Bryant}}{1986}]%
        {bryant1986graph}
\bibfield{author}{\bibinfo{person}{{Bryant}}.} \bibinfo{year}{1986}\natexlab{}.
\newblock
  \showarticletitle{\href{https://www.cs.cmu.edu/~bryant/pubdir/ieeetc86.pdf}{Graph-based
  algorithms for Boolean function manipulation}}.
\newblock \bibinfo{journal}{\emph{IEEE Trans. Comput.}} \bibinfo{volume}{C-35},
  \bibinfo{number}{8} (\bibinfo{year}{1986}), \bibinfo{pages}{677--691}.
\newblock


\bibitem[\protect\citeauthoryear{Buneman, Khanna, and Wang-Chiew}{Buneman
  et~al\mbox{.}}{2001}]%
        {buneman2001why}
\bibfield{author}{\bibinfo{person}{Peter Buneman}, \bibinfo{person}{Sanjeev
  Khanna}, {and} \bibinfo{person}{Tan Wang-Chiew}.}
  \bibinfo{year}{2001}\natexlab{}.
\newblock
  \showarticletitle{\href{https://web.cs.ucdavis.edu/~green/courses/ecs289f-w10/iidb06.pdf}{Why
  and where: a characterization of data provenance}}. In
  \bibinfo{booktitle}{\emph{ICDT}}. Springer, \bibinfo{pages}{316--330}.
\newblock


\bibitem[\protect\citeauthoryear{Dalvi and Suciu}{Dalvi and Suciu}{2012}]%
        {dalvi2012dichotomy}
\bibfield{author}{\bibinfo{person}{Nilesh~N. Dalvi} {and} \bibinfo{person}{Dan
  Suciu}.} \bibinfo{year}{2012}\natexlab{}.
\newblock
  \showarticletitle{\href{https://homes.cs.washington.edu/~suciu/jacm-dichotomy.pdf}{The
  dichotomy of probabilistic inference for unions of conjunctive queries}}.
\newblock \bibinfo{journal}{\emph{J. ACM}} \bibinfo{volume}{59},
  \bibinfo{number}{6} (\bibinfo{year}{2012}), \bibinfo{pages}{30}.
\newblock


\bibitem[\protect\citeauthoryear{Darwiche}{Darwiche}{2001}]%
        {darwiche2001tractability}
\bibfield{author}{\bibinfo{person}{Adnan Darwiche}.}
  \bibinfo{year}{2001}\natexlab{}.
\newblock \showarticletitle{\href{https://arxiv.org/abs/cs/0003044}{On the
  tractable counting of theory models and its application to truth maintenance
  and belief revision}}.
\newblock \bibinfo{journal}{\emph{J. Applied Non-Classical Logics}}
  \bibinfo{volume}{11}, \bibinfo{number}{1-2} (\bibinfo{year}{2001}).
\newblock


\bibitem[\protect\citeauthoryear{Darwiche}{Darwiche}{2009}]%
        {darwiche2009modeling}
\bibfield{author}{\bibinfo{person}{Adnan Darwiche}.}
  \bibinfo{year}{2009}\natexlab{}.
\newblock
  \bibinfo{booktitle}{\emph{\href{https://www.cambridge.org/core/books/modeling-and-reasoning-with-bayesian-networks/8A3769B81540EA93B525C4C2700C9DE6}{Modeling
  and reasoning with Bayesian networks}}}.
\newblock \bibinfo{publisher}{Cambridge University Press}.
\newblock


\bibitem[\protect\citeauthoryear{Darwiche and Marquis}{Darwiche and
  Marquis}{2002}]%
        {darwiche2002knowledge}
\bibfield{author}{\bibinfo{person}{Adnan Darwiche} {and}
  \bibinfo{person}{Pierre Marquis}.} \bibinfo{year}{2002}\natexlab{}.
\newblock \showarticletitle{\href{https://arxiv.org/pdf/1106.1819.pdf}{A
  knowledge compilation map}}.
\newblock \bibinfo{journal}{\emph{JAIR}}  \bibinfo{volume}{17}
  (\bibinfo{year}{2002}), \bibinfo{pages}{229--264}.
\newblock


\bibitem[\protect\citeauthoryear{Fink and Olteanu}{Fink and Olteanu}{2016}]%
        {fink2016dichotomies}
\bibfield{author}{\bibinfo{person}{Robert Fink} {and} \bibinfo{person}{Dan
  Olteanu}.} \bibinfo{year}{2016}\natexlab{}.
\newblock
  \showarticletitle{\href{https://www.cs.ox.ac.uk/people/dan.olteanu/papers/fo-tods16.pdf}{Dichotomies
  for queries with negation in probabilistic databases}}.
\newblock \bibinfo{journal}{\emph{ACM Transactions on Database Systems (TODS)}}
  \bibinfo{volume}{41}, \bibinfo{number}{1} (\bibinfo{year}{2016}),
  \bibinfo{pages}{4}.
\newblock


\bibitem[\protect\citeauthoryear{Fuhr and R{\"o}lleke}{Fuhr and
  R{\"o}lleke}{1997}]%
        {fuhr1997probabilistic}
\bibfield{author}{\bibinfo{person}{Norbert Fuhr} {and} \bibinfo{person}{Thomas
  R{\"o}lleke}.} \bibinfo{year}{1997}\natexlab{}.
\newblock
  \showarticletitle{\href{https://courses.cs.cornell.edu/cs732/2003sp/Papers/Fuhr1997.pdf}{A
  probabilistic relational algebra for the integration of information retrieval
  and database systems}}.
\newblock \bibinfo{journal}{\emph{ACM Transactions on Information Systems
  (TOIS)}} \bibinfo{volume}{15}, \bibinfo{number}{1} (\bibinfo{year}{1997}),
  \bibinfo{pages}{32--66}.
\newblock


\bibitem[\protect\citeauthoryear{Green and Tannen}{Green and Tannen}{2006}]%
        {green2006models}
\bibfield{author}{\bibinfo{person}{Todd~J Green} {and} \bibinfo{person}{Val
  Tannen}.} \bibinfo{year}{2006}\natexlab{}.
\newblock
  \showarticletitle{\href{https://web.cs.ucdavis.edu/~green/courses/ecs289f-w10/iidb06.pdf}{Models
  for incomplete and probabilistic information}}. In
  \bibinfo{booktitle}{\emph{International Conference on Extending Database
  Technology}}. Springer, \bibinfo{pages}{278--296}.
\newblock


\bibitem[\protect\citeauthoryear{Gribkoff, Suciu, and Van~den Broeck}{Gribkoff
  et~al\mbox{.}}{2014}]%
        {gribkoff2014lifted}
\bibfield{author}{\bibinfo{person}{Eric Gribkoff}, \bibinfo{person}{Dan Suciu},
  {and} \bibinfo{person}{Guy Van~den Broeck}.} \bibinfo{year}{2014}\natexlab{}.
\newblock
  \showarticletitle{\href{http://web.cs.ucla.edu/~guyvdb/publications/browse.php?bib=guyvdb.bib&key=GribkoffDEBUL14}{Lifted
  probabilistic inference: a guide for the database researcher.}}
\newblock \bibinfo{journal}{\emph{IEEE Data Eng. Bull.}} \bibinfo{volume}{37},
  \bibinfo{number}{3} (\bibinfo{year}{2014}), \bibinfo{pages}{6--17}.
\newblock


\bibitem[\protect\citeauthoryear{Gurvich}{Gurvich}{1977}]%
        {gurvich1977repetition}
\bibfield{author}{\bibinfo{person}{Vladimir~Alexander Gurvich}.}
  \bibinfo{year}{1977}\natexlab{}.
\newblock
  \showarticletitle{\href{http://www.mathnet.ru/links/5e9c2c452b73e3545c38707329c00429/rm3055.pdf}{Repetition-free
  Boolean functions}}.
\newblock \bibinfo{journal}{\emph{Uspekhi Matematicheskikh Nauk}}
  \bibinfo{volume}{32}, \bibinfo{number}{1} (\bibinfo{year}{1977}),
  \bibinfo{pages}{183--184}.
\newblock


\bibitem[\protect\citeauthoryear{Huang and Darwiche}{Huang and
  Darwiche}{2005}]%
        {huang2005dpll}
\bibfield{author}{\bibinfo{person}{Jinbo Huang} {and} \bibinfo{person}{Adnan
  Darwiche}.} \bibinfo{year}{2005}\natexlab{}.
\newblock
  \showarticletitle{\href{https://www.ijcai.org/Proceedings/05/Papers/0876.pdf}{DPLL
  with a trace: From SAT to knowledge compilation}}. In
  \bibinfo{booktitle}{\emph{IJCAI}}, Vol.~\bibinfo{volume}{5}.
  \bibinfo{pages}{156--162}.
\newblock


\bibitem[\protect\citeauthoryear{Huang and Darwiche}{Huang and
  Darwiche}{2007}]%
        {huang2007language}
\bibfield{author}{\bibinfo{person}{Jinbo Huang} {and} \bibinfo{person}{Adnan
  Darwiche}.} \bibinfo{year}{2007}\natexlab{}.
\newblock
  \showarticletitle{\href{http://dl.acm.org/citation.cfm?id=1622606.1622613}{The
  language of search}}.
\newblock \bibinfo{journal}{\emph{J. Artif. Int. Res.}} \bibinfo{volume}{29},
  \bibinfo{number}{1} (\bibinfo{year}{2007}).
\newblock
\showISSN{1076-9757}


\bibitem[\protect\citeauthoryear{Jha and Suciu}{Jha and Suciu}{2012}]%
        {jha2012tractability}
\bibfield{author}{\bibinfo{person}{Abhay Jha} {and} \bibinfo{person}{Dan
  Suciu}.} \bibinfo{year}{2012}\natexlab{}.
\newblock
  \showarticletitle{\href{https://homes.cs.washington.edu/~suciu/file37_paper.pdf}{On
  the tractability of query compilation and bounded treewidth}}. In
  \bibinfo{booktitle}{\emph{ICDT}}. ACM, \bibinfo{pages}{249--261}.
\newblock


\bibitem[\protect\citeauthoryear{Jha and Suciu}{Jha and Suciu}{2013}]%
        {jha2013knowledge}
\bibfield{author}{\bibinfo{person}{Abhay Jha} {and} \bibinfo{person}{Dan
  Suciu}.} \bibinfo{year}{2013}\natexlab{}.
\newblock
  \showarticletitle{\href{https://link.springer.com/article/10.1007/s00224-012-9392-5}{Knowledge
  compilation meets database theory: compiling queries to decision diagrams}}.
\newblock \bibinfo{journal}{\emph{Theory of Computing Systems}}
  \bibinfo{volume}{52}, \bibinfo{number}{3} (\bibinfo{year}{2013}),
  \bibinfo{pages}{403--440}.
\newblock


\bibitem[\protect\citeauthoryear{Monet}{Monet}{2019}]%
        {cstheory_perfect_matchings}
\bibfield{author}{\bibinfo{person}{Mikaël Monet}.}
  \bibinfo{year}{2019}\natexlab{}.
\newblock
  \bibinfo{title}{\href{https://cstheory.stackexchange.com/q/42626/38111}{Perfect
  matching of monotone Boolean function with null Euler characteristic}}.
\newblock
\newblock
\urldef\tempurl%
\url{https://cstheory.stackexchange.com/q/42626/38111}
\showURL{%
\tempurl}


\bibitem[\protect\citeauthoryear{Monet and Olteanu}{Monet and Olteanu}{2018}]%
        {monet2018towards}
\bibfield{author}{\bibinfo{person}{Mika{\"{e}}l Monet} {and}
  \bibinfo{person}{Dan Olteanu}.} \bibinfo{year}{2018}\natexlab{}.
\newblock
  \showarticletitle{\href{http://mikael-monet.net/publications/monet2018towards.pdf}{Towards
  deterministic decomposable circuits for safe queries}}. In
  \bibinfo{booktitle}{\emph{AMW}}.
\newblock


\bibitem[\protect\citeauthoryear{Nisan and Szegedy}{Nisan and Szegedy}{1994}]%
        {nisan1994degree}
\bibfield{author}{\bibinfo{person}{Noam Nisan} {and} \bibinfo{person}{Mario
  Szegedy}.} \bibinfo{year}{1994}\natexlab{}.
\newblock
  \showarticletitle{\href{https://link.springer.com/article/10.1007/BF01263419}{On
  the degree of Boolean functions as real polynomials}}.
\newblock \bibinfo{journal}{\emph{Computational complexity}}
  \bibinfo{volume}{4}, \bibinfo{number}{4} (\bibinfo{year}{1994}),
  \bibinfo{pages}{301--313}.
\newblock


\bibitem[\protect\citeauthoryear{Olteanu and Huang}{Olteanu and Huang}{2008}]%
        {olteanu2008using}
\bibfield{author}{\bibinfo{person}{Dan Olteanu} {and} \bibinfo{person}{Jiewen
  Huang}.} \bibinfo{year}{2008}\natexlab{}.
\newblock
  \showarticletitle{\href{https://www.cs.ox.ac.uk/people/dan.olteanu/papers/oh-sum08.pdf}{Using
  OBDDs for efficient query evaluation on probabilistic databases}}. In
  \bibinfo{booktitle}{\emph{SUM}}. Springer, \bibinfo{pages}{326--340}.
\newblock


\bibitem[\protect\citeauthoryear{Pipatsrisawat and Darwiche}{Pipatsrisawat and
  Darwiche}{2008}]%
        {pipatsrisawat2008new}
\bibfield{author}{\bibinfo{person}{Knot Pipatsrisawat} {and}
  \bibinfo{person}{Adnan Darwiche}.} \bibinfo{year}{2008}\natexlab{}.
\newblock
  \showarticletitle{\href{https://www.aaai.org/Papers/AAAI/2008/AAAI08-082.pdf}{New
  compilation languages based on structured decomposability}}. In
  \bibinfo{booktitle}{\emph{{AAAI}}}. AAAI Press, \bibinfo{pages}{517--522}.
\newblock


\bibitem[\protect\citeauthoryear{Poole}{Poole}{2003}]%
        {poole2003first}
\bibfield{author}{\bibinfo{person}{David Poole}.}
  \bibinfo{year}{2003}\natexlab{}.
\newblock
  \showarticletitle{\href{https://www.ijcai.org/Proceedings/03/Papers/142.pdf}{First-order
  probabilistic inference}}. In \bibinfo{booktitle}{\emph{IJCAI}},
  Vol.~\bibinfo{volume}{3}. \bibinfo{pages}{985--991}.
\newblock


\bibitem[\protect\citeauthoryear{RDBury}{RDBury}{2018}]%
        {cstheory_unicity_cnfs}
\bibfield{author}{\bibinfo{person}{RDBury}.} \bibinfo{year}{2018}\natexlab{}.
\newblock
  \bibinfo{title}{\href{https://math.stackexchange.com/a/3039814/378365}{Uniqueness
  of minimal positive CNFs}}.
\newblock
\newblock
\urldef\tempurl%
\url{https://math.stackexchange.com/a/3039814/378365}
\showURL{%
\tempurl}


\bibitem[\protect\citeauthoryear{Roune and S{\'a}enz-de Cabez{\'o}n}{Roune and
  S{\'a}enz-de Cabez{\'o}n}{2013}]%
        {roune2013complexity}
\bibfield{author}{\bibinfo{person}{Bjarke~Hammersholt Roune} {and}
  \bibinfo{person}{Eduardo S{\'a}enz-de Cabez{\'o}n}.}
  \bibinfo{year}{2013}\natexlab{}.
\newblock
  \showarticletitle{\href{https://arxiv.org/pdf/1112.4523.pdf}{Complexity and
  algorithms for Euler characteristic of simplicial complexes}}.
\newblock \bibinfo{journal}{\emph{Journal of Symbolic Computation}}
  \bibinfo{volume}{50} (\bibinfo{year}{2013}), \bibinfo{pages}{170--196}.
\newblock


\bibitem[\protect\citeauthoryear{Senellart, Jachiet, Maniu, and
  Ramusat}{Senellart et~al\mbox{.}}{2018}]%
        {senellart2018provsql}
\bibfield{author}{\bibinfo{person}{Pierre Senellart}, \bibinfo{person}{Louis
  Jachiet}, \bibinfo{person}{Silviu Maniu}, {and} \bibinfo{person}{Yann
  Ramusat}.} \bibinfo{year}{2018}\natexlab{}.
\newblock
  \showarticletitle{\href{http://pierre.senellart.com/publications/senellart2018provsql.pdf}{ProvSQL:
  Provenance and probability management in PostgreSQL}}.
\newblock \bibinfo{journal}{\emph{Proceedings of the VLDB Endowment}}
  \bibinfo{volume}{11}, \bibinfo{number}{12} (\bibinfo{year}{2018}),
  \bibinfo{pages}{2034--2037}.
\newblock


\bibitem[\protect\citeauthoryear{Sharma, Gupta, Roy, and Meel}{Sharma
  et~al\mbox{.}}{2018}]%
        {sharma2018knowledge}
\bibfield{author}{\bibinfo{person}{Shubham Sharma}, \bibinfo{person}{Rahul
  Gupta}, \bibinfo{person}{Subhajit Roy}, {and} \bibinfo{person}{Kuldeep~S
  Meel}.} \bibinfo{year}{2018}\natexlab{}.
\newblock
  \showarticletitle{\href{https://www.comp.nus.edu.sg/~meel/Papers/lpar18.pdf}{Knowledge
  compilation meets uniform sampling}}.
\newblock \bibinfo{journal}{\emph{EPiC Series in Computing}}
  \bibinfo{volume}{57} (\bibinfo{year}{2018}), \bibinfo{pages}{620--636}.
\newblock


\bibitem[\protect\citeauthoryear{Stanley}{Stanley}{2011}]%
        {stanley2011enumerative}
\bibfield{author}{\bibinfo{person}{Richard~P. Stanley}.}
  \bibinfo{year}{2011}\natexlab{}.
\newblock
  \bibinfo{booktitle}{\emph{\href{http://www-math.mit.edu/~rstan/ec/ec1.pdf}{Enumerative
  Combinatorics: Volume 1}} (\bibinfo{edition}{2nd} ed.)}.
\newblock
\showISBNx{1107602629, 9781107602625}


\bibitem[\protect\citeauthoryear{Suciu, Olteanu, R{\'e}, and Koch}{Suciu
  et~al\mbox{.}}{2011}]%
        {suciu2011probabilistic}
\bibfield{author}{\bibinfo{person}{Dan Suciu}, \bibinfo{person}{Dan Olteanu},
  \bibinfo{person}{Christopher R{\'e}}, {and} \bibinfo{person}{Christoph
  Koch}.} \bibinfo{year}{2011}\natexlab{}.
\newblock
  \bibinfo{booktitle}{\emph{\href{https://www.morganclaypool.com/doi/abs/10.2200/S00362ED1V01Y201105DTM016}{Probabilistic
  Databases}}}.
\newblock \bibinfo{publisher}{Morgan \& Claypool}.
\newblock


\bibitem[\protect\citeauthoryear{Vardi}{Vardi}{1982}]%
        {vardi1982complexity}
\bibfield{author}{\bibinfo{person}{Moshe~Y Vardi}.}
  \bibinfo{year}{1982}\natexlab{}.
\newblock
  \showarticletitle{\href{http://www.dis.uniroma1.it/~degiacom/didattica/semingsoft/SIS05-06/materiale/1-query-congiuntive/riferimenti/vardi-1982.pdf}{The
  complexity of relational query languages}}. In
  \bibinfo{booktitle}{\emph{Proceedings of the fourteenth annual ACM symposium
  on Theory of computing}}. ACM, \bibinfo{pages}{137--146}.
\newblock


\bibitem[\protect\citeauthoryear{Wegener}{Wegener}{2004}]%
        {wegener2004bdds}
\bibfield{author}{\bibinfo{person}{Ingo Wegener}.}
  \bibinfo{year}{2004}\natexlab{}.
\newblock
  \showarticletitle{\href{https://www.sciencedirect.com/science/article/pii/S0166218X0300297X}{BDDs—design,
  analysis, complexity, and applications}}.
\newblock \bibinfo{journal}{\emph{Discrete Applied Mathematics}}
  \bibinfo{volume}{138}, \bibinfo{number}{1-2} (\bibinfo{year}{2004}),
  \bibinfo{pages}{229--251}.
\newblock


\bibitem[\protect\citeauthoryear{Zim{\'a}nyi}{Zim{\'a}nyi}{1997}]%
        {zimanyi1997query}
\bibfield{author}{\bibinfo{person}{Esteban Zim{\'a}nyi}.}
  \bibinfo{year}{1997}\natexlab{}.
\newblock
  \showarticletitle{\href{https://www.sciencedirect.com/science/article/pii/S0304397596001296}{Query
  evaluation in probabilistic relational databases}}.
\newblock \bibinfo{journal}{\emph{Theoretical Computer Science}}
  \bibinfo{volume}{171}, \bibinfo{number}{1-2} (\bibinfo{year}{1997}),
  \bibinfo{pages}{179--219}.
\newblock


\end{thebibliography}

%%
%% If your work has an appendix, this is the place to put it.
\appendix

\end{document}